 \newtheorem{theorem}{Theorem}
  \newtheorem{definition}{Definition}
\newcommand{\E}[0]{\mathop{{}\mathbb{E}}}
\begin{document}
\title{Hilbert-Space Ergodicity in  Driven Quantum Systems: \\Obstructions and Designs }
\author{Sa\'ul Pilatowsky-Cameo}
\affiliation{Center for Theoretical Physics, Massachusetts Institute of Technology, Cambridge, Massachusetts 02139, USA}
\author{Iman Marvian}
\affiliation{Departments of Physics \& Electrical and Computer Engineering, Duke University, Durham, North Carolina 27708, USA}
\author{Soonwon Choi}
\email{soonwon@mit.edu}
\affiliation{Center for Theoretical Physics, Massachusetts Institute of Technology, Cambridge, Massachusetts 02139, USA}
\author{Wen Wei Ho}
\email{wenweiho@nus.edu.sg}
\affiliation{Department of Physics, National University of Singapore, Singapore 117551}
\affiliation{Centre for Quantum Technologies, National University of Singapore, 3 Science Drive 2, Singapore 117543}

\preprint{MIT-CTP/5675}

\begin{abstract}
Despite its long history, a canonical formulation of quantum ergodicity that applies to  general classes of quantum dynamics, including driven systems, has  not been fully established.
Here we introduce and study a notion of quantum ergodicity for closed systems with time-dependent Hamiltonians,  defined as statistical randomness  exhibited in their longtime dynamics.
Concretely, we consider the temporal ensemble of quantum states (time-evolution operators) generated by the evolution, and investigate the conditions necessary for them to be statistically indistinguishable from uniformly random states (operators) in the Hilbert space (space of unitaries).
We find that the number of driving frequencies  underlying the Hamiltonian  
needs to be sufficiently large for this to occur. 
Conversely,  we show  that  statistical {\it pseudo}-randomness ---  indistinguishability up to some large but finite moment, 
can already be achieved by a quantum system driven with a single frequency, i.e., a Floquet system, as long as the driving period is sufficiently long. Our work relates the complexity of a time-dependent Hamiltonian and that of the resulting quantum dynamics, and  offers a fresh perspective to the established topics of quantum ergodicity and chaos from the lens of quantum information.
\end{abstract}

\maketitle

\section{Introduction}
Ergodicity in classical  systems is a well established, unambiguous concept: It is the property of dynamics exploring all allowed configurations, irrespective of initial state. Quantum ergodicity, on the other hand, is formulated rather differently, and typically in an inherently nondynamical fashion \cite{Shnirelman1973,Sunada1997}: In systems with a semiclassical limit, it is  taken to be the feature of high-energy  eigenstates having probability densities weakly tending to a uniform distribution in phase space \cite{Berry1977}. This definition though, does not cover all quantum systems, as there are many Hamiltonians without an obvious semiclassical limit, e.g., systems of interacting qubits. Instead, an appeal is often made to statistical similarities of the distribution of energy levels and associated energy eigenstates to those of certain random matrix classes, such as in the eigenstate thermalization hypothesis (ETH) \cite{Srednicki1994} and the Bohigas-Giannoni-Schmit conjecture \cite{Bohigas1984}. Still, such a definition is arguably also not complete, as it presupposes the existence of stationary states in dynamics --- and not all quantum systems exhibit these. These include Hamiltonians with general time dependence, or dynamics arising from (potentially spatiotemporally random) quantum circuits, a class of quantum dynamics that has been the subject of much study recently \cite{Fisher2023}. 
As can be seen, there is no unambiguous,  common notion of ergodicity that applies to all  systems in the quantum setting.  

In this work, we investigate a notion of quantum ergodicity that can be universally attributed to closed quantum dynamics with generic time dependence, which harkens back to ergodicity of classical dynamical systems: whether a quantum system explores all of its ``ambient space" over time. We consider two natural dynamical objects that can capture this behavior, both of which are always present for any closed quantum system undergoing unitary dynamics. First, we consider the temporal ensemble of quantum states $\{ |\psi(t)\rangle\}_t$ beginning from some initial state $|\psi(0)\rangle$, with the natural ambient space  being the entire Hilbert space. Second, we study the temporal ensemble of time-evolution operators $\{ U(t) \}_t$, which propagates the system from the initial time $t = 0$ to a later time $t$, with the ambient space being the manifold of unitary operators acting on the Hilbert space. Quantum ergodicity according to this viewpoint inquires if the temporal ensembles of states or unitaries  {\it uniformly} cover their respective spaces over long times. 

A previous recent work \cite{Pilatowsky2023} had already proposed  the notion of quantum states  uniformly covering the Hilbert space in time, dubbed ``complete Hilbert-space ergodicity" (CHSE), as a novel notion of quantum ergodicity. It also rigorously demonstrated a class of discrete-time driven systems, which despite their simplicity (encapsulated by a notion of having ``low complexity''), surprisingly exhibits such behavior. Here, one of our goals is to further ground this concept, by  identifying general physical principles which allow or forbid CHSE. Additionally, we extend this dynamical version of quantum ergodicity to that of  statistics of the unitary time-evolution operators themselves, a notion we dub ``complete unitary ergodicity" (CUE). CUE is a stronger dynamical version of quantum ergodicity, as it implies CHSE, but not vice versa.

We note that this  generalization of the notion of classical ergodicity to quantum dynamics --- that time averaging equals space averaging --- is ostensibly natural, but yet evidently has not been widely adopted as a standard definition of quantum ergodicity. A moment's thought reveals why this may be so: Under dynamics by a time-independent Hamiltonian, it can immediately be observed that the populations on energy eigenstates are always conserved, leading to an obstruction of coverage of the ambient Hilbert or unitary space. In other words, CHSE or CUE cannot occur for dynamics under any static Hamiltonian $H$,  rendering such dynamical notions of quantum ergodicity ineffectual. However, the key insight of our analyses, as well as those of Ref.~\cite{Pilatowsky2023}, is the realization that these obstructions need not apply in  Hamiltonians  $H(t)$ that have {\it general time dependence}. In this work, we specifically focus on the class of quantum Hamiltonians driven by multiple (rationally independent) frequencies, called quasiperiodically driven systems \cite{Ho1983,Luck1988,Casati1989,Feudel1995,Bambusi2001,Gentile2003,Chu2004,Gommers2006,Chabe2008}, and 
derive how despite potentially having ``quasienergy states," the  analog of stationary states for this class of dynamics, they can under certain conditions already achieve CHSE and/or CUE.

 Concretely, we consider here  $d$-dimensional quantum systems quasiperiodically driven by $m$ rationally independent frequencies, and assume the existence of quasienergy states in dynamics (we note this is a nontrivial assumption and it may not always hold true, see Refs.~\cite{Jauslin1991,Blekher1992}). Equivalently, these can be thought of as quantum systems driven by $m$ external classical harmonic baths with different fundamental frequencies.
 Intuitively, a larger number of drives, i.e., baths,  generates more complex dynamics. For example, one can model a quantum system driven by random white noise in the limit  $m \to \infty$. 
  We might thus expect that the ability of a system to uniformly cover its Hilbert or unitary space, depends on the number of frequencies $m$  of the underlying Hamiltonian. Indeed, in what follows we rigorize such expectation, showing how $m$ governs the possibility or impossibility of CHSE and CUE. The key tool we use is quantum information theoretic: We leverage the concept of {\it state (unitary) designs}, to precisely quantify the statistical indistinguishability of the distribution of the  temporal ensemble of states (unitaries) to the corresponding uniformly random ensemble in their respective spaces. A summary of our main results is as follows:
\begin{itemize}
    \item  Complete Hilbert-space ergodicity (CHSE) cannot be satisfied if $m < 2(d-1)$. That is, a time-quasiperiodic quantum system driven by a limited number of frequencies cannot yield dynamics in which an arbitrary state uniformly explores all of the Hilbert space over time. 
    \item  Complete unitary ergodicity (CUE) cannot be satisfied if $m < d(d-1)$. 
    This is a more restrictive statement that a time-quasiperiodic quantum system driven by too few frequencies cannot generate time-evolution operators  which are uniformly distributed in the unitary space.
    \item Conversely, we explicitly construct families of time-quasiperiodic quantum  Hamiltonians with  $m = d^2-2$  fundamental frequencies, each possessing quasienergy states, which provably exhibit CUE, and therefore CHSE.
\end{itemize}

These three statements are depicted in Fig.~\ref{fig:resultssumm}.\\

It is also possible to relax the condition of full indistinguishability of the  distributions of temporal and spatial ensembles, and demand only indistinguishability of  moments up to some finite order $k \in \mathbb{N}$. This property is called {\it statistical pseudorandomness}. We note that statistical pseudorandomness of states or unitaries has been used as a diagnostic for the presence of quantum information scrambling \cite{Roberts2017,Cotler2017}, and thus our notion of quantum ergodicity is intimately tied to (one version of) quantum chaos. 
Technically, equality of only up to $k$ moments amounts to probing whether the temporal ensemble of states (unitaries) forms a {\it state (unitary) $k$-design}. 
With this, we can also show the following.

\begin{itemize}   
    \item If we  demand a restricted level of quantum ergodicity wherein the temporal ensemble reproduces only the uniform distribution up to a finite $k$th moment, then this can be achieved already by a time-periodic (i.e., Floquet) Hamiltonian. However, the magnitude of the Hamiltonian (or equivalently the length of the Floquet period) necessarily needs to grow with $k$ and $d$ in a quantifiable fashion [Eq.~\eqref{eq:bound_on_B}]. This  captures the intuitive fact that the amount of physical resources required --- strength of the Hamiltonian for a fixed time, or driving duration for a fixed power --- needs to be large in order for a high degree of ergodicity to be achieved. 
\end{itemize}

\begin{figure}[t]
    \centering
    \includegraphics{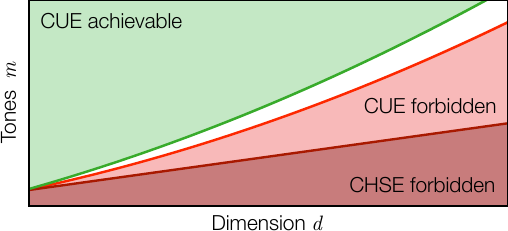}
    \caption{Achievability of complete unitary ergodicity (CUE) and complete Hilbert-space ergodicity (CHSE) in time $m$\protect\nobreakdash-quasiperiodic systems with quasienergy eigenstates. The red regions represent no-go theorems for CHSE and  CUE presented in Sec.~\ref{sec:no-go-theorems}, for $m< 2(d-1)$ (Theorems~\ref{th:01} and~\ref{th:02}) and $m< d^2-d$ (Theorem~\ref{th:03}), respectively. The green region represents an explicit construction of a $(m=d^2-2)$\protect\nobreakdash-quasiperiodic system with QEs that satisfies CUE, presented in Sec.~\ref{sec:CUEModel}.}
    \label{fig:resultssumm}
\end{figure}

Our work represents a step toward a unified understanding of quantum  ergodicity in generic time-dependent quantum systems. Our dynamical notion of ergodicity harmonizes with the notions in classical systems, and further provides a physical understanding of how thermalization arises in these systems, without reference to stationary states of dynamics.

This work is organized as follows.  We begin by introducing the relevant concepts underlying our analysis. In Sec.~\ref{sec:CHSE}, we first introduce our dynamical notion of quantum ergodicity, CHSE and CUE, defined via the tool set of quantum state and unitary designs.
In Sec.~\ref{sec:QP}, we recap  quasiperiodically driven systems and their structure in dynamics and, in particular, a generalization of the Floquet decomposition into windings of quasienergies and quasienergy eigenstates on high-dimensional tori. 
The reader knowledgeable in these topics may elect to skip this section. 
In Sec.~\ref{sec:no-go-theorems}, we present our first results: three no-go theorems establishing conditions under which CUE and CHSE are physically impossible, when the number of frequencies driving the Hamiltonian are not sufficiently large, in relation to the dimension. Section~\ref{sec:numerics} presents a numerical analysis of three toy models, in which we study the consequences of our results at the level of few-body observables. 
In Sec.~\ref{sec:CUEModel}, we demonstrate a converse to our no-go theorems: an explicit construction of a quasiperiodically driven system which satisfies CUE (and hence CHSE), with a sufficiently large number of driving frequencies.
In Sec.~\ref{sec:finiteHSEPeriodicsystems}, we consider relaxing ergodicity  to comparing finite moments. We show that Floquet systems can achieve this relaxed notion of ergodicity by providing examples in both continuous and discrete time.
Lastly, in Sec.~\ref{sec:discussion}, we close with a discussion of connections to previous works and future directions. 

Before proceeding, let us remark that dynamical notions of quantum ergodicity have recently been  discussed in other works \cite{Vikram2023, Kaneko2020,Fava2023,Shou2023,Mark2024}. By borrowing notions from classical ergodic theory, Ref.~\cite{Vikram2023} provides a definition of quantum ergodicity that requires that  certain basis vectors are cyclically transported to each other in a precise sense. Separately, Refs.~\cite{Kaneko2020,Fava2023} build connections between temporal unitary designs and the ETH. Although the conservation of energy prevents the temporal ensemble from forming an exact $k$-design, these references relax the $k$-design condition  in two distinct ways: Reference~\cite{Kaneko2020} introduces a partial unitary design, which restricts to expectation values of some observables, while Ref.~\cite{Fava2023} uses free probability to construct a notion dubbed $k$\nobreakdash-freeness.
Common to these works is the focus on time-independent systems. In contrast, the stronger dynamical version of quantum ergodicity studied in our work requires the absence of any conserved quantity, and is suited for time-dependent systems without energy conservation.  Bridging our work and these other notions of quantum ergodicity is an interesting question.

\section{Dynamical formulation of quantum ergodicity}
Consider a $d$-dimensional quantum system undergoing dynamics under a time-dependent Hamiltonian or a quantum circuit. 
An immediate question arises,  which forms the fundamental motivation behind our work: Is there a sense in which such a system can be termed {\it ergodic}?

In this section, we will introduce a concept of quantum ergodicity defined in terms of statistical similarities of  temporal ensembles of  dynamical objects  --- namely, time-evolved wavefunctions as well as time-evolution operators, to ensembles of such objects distributed unbiasedly (i.e., uniformly) in the respective spaces that they live in. 
In more pedestrian terms, this is  the familiar idea of ``time averaging equals space averaging'' in classical dynamics, applied to the quantum setting. 

\label{sec:CHSE}
\subsection{Hilbert-space ergodicity (HSE)}
We start by discussing quantum ergodicity at the level of  quantum states $\ket{\psi(t)}=U(t)\ket{\psi(0)}$ uniformly covering the Hilbert space over time, a notion first introduced already in  Ref.~\cite{Pilatowsky2023}, dubbed ``complete Hilbert-space ergodicity'' (CHSE). More precisely, since global phases are irrelevant, it was proposed to consider whether the ensemble of time-evolved density matrices $\{ \psi(t) \}_{t \geq 0}$ called the ``temporal ensemble" (if it exists \footnote{It is assumed that there is a well-defined limiting distribution as $t \to \infty$, which may not always hold in certain pathological cases.}), where $\psi(t) = |\psi(t)\rangle \langle \psi(t)|$, is statistically indistinguishable to the ensemble of states $\{ \phi \}_\text{Haar}$ called the ``spatial ensemble." The latter is defined as the set of states randomly sampled without preference to a particular direction in the projective Hilbert space $\mathbb{P}(\mathbb{C}^d)=\{\psi=\dyad{\psi} \,\colon \ket{\psi}\in \mathbb{C}^d,\,\braket{\psi}=1\}$, or in other words, the set where states $\phi$ and $V\phi V^\dagger$ occur equally likely, where $V$ is drawn from the unique, uniform Haar measure on the space of unitaries \cite{Diestel2014Haar}. 
Formally,  we have the following.
\begin{definition} [CHSE]
\label{def:CHSE}
Complete Hilbert-space ergodicity (CHSE) \cite{Pilatowsky2023} is the property of quantum dynamics wherein the temporal and spatial ensembles of quantum states are statistically indistinguishable for any initial state $\psi(0)\in \mathbb{P}(\mathbb{C}^d)$, that is, $\{ \psi(t)\}_{t \geq 0} \sim  \{ \phi \}_\text{Haar}$, where ``$\sim$'' denotes equality in distribution. 
\end{definition}

To make the comparison quantitative, we can consider finite moments of the respective distributions.  For the temporal ensemble, the $k$th moment is defined as 
\begin{align}
    \rho_{\mathrm{time}}^{(k)} : =\E_{t\geq 0} [{\psi(t)}^{\otimes k}]=\lim_{T\to \infty}\frac{1}{T}\int_{0}^T \dd{t} [U(t)\psi(0)U(t)^\dagger]^{\otimes k},
\end{align}
which involves $k$ replicas of the time-evolved state, while the  $k$th moment of the spatial ensemble $\{ \phi\}_\text{Haar}$ is defined as:
\begin{align}
    \rho^{(k)}_\text{Haar}:= \E_{\phi \in \mathbb{P}(\mathbb{C}^d) } [{\phi}^{\otimes k}]=\int \dd{U} (U\phi_0 U^\dagger)^{\otimes k}, 
\end{align} 
where $\dd{U}$ is the Haar measure on the unitary space and $\phi_0$ any fixed reference state. We note that $\rho^{(k)}_\text{Haar}$ have simple, closed-formed expressions as sums of permutation operators over the $k$-replicated Hilbert space [see Eq.~\eqref{eq:rho_Haar_symm_proj}], which can be derived using Schur’s lemma in representation theory \cite{Harrow2013}. As an example, $\rho^{(1)}_\text{Haar} = \mathds{1}/d$ is the maximally entropic state, where $\mathds{1}$ is the identity operator on a single copy of the Hilbert space, while $\rho^{(2)}_\text{Haar} = (\mathds{1} + S)/d(d+1)$, where here $\mathds{1}$ $(S)$ is the identity (swap) operator on the tensor product of two Hilbert spaces.
Using the $k$th moments $\rho^{(k)}_\text{Haar}$, we can  define a less restrictive notion of Hilbert-space ergodicity in terms of statistical indistinguishability of only up to $k$-moments:  

\begin{definition}[$k$\nobreakdash-HSE]
A closed quantum system is said to exhibit Hilbert-space $k$-ergodicity ($k$\nobreakdash-HSE), for $k\in\mathbb{N}$, if for any initial state $\psi(0)\in \mathbb{P}(\mathbb{C}^d)$,
\begin{equation}
\label{eq:kHSE_cond}
    \rho_{\mathrm{time}}^{(k)}= \rho_{\mathrm{Haar}}^{(k)}.
\end{equation}

\end{definition}
\noindent 
Any standard matrix norm can be used to ascertain this equality (captured by vanishing of the norm of $\rho^{(k)}_\text{time} - \rho^{(k)}_\text{Haar}$), but it is conventional to use the trace distance $D(\rho, \sigma) := \frac{1}{2} \| \rho - \sigma\|_1$, where $\|\cdot \|_1$ is the trace norm, given by the sum of the absolute value of the eigenvalues. This is because $ \rho_{\mathrm{time}}^{(k)}$ and $\rho_{\mathrm{Haar}}^{(k)}$ have interpretations of  density operators on the $k$-replicated Hilbert space, and the trace norm operationally  captures the probability of distinguishing these two states under an optimal measurement.

In the parlance of quantum information theory, $k$\nobreakdash-HSE is the statement that the temporal ensemble forms a {\it (state) $k$-design} (see Ref.~\cite{Mele2023} and Appendix~\ref{app:designs}). 
Note that $k$\nobreakdash-HSE implies $k'$-HSE for $k'\leq k$ but not vice versa, and thus forms a hierarchical definition of more and more restricted notions of quantum ergodicity for higher $k$ (see Corollary~\ref{cor:relationsHSEandUE} and Fig.~\ref{fig:hier}).  
CHSE, which is at the top of this hierarchy, is then recovered by demanding equality for all $k$.
\begin{definition} [CHSE; equivalent definition \footnote{This is equivalent to Definition~\ref{def:CHSE} because we are considering finite-dimensional quantum systems. Then knowledge of all moments uniquely determines a distribution (this is known as the moment problem in mathematics). This follows from the   Weierstrass approximation theorem, which states that polynomials are dense under the uniform norm in the space of continuous functions.}]
If a system exhibits $k$\nobreakdash-HSE for all $k$ for any initial state $\psi(0)\in \mathbb{P}(\mathbb{C}^d)$, then it is said to exhibit CHSE. 
\end{definition}

In terms of  physical observables, $k$\nobreakdash-HSE constrains  the behavior of  time-averaged expectation values $\tr(O^{(k)}\psi(t)^{\otimes k})$ of a joint observable $O^{(k)}$ on the $k$-replicated Hilbert space. In the case of a product observable $O^{(k)}=O^{\otimes k}$, this is the time-averaged $k$th power of $\expval{O}{\psi(t)}$. For example, $1$-HSE implies that the time average of  $O$, given by  $\E_{t\geq 0}[\langle\psi(t)|O|\psi(t)\rangle] = \lim_{T \to \infty} \frac{1}{T}\int_0^T dt \langle \psi(t)|O|\psi(t)\rangle 
$,
equals
$
\E_{\phi\in \mathbb{P}(\mathbb{C}^d)}[\expval{O}{\phi}] =\tr(O)/d$  regardless of the initial state $\psi(0)$, i.e., the system over long times reproduces expectation values within the infinite-temperature state. More generally, $k$\nobreakdash-HSE implies $\E_{t\geq 0}[\langle\psi(t)|O|\psi(t)\rangle^k] =  \lim_{T \to \infty}\frac{1}{T} \int_0^T dt \langle \psi(t)|O|\psi(t)\rangle^k $ is equal to $\E_{\phi\in \mathbb{P}(\mathbb{C}^d)}[\expval{O}{\phi}^k]=\text{tr}({O}^{\otimes k}\rho^{(k)}_{\text{Haar}})$ \footnote{This can be seen by multiplying by $O^{\otimes k}$ on both sides of Eq.~\eqref{eq:kHSE_cond} and taking the trace.}, which is independent of $\psi(0)$ and can be calculated using the closed-form expression of $\rho^{(k)}_\text{Haar}$ described above, which physically constrains not only the mean but also temporal fluctuations and beyond to mimic those computed for random states. For instance, for $k=2$, the spatial averaging yields explicitly $[\text{tr}(O^2) + \text{tr}(O)^2]/d(d+1)$.  CHSE is   the strongest statement that the time average of {\it any} (analytic) function $f$ is equal to its spatial average, i.e., $\E_{t\geq 0}[f({\psi(t)})]=\E_{\phi\in\mathbb{P}(\mathbb{C}^d)}[f({\phi})]$. This is the consequent of Birkhoff's ergodic theorem \cite{Cornfeld1982}, applied to a quantum system.

\begin{figure}
    \centering
    \includegraphics[width=0.9\columnwidth]{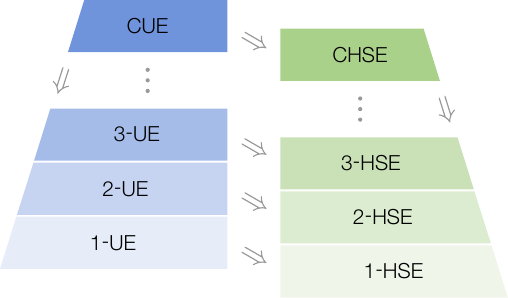}
    \caption{Dynamical notions of quantum ergodicity and their relations. Unitary $k$-ergodicity ($k$-UE) (left) and Hilbert-space $k$-ergodicity ($k$-HSE) (right), with complete unitary ergodicity (CUE) and complete Hilbert-space ergodicity (CHSE) on top, respectively.  Arrows indicate logical implication.}
    \label{fig:hier}
\end{figure}

\subsection{Unitary ergodicity (UE)}
We propose in this work to also consider a different notion of dynamical quantum ergodicity, captured by the equivalence of statistics of the ensemble of time-evolution operators $\{ U(t)\}_{t \geq 0}$ to the uniform ensemble of operators in the space of unitaries. The evolution operators are given by the time-ordered exponentials  $U(t)=\mathcal{T}\text{exp}(-i\int_0^t \dd{\tau} H(\tau))$ which propagate the system from time $0$ to time $t$. More precisely, it is natural to consider the set of unitary
{\it quantum channels} $\{ \mathcal{U}(t) \}_{t \geq 0}$, defined by $\mathcal{U}(t)[\psi]= U(t)\psi U(t)^\dagger$ (we consider channels as opposed to the unitary time-evolution operators themselves, as global phase information is irrelevant). For technical convenience, the map $\mathcal{U}(t)$ can be vectorized into the form $U(t)^* \otimes U(t)$. These are elements of the projective unitary group $\operatorname{PU}(d)=\{V^* \otimes V\, : \, V\in \operatorname{U}(d)\}$  \footnote{A simple example is the qubit $d=2$ case, where $\operatorname{PU}(2)\cong \text{SO}(3)$ is the group of all rotations of three-dimensional space, which acts by rotating states around the Bloch sphere $\mathbb{P}(\mathbb{C}^2)$.} for which there is a notion of a uniform (Haar) ensemble $\{ \mathcal{V} \}_\text{Haar} := \{ V^* \otimes V \}$ describing the distribution of unitary channels obtained from randomly sampling from the Haar measure $dV$ on the space of unitaries \cite{Diestel2014Haar}. Our proposed dynamical notion of quantum ergodicity in this scenario would then amount to asking whether the temporal ensemble is equivalent to the spatial ensemble, which in analogy to CHSE we dub ``complete unitary ergodicity'':
\begin{definition} [CUE]
Complete unitary ergodicity is the property of quantum dynamics wherein the temporal ensemble of unitary time-evolution operators and spatial ensemble of unitary operators are statistically indistinguishable, $\{\mathcal{U}(t) \}_{t \geq 0} \sim  \{ \mathcal{V} \}_\text{Haar}$, where ``$\sim$'' denotes equality in distribution. 
\end{definition}
  
 Such an equality may, once again, be probed by comparing moments of the respective distributions, defined for the $k$th moment for the temporal ensemble as $\E_{t\geq 0} [U(t)^{\otimes k,k}]:=\lim_{T \to \infty} \frac{1}{T}\int_0^T dt [U(t)^*\otimes U(t)]^{\otimes k}$, and for the spatial ensemble as 
 $\E_{V \sim \mathrm{Haar}}[V^{\otimes k,k}]$, where ${V}^{\otimes k,k} :=(V^*\otimes V)^{\otimes k}$. 
 The latter can be exactly computed using   so-called Weingarten calculus and have closed-form expressions \cite{Collins2006}. For example, the first moment
$\E_{V \sim \mathrm{Haar}}[{V^*\otimes V}]$ is equal to the quantum channel $\mathcal{C}^{(1)}[O] =  \tr(O)\mathds{1}/d$, meaning that, under $1$-UE, the time average of any observable $O(t)$ in the Heisenberg picture is $\E_{t\geq 0} [O(t)]=\tr(O)\mathds{1}/d$. We can then define $k$-unitary ergodicity as the statement of  indistinguishability only up to the $k$th moment:
\begin{definition}[UE]
For $k\in\mathbb{N}$, we say that the evolution given by a Hamiltonian $H(t)$ exhibits unitary $k$-ergodicity ($k$\nobreakdash-UE) if the evolution operator $U(t)$  satisfies
\begin{equation}
\label{eq:UEDef}
    \E_{t\geq 0} [U(t)^{\otimes k,k}]=\E_{V \sim \mathrm{Haar}}[{V^{\otimes k,k}}].
\end{equation}
\end{definition}
Again, any vanishing of matrix norm for the difference between the left- and right-hand sides can be used to numerically ascertain $k$\nobreakdash-UE, though  it is common practice to compare the so-called ``frame potentials'' (which is related to the Frobenius norm), viz. asking if 
\begin{equation}
\label{eq:framePotEquality}
    \E_{t\geq 0}\E_{t'\geq 0} \Big[\abs{\tr(U^\dagger(t')U(t))}^{k}\Big] \stackrel{?}{=}\E_{V,W\sim\mathrm{Haar}}\Big[\abs{\tr(W^\dagger V)}^{2k}\Big].
\end{equation}
In the parlance of quantum information theory, $k$\nobreakdash-UE is the statement that the temporal ensemble forms a {\it unitary $k$-design}.

It is straightforward to note that $k$\nobreakdash-UE  implies $k$\nobreakdash-HSE, but the converse is not true (see Appendix~\ref{app:designs}).
Thus, $k$\nobreakdash-UE is an inequivalent, strictly stronger version of quantum ergodicity compared to $k$\nobreakdash-HSE. 
Further, $k$\nobreakdash-UE defines a hierarchical definition of more restricted notions of quantum ergodicity: $k$\nobreakdash-UE implies $k'$-UE for $k' \leq k$ but not vice versa (see  Corollary~\ref{cor:relationsHSEandUE} and Fig.~\ref{fig:hier}). The most restrictive condition is when $k$\nobreakdash-UE is satisfied for all $k$, leading us back to CUE:

\begin{definition} [CUE; equivalent definition]
    If a system exhibits $k$\nobreakdash-UE for all $k$, then it exhibits CUE.
\end{definition}
\noindent Similarly to $k$\nobreakdash-UE and $k$\nobreakdash-HSE, CUE implies CHSE but not vice versa. 

\subsection{ Achievability of HSE or UE and conservation laws}

We briefly comment here on the achievability of HSE or UE in the presence of conservation laws in dynamics. 
As the definition of HSE or UE entails a comparison of the temporal ensemble to the reference {\it uniform} (i.e., unbiased) distribution in the Hilbert space (space of unitaries), it is intuitively clear that any conserved quantities will preclude HSE (UE), since there will be ``bias'' in dynamics toward them (of course, an interesting question, which we do not address here, is how to properly modify the reference distribution in order to account for conserved quantities \cite{Mark2024}). 
For example, in a time-independent quantum system which has energy conservation, not even $1$-HSE can be achieved: If $\ket{\psi}$ is an eigenstate of the Hamiltonian, then its time average remains pure:   $\psi=\E_{t\geq 0}[\psi(t)]$, far off from a maximally mixed state  $\E_{\phi \in \mathbb{P}(\mathbb{C}^d)}[\phi]= \mathds{1}/d$.

Achieving quantum ergodicity defined by HSE or UE therefore necessarily requires considering systems with time dependence, such that there are no conservation laws. A trivial example of dynamics which satisfies CUE is a drive $U(t)$ where at every integer time $t$ an independent Haar-random unitary is applied. Then, the wavefunction undergoes a random walk in the Hilbert space. The time dependence of such a drive is, however, maximally complex: At each time-step we need to  specify a completely new random matrix. A natural question to ask is whether or not CHSE or CUE (or, more generally, different levels of the hierarchy of complete ergodicity) can be achieved with time-dependent systems with more succinct, deterministic, descriptions.
Surprisingly, Ref.~\cite{Pilatowsky2023} gave an explicit example in the affirmative, in terms of a family of simple, deterministic, low-complexity quantum drives, derived from the Fibonacci word and its variants, which provably exhibits CUE (and hence CHSE). However, a more general theory that allows us to systematically determine when CUE or CHSE occurs or not, is at the present time still not fully established. One of the aims of this work is to present a step in this direction. 

 In the next section, we introduce the notion of time quasiperiodicity, which allows us to classify the time dependence of a system in increasing levels of complexity. Using this, we will systematically classify the time complexity  required to achieve the different levels of HSE and UE in the class of quasiperiodically driven systems.

\section{Time-quasiperiodic quantum systems}
\label{sec:QP}
In this section, we give a brief introduction to the class of quantum systems which are quasiperiodically driven by $m$ frequencies, and discuss the structure of the dynamics they generate, in particular, the possibility of decomposing dynamics into quasienergies and quasienergy states. 

Time-quasiperiodic systems are the direct generalization of a Floquet system, i.e., a system driven periodically by a single fundamental frequency \cite{Ho1983,Luck1988,Casati1989,Feudel1995,Bambusi2001,Gentile2003,Chu2004,Gommers2006,Chabe2008}. This class of systems has gained much recent interest \cite{Verdeny2016,Nandy2018, Yang2018,Long2022,Tristan2022,Das2023}, as they may host novel and exotic dynamical phases like time-quasiperiodic topological phases \cite{Martin2017,Crowley2019,Dominic2020} and time quasicrystals \cite{Dumitrescu2018,Autti2018,Giergiel2019,Zhao2021,Zalatel2023}. 

\subsection{Definition}
Floquet Hamiltonians are those that periodically repeat themselves in time, $H(t)=H(t+T)$, where  $\omega$ is the fundamental driving frequency and $T=2\pi/\omega$ the corresponding period. An equivalent way of understanding such Hamiltonians, which allows for an immediate generalization to multifrequency drives, is to define an underlying Hamiltonian $\hat{H}(\theta)$ on the circle $\mathbb{S}^1$, with coordinate $\theta\in [0,2\pi)$. Then, a time-periodic Hamiltonian can be defined via setting $\theta = \omega t + \theta_0 \mod 2\pi$ for some initial phase $\theta_0$ (which we will typically set to be 0),  i.e., $H(t)=\hat{H}(\omega t)$. A multitone, or time-quasiperiodic Hamiltonian then straightforwardly follows by generalizing this concept, by promoting the circle $\mathbb{S}^1$ to the torus $\mathbb{T}^m = \underbrace{\mathbb{S}^1 \times \mathbb{S}^1\times \cdots \times \mathbb{S}^1}_{m\text{ times}} \ni \bm \theta=(\theta_1,\dots,\theta_m)$, and $\omega$ to $\bm \omega=(\omega_1,\dots,\omega_m)$. 
Precisely, we have the following.
\begin{definition}[$m$-time-quasiperiodic Hamiltonian]
\label{def:qp}
    Given a Hamiltonian $H(t)$, with $t\in\mathbb{R}$,  we say that $H$ is time-quasiperiodic with $m$ tones, or $m$-time-quasiperiodic if there exists a so-called parent Hamiltonian $\hat{H}(\bm \theta)$ piecewise-smoothly \footnote{Possibly with countably many pieces.} defined on the $m$-dimensional torus $\mathbb{T}^m\coloneqq\{\bm \theta=(\theta_1,\dots,\theta_m) \,|\, \theta_j\in [0,2\pi)\}$ such that  
\begin{equation}
    H(t)={\hat H}(\bm \omega t),\end{equation}  for some {\it frequency vector } $\bm \omega=(\omega_1,\dots,\omega_m)$, where the winding $\bm \omega t$ is taken modulo $2\pi$ at each entry.
     Furthermore, we require that  $m$ is the smallest integer such that the above decomposition holds.
\end{definition}

\noindent 
As $m$ has to be the smallest possible number of tones, the frequency vector $\bm \omega$ has to be {\it rationally independent}, meaning that the only integer solution $\bm n\in\mathbb{Z}^m$ to the equation
$\bm n\cdot \bm \omega=0$ is $\bm n=\bm 0$, i.e., $\bm\omega$ constitute $m$ independent fundamental tones \footnote{This follows from the minimality of $m$. If we had ${\bm n\cdot\bm \omega}=0$ with some entry $n_j\neq 0$, then we could write $\omega_j=-(\sum_{k\neq j}n_k\omega_k)/n_j$, which would allow us to reduce $m$, by writing $\theta_j$ in terms of the other $\theta_k$ in  $H(\bm \theta)$. }. Henceforth, for simplicity in the notation, we will drop the hat in the parent Hamiltonian $\hat{H}(\bm \theta)$, and simply write $H(\bm \theta)$. This is a standard abuse of notation, as $H(t)$ and $H(\bm \theta)$ are functions technically defined in different domains, but they can easily be distinguished by their arguments \cite{Dominic2020}. A more familiar definition of an $m$-time-quasiperiodic Hamiltonian, which is equivalent for sufficiently well-behaved functions, is the statement that $H(t)$ can be written as a convergent Fourier series with $m$ rationally independent fundamental frequencies,
\begin{align}
    H(t) = \sum_{\bm{n} \in \mathbb{Z}^m} {H}_{\bm{n}} e^{i \bm{n} \cdot \bm{\omega} t},
\end{align}
where $H_{\bm{n}}$ are its Fourier modes (over the torus). In modern quantum simulation experiments, engineering time-quasiperiodic driving with a large number of tones $m$ is readily achievable.

More generally, an $m$-time-quasiperiodic Hamiltonian constitutes an example of an $m$-time-quasiperiodic function $F(t) = \hat{F}(\bm{\omega} t) = \sum_{\bm{n} \in \mathbb{Z}^m} {F}_{\bm{n}} e^{i \bm{n} \cdot \bm{\omega} t}$, where the parent function $\hat{F}$ and frequency vector $\bm\omega$ have all the same properties as that listed in Definition \ref{def:qp}.

\subsection{Generalized Floquet decomposition}
What is understood about the nature of quantum dynamics generated by time-quasiperiodic Hamiltonians? In the case of $m=1$, we recover time-periodic or Floquet drives, for which the Floquet theorem guarantees that there exists a set of quasienergy eigenstates which are also periodic in time \cite{Shirley1965}. This is captured by the statement that the unitary time-evolution operator admits a decomposition 
\begin{align}
    U(t) = P(\omega t) e^{-i Q t},
\end{align}
where $Q$ is the so-called Floquet Hamiltonian whose  $d$ eigenvalues, called quasienergies, and eigenvectors are defined via $Q|\alpha\rangle = q_\alpha |\alpha\rangle$.
$P(\omega t)$ is a periodic unitary with identical period as the driving Hamiltonian and satisfies $P(0) = \mathds{1}$, and thus is descended from a piecewise-smooth parent unitary $P(\theta)$ defined on the circle $\mathbb{S}^1$. One may thus construct  quasienergy eigenstates (QEs) that live on the circle, defined via
\begin{align}
    |\alpha(\theta)\rangle = P(\theta) | \alpha\rangle.
\end{align}
Note that the decomposition into the Floquet Hamiltonian and periodic unitary is not unique: One can shift the quasienergies $q_\alpha + n \omega$ by any integer $n \in \mathbb{Z}$ and redefine the appropriate component of $P(\omega t)$  with a winding phase.  
One sees from this decomposition that if we were to view a Floquet system at stroboscopic times $t = n T$ where $n \in \mathbb{Z}$, then the system can equivalently be thought of as undergoing dynamics under a time-independent  Hamiltonian $Q$, that is, $U(nT) = e^{-i Q n T}$. This property of decomposability of dynamics into that of  a static Hamiltonian, up to a periodic envelope, is known mathematically as {\it reducibility} \cite{Murdock1978,Jorba1992}.

When $m >  1$, it is natural to assume that a generalized Floquet decomposition, or reducibility of dynamics, holds too, namely that 
\begin{equation}
    U(t)=P(\bm \omega t ) e^{-iQ t} , \label{eq:quasienergydecomp}
\end{equation}
where $P(\bm \theta)$ is a piecewise-smooth unitary defined on $\mathbb{T}^m$ which satisfies $P(\bm 0)=\mathds{1}$ \footnote{There is a unitary degree of freedom in this decomposition, as generally one can choose $P(\bm 0)=P_0$ and replace Eq.~\eqref{eq:quasienergydecomp} with $U(t)=P(\bm \omega t)e^{-iQ t} P_0^\dagger$. We set $P_0=\mathds{1}$ for simplicity.}, and $Q$ is the generalized Floquet Hamiltonian with $d$ quasienergies and eigenstates, $Q\ket{\alpha}=q_\alpha \ket{\alpha}$.
Similar to the Floquet case, the generalized Floquet Hamiltonian $Q$ and unitary $P(\bm\theta)$ will not be unique, but this fact will be unimportant in our analysis. 
One may then construct  QEs, now defined as state-valued functions  on the torus:
    $$\ket{\alpha(\bm \theta)}=P(\bm \theta)\ket{\alpha}.$$
Such a decomposition would then entail that if we prepare our system in the initial state $\ket{\alpha}=\ket{\alpha({\bm 0})}$, the resulting dynamics is $m$-time-quasiperiodic up to a global phase:
$$\ket{\alpha(t)}=  U(t)\ket{\alpha}= e^{-iq_\alpha t}\ket{\alpha(\bm \theta=\bm \omega  t)}.$$ More generally, the time dependence of a generic initial state may then be decomposed as a linear combination over QEs: 
\begin{equation}
\label{eq:QEsdecomp}
\ket{\psi(t)}=\sum_{\alpha} c_\alpha e^{-iq_\alpha t}\ket{\alpha(\bm \omega t)},
\end{equation}
with $c_\alpha=\braket{\alpha}{\psi(0)}$, which can be understood as time-quasiperiodic over the torus $\mathbb{T}^n$, with $n\leq (m+d-1)$ (ignoring the global phase). The factor of $m$ comes from the physical driving frequencies $\bm{\omega}$, while there are $d$ additional frequencies coming from the winding phases $e^{-i q_\alpha t}$, minus a global phase.

As appealing as the generalized Floquet decomposition Eq.~\eqref{eq:quasienergydecomp} is, we stress its existence is nontrivial: it is known rigorously that this may not always hold in $(m>1)$-time-quasiperiodic systems  \cite{Jauslin1991,Blekher1992}.  This could come, for example, from topological obstructions in defining a smooth quasienergy state over the torus; see Ref.~\cite{Crowley2019}.
In other words, a generalized Floquet {\it theorem} (i.e., applying to all time-quasiperiodic Hamiltonians) does not hold, though the Floquet decomposition may still be valid in some cases. However, while interesting in its own right, the purpose of this work is not to investigate the conditions for when such a decomposition does or does not hold in time-quasiperiodic systems; rather, we assume that the systems in consideration always admit generalized QEs and study the compatibility of HSE and UE with such structure in dynamics. Note that the existence of QEs guarantees that the infinite-time averages in Eq.~\eqref{eq:kHSE_cond} [Eq.~\eqref{eq:UEDef}] always exist; i.e., the temporal ensemble of states or unitaries is well-defined in the limit $t \to \infty$ \footnote{This stems from the (continuous) Kronecker-Weyl theorem (Ref.~\cite{Beck2017}, p. 9), which states that the infinite-time average of any quasiperiodic function always exists, and it is equal to the average of the parent function over the torus,  $\E_{t\geq 0}[f(t)]=\E_{\bm \theta \in\mathbb{T}^m}[f(\bm \theta)]$}. 

\section{Quasienergy eigenstates limit complete quantum ergodicity}
\label{sec:no-go-theorems}

Having introduced the concepts of Hilbert-space ergodicity and unitary ergodicity, and the class of quantum dynamics (time-quasiperiodic systems) we   consider in this paper, we are now in a position to present our results. 
     Our first finding shows that the existence of QEs in time-periodic ($m=1$) systems precludes them from satisfying CHSE (and hence CUE). That is, Floquet systems cannot achieve full dynamical quantum ergodicity. 

\begin{theorem}
    \label{th:01}
If $H(t)$ is a time-periodic Hamiltonian with period $T$ and a bounded strength in the sense that ${B=\int_0^{T} \dd{t}\norm{H(t)}_\infty<\infty}$ \footnote{ $\norm{\, \cdot\,}_\infty$ is the usual operator norm, corresponding to the Schatten  $\infty$-norm.}, then $H(t)$ does not exhibit CHSE (and thus not CUE). \footnote{Note that Theorem~\ref{th:01} allows the evolution operator $U(t)$ to change discontinuously in time, thus also encompassing  discrete-time dynamics arising, for instance, from a brickwork circuit, in which case the Hamiltonian $H(t)$ is a sequence of  Dirac-$\delta$ pulses.}
\end{theorem}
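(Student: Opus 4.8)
The plan is to defeat CHSE by exhibiting a single initial state whose temporal ensemble is too ``thin'' to reproduce the Haar ensemble. Since $H(t)$ is periodic ($m=1$), the ordinary Floquet theorem applies regardless of any dynamical subtlety: writing $F=U(T)$ for the monodromy operator and choosing a Hermitian logarithm $Q$ with $F=e^{-iQT}$, the operator $P(\omega t):=U(t)e^{iQt}$ is $T$-periodic with $P(0)=\mathds{1}$, giving exactly the decomposition of Eq.~\eqref{eq:quasienergydecomp}. In particular the quasienergy eigenstates $\ket{\alpha}=\ket{\alpha(\bm 0)}$ always exist for $m=1$. I would then prepare the system in such a QE, $\psi(0)=\dyad{\alpha}$, for which the evolved state picks up only an overall phase on top of the winding of $P$: $\ket{\psi(t)}=e^{-iq_\alpha t}P(\omega t)\ket{\alpha}=e^{-iq_\alpha t}\ket{\alpha(\omega t)}$, so that $\psi(t)=\dyad{\alpha(\omega t)}$ depends on $t$ only through the single phase $\theta=\omega t$.

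By the Kronecker--Weyl theorem (here for $\mathbb{T}^1$), the temporal ensemble of this QE is the pushforward of the uniform measure on $\theta\in[0,2\pi)$ under $\theta\mapsto\dyad{\alpha(\theta)}$, so its support lies inside the one-parameter curve $\mathcal{C}=\{\dyad{\alpha(\theta)}:\theta\in[0,2\pi)\}\subset\mathbb{P}(\mathbb{C}^d)$. The next step is to argue that $\mathcal{C}$ has measure zero in $\mathbb{P}(\mathbb{C}^d)$, whose real dimension is $2(d-1)\geq 2$ for $d\geq 2$ (the only nontrivial case). Since the spatial ensemble $\{\phi\}_\text{Haar}$ is the uniform measure, which assigns positive weight to every open set and hence zero weight to any null set, the two distributions cannot agree: the temporal ensemble places all its weight on $\mathcal{C}$ while the Haar ensemble places none there. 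This contradicts Definition~\ref{def:CHSE} for the chosen initial state, so CHSE fails; as CUE implies CHSE, CUE fails \emph{a fortiori}.

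The crux of the argument, and the step where the hypothesis $B<\infty$ actually enters, is establishing that $\mathcal{C}$ is genuinely measure zero. Mere continuity of $\theta\mapsto\ket{\alpha(\theta)}$ does not suffice, since a continuous curve can be space-filling; what is needed is rectifiability. This follows from the Lipschitz estimate $\norm{U(t)-U(s)}_\infty\leq\int_s^t\dd{\tau}\norm{H(\tau)}_\infty$, obtained by integrating $\dot U=-iHU$, which shows that over one period $\theta\mapsto U$ traces a path of total length at most $B<\infty$; multiplying by the smooth factor $e^{iQt}$ preserves rectifiability, so $\theta\mapsto\ket{\alpha(\theta)}$ and hence $\mathcal{C}$ has finite length. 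A rectifiable curve has Hausdorff dimension at most one and therefore vanishing measure in a manifold of dimension at least two. The delta-pulse / discrete-time case flagged in the footnote fits the same framework: there $U(t)$ is piecewise constant with finitely many jumps per period, so the QE trajectory visits only finitely (or countably) many points, again a null set. I expect the main technical care to lie precisely here --- verifying rectifiability uniformly across the continuous and discontinuous cases, and making the ``support'' argument rigorous for the limiting temporal distribution --- rather than in any representation-theoretic computation.
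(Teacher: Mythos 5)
Your proof is correct, but it takes a genuinely different route from the paper's. The paper also starts from the Floquet decomposition and the same finite-length ingredient (its Appendix~\ref{app:speed_limit} speed limit is your Lipschitz estimate, stated for states rather than operators), but then argues via \emph{density plus packing}: it first shows CHSE forces the quasienergy trajectory $\theta\mapsto\alpha(\theta)$ to be dense in $\mathbb{P}(\mathbb{C}^d)$ (Appendix~\ref{app:ergimpliesdens}, using frame potentials of state $k$-designs), and then derives a contradiction by placing $n\sim\delta^{-2(d-1)}$ pairwise $\delta$-separated states and concluding $B\geq \delta^{-2(d-1)}(\delta-2\varepsilon)\to\infty$. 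Your argument skips density and the design machinery entirely: finite total variation gives $\mathcal{H}^1(\mathcal{C})\leq B<\infty$, hence $\mathcal{C}$ (and its closure, which is what actually carries the limiting temporal measure --- phrase it with $\bar{\mathcal{C}}$ and outer regularity of the Haar volume, or a Urysohn test function, to close the ``support'' step you flag) is null in the $2(d-1)\geq 2$-dimensional projective space, so the temporal distribution cannot equal Haar. This is shorter and arguably the cleaner conceptual reason; notably it is closer in spirit to the paper's proof of Theorem~\ref{th:02}, which uses exactly the ``Lipschitz maps do not raise Hausdorff dimension'' logic --- except that Theorem~\ref{th:02} must \emph{assume} piecewise-smooth QEs, whereas for $m=1$ you correctly derive the needed regularity (bounded variation, not smoothness) from $B<\infty$ itself, which also handles the delta-pulse case uniformly. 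What the paper's quantitative route buys in exchange is the stronger finite-$k$ statement: the packing/net argument refines to the lower bounds on $B$ for $k$\nobreakdash-HSE in Eq.~\eqref{eq:bound_on_B} (Appendix~\ref{app:periodicsystemsandkHSE}), which your measure-zero argument cannot see --- an ensemble supported on a null set can still be an exact $k$-design for finite $k$ (finite designs exist, and Sec.~\ref{sec:finiteHSEPeriodicsystems} exploits precisely this), so the support obstruction genuinely only operates at $k=\infty$. One cosmetic slip: Haar assigning positive weight to open sets does not by itself imply it vanishes on null sets; what you want is simply that the Haar ensemble is the normalized Riemannian volume on $\mathbb{P}(\mathbb{C}^d)$, hence absolutely continuous with respect to $\mathcal{H}^{2(d-1)}$.
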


The quantity $B$ should be understood as a measure of the ``physical resources'' needed to realize the dynamics: it is large for Hamiltonians whose strengths $\| H(t) \|_\infty$ are large or whose driving period is long.  Although $B$ changes upon the substitution $H(t)\rightarrow H(t)+c(t)\mathds{1}$, its minimum value over all $c(t)$ is 
 proportional to the time-integrated bandwidth $B=\tfrac{1}{2}\int_0^T \dd{t} (E_{\max}(t)-E_{\min}(t))$ \footnote{ $E_{\min}(t)$ ($E_{\max}(t)$) is the instantaneous ground (most-excited) state energy.}. As $B$ carries units of energy $\times$ time (recall $\hbar = 1$), it  has also the meaning of an ``action,'' which physically corresponds to the net effect that $H(t)$ has on the system during a single driving period. As we explain further below,  $B < \infty$ is simply the physical requirement of a ``quantum speed limit'': that the length of the trajectory traversed by the wavefunction over a period $T$ cannot be arbitrarily long.

From this point of view, the logic behind the proof of  Theorem~\ref{th:01} can be intuitively explained as an incompatibility of dynamics that traverses a finite ``distance'' to densely cover the continuous space that is the Hilbert space. Indeed, the formal proof proceeds by contradiction: 

\begin{proof}
Assume that the time-periodic Hamiltonian $H(t)$ satisfies CHSE. By  Floquet's theorem, $H(t)$ has a QE $\ket{\alpha(t)}=e^{-iq_\alpha t}\ket{\alpha(\theta=\omega t)}$,  where $\omega=2\pi/T$. Because phases are projected out in $\mathbb{P}(\mathbb{C}^d)$,  dynamics beginning from $\alpha(0)$ is time periodic: $\alpha(t)=\alpha(\theta=\omega t)$. We will reach a contradiction, in three steps.

First,  CHSE implies that the state $\alpha(t)$ uniformly visits  the $2(d-1)$-dimensional projective Hilbert space $\mathbb{P}(\mathbb{C}^d)$. This implies that the map $\theta\mapsto\alpha(\theta)$ is topologically dense, meaning that for any other state $\ket{\phi}$ and arbitrarily small $\varepsilon>0$ there is some angle $\theta$ for which $D(\alpha(\theta),\phi)<\varepsilon$, where
 \begin{equation}
 \label{eq:tracedistancedef}
     D(\psi,\phi)=\tfrac{1}{2}\norm{\psi-\phi}_1=\sqrt{1-\abs{\braket{\psi}{\phi}}^2}
 \end{equation} is the trace distance. This is rigorously proven in Appendix~\ref{app:ergimpliesdens}.

  Second, we appeal to the quantum speed limit $B < \infty$:  The state $\alpha(t)$ can only travel through a finite path in $\mathbb{P}(\mathbb{C}^d)$. Specifically, for any finite partition of the circle  $\theta_0\leq\theta_1\leq\cdots\leq \theta_n=\theta_0+2\pi$,  
    \begin{equation}\label{eq:totaldistandB}\sum_{j=1}^n D(\alpha(\theta_{j-1}),\alpha(\theta_{j}))\leq B. \end{equation}
This is a state-independent variant of the quantum speed limit, which is traditionally phrased in terms of the average energy \cite{Margolus1998} or variance \cite{Mandelstam1945} of a specific state, rather than the Hamiltonian norm \cite{Ng2011, Marvian2015}. Equation~\eqref{eq:totaldistandB} is a straightforward consequence of Schr\"odinger's equation (see Appendix~\ref{app:speed_limit}).
   
   In our final step, we note that the previous two observations are contradictory: By dimensionality arguments, we can find $n\sim \delta^{-2(d-1)}$ different states $\phi_1,\dots \phi_n$ pairwise separated by at least trace distance $\delta$, i.e., $D(\phi_i,\phi_j)\geq \delta$ for $i\neq j$. If the trajectory $\alpha(\theta)$ is dense, at some angles $\theta_1,\dots,\theta_n$ it must come $\varepsilon$-close to these states, $D(\alpha(\theta_i), \phi_i)\leq \varepsilon$. From Eq.~\eqref{eq:totaldistandB} and the triangle inequality we obtain $B\geq \delta^{-2(d-1)}(\delta-2\varepsilon)$, which can be made arbitrarily large by choosing small enough $\varepsilon$ and $\delta$, contradicting the finiteness of $B$. Full details are given in Proposition~\ref{prop:lowerboundonHamStrengthB2}.
   
   This shows the impossibility of CHSE. Lastly, because CHSE implies CUE, then CUE is also not achievable by time-periodic systems.
\end{proof}

In Appendix~\ref{app:periodicsystemsandkHSE} we show a stronger form of Theorem~\ref{th:01}. We prove that if a periodic Hamiltonian satisfies $k$\nobreakdash-HSE for some finite $k$, then $B\coloneqq \int_0^{T} \dd{t}\norm{H(t)}_\infty$ is lower bounded as \begin{align}
\label{eq:bound_on_B}
    B  \geq \max\Big \{\frac{\sqrt{2}}{3k}  &\Bigg(\binom{k+d-1}{k}-1\Bigg),\\ &\quad \frac{8}{(4 d)^d}\left(\frac{k}{\log (k+1)}\right)^{d-3/2}\Big\}. \nonumber
\end{align}
Informally, Eq.~\eqref{eq:bound_on_B} says that time-periodic $k$\nobreakdash-HSE is not achievable for large $k$ or $d$ unless the wavefunction travels for a very long distance within a single Floquet period $T$, in line with our physical intuition. For example, inserting $k=1$ in the first expression in the maximum, we see that $B\geq \frac{\sqrt{2}}{3}(d-1)$, where the linear growth with $d$ is required for a quasienergy eigenstate to come close to $d$ orthogonal states and achieve ${1}$\nobreakdash-HSE. In general, for fixed $k$, $B(d)$ has to grow at least as $d^k$. For fixed $d$, $B(k)$ has to grow at least like ${ (k/\log(k+1))^{d-{3}/{2}}}$, by the second expression in Eq.~\eqref{eq:bound_on_B}, which is obtained from  analyzing the geometrical distribution of a $k$-design in $\mathbb{P}(\mathbb{C}^d)$. In Sec.~\ref{sec:finiteHSEPeriodicsystems}, we provide explicit examples of time-periodic quantum systems with $B$  large enough such that $k$\nobreakdash-HSE is provably achievable.

Our next result is a generalization of Theorem~\ref{th:01} to $m$-time-quasiperiodic Hamiltonians, where we remind the reader our analysis is under the premise of the existence of QEs. 
Like in the Floquet case $(m=1)$, such QEs can lead to an obstruction of the system to achieve CHSE or CUE: Dynamics beginning from a QE is necessarily structured --- specifically time quasiperiodic, or in other words, amounts to winding around an $m$-dimensional torus $\mathbb{T}^m$. It may then be possible this regularity precludes an unbiased exploration of the Hilbert space.  
However, unlike the Floquet case, now there is an interplay between the number of tones $m$ of the drive (its ``complexity'') and the dimension $d$ of the ambient space: Such obstruction is  active only if the torus is small enough, such that the time-evolved state is unable to fully ``wrap'' around the projective Hilbert space.  Indeed, from a dimension-counting argument, we obtain the following.

\begin{theorem}
\label{th:02}

    Let $H(t)$ be a $m$-quasiperiodic Hamiltonian with a piecewise-smooth quasienergy eigenstate. Then $H(t)$ cannot exhibit CHSE if \begin{equation}
        m< 2(d-1).
    \end{equation} 
\end{theorem}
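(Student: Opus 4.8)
The plan is to mirror the three-step structure of the proof of Theorem~\ref{th:01}, replacing the circle $\mathbb{S}^1$ by the torus $\mathbb{T}^m$ and replacing the quantum-speed-limit (length) argument by a dimension-counting argument. As before, I would assume for contradiction that CHSE holds; since CHSE must hold for \emph{every} initial state, I apply it to the initial state $\psi(0)=\alpha$ equal to the piecewise-smooth quasienergy eigenstate guaranteed by hypothesis. Because global phases are irrelevant in $\mathbb{P}(\mathbb{C}^d)$, the density-matrix trajectory starting from $\alpha(\bm 0)$ is $\alpha(t)=\alpha(\bm\omega t)$, i.e.\ it is the image under the piecewise-smooth map $\bm\theta\mapsto\alpha(\bm\theta)=P(\bm\theta)\,\alpha\,P(\bm\theta)^{\dagger}$ of the linear winding $\{\bm\omega t \bmod 2\pi\}$ on the torus. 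By rational independence of $\bm\omega$ and the Kronecker--Weyl theorem this winding is equidistributed on $\mathbb{T}^m$, so the entire analysis reduces to studying the single map $\alpha:\mathbb{T}^m\to\mathbb{P}(\mathbb{C}^d)$.

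First I would convert CHSE into a statement about a pushforward measure. By Kronecker--Weyl the temporal ensemble of the QE equals the pushforward $\mu:=\alpha_{*}(\mathrm{unif}_{\mathbb{T}^m})$ of the uniform measure on the torus: for every $k$, $\rho^{(k)}_{\mathrm{time}}=\E_{\bm\theta\in\mathbb{T}^m}[\alpha(\bm\theta)^{\otimes k}]$. CHSE then forces all moments of $\mu$ to agree with those of the Haar ensemble, and since moments determine a measure on the compact space $\mathbb{P}(\mathbb{C}^d)$ (the moment-problem/Weierstrass remark used in the equivalent definition of CHSE), this means $\mu$ must equal the uniform (Haar) measure $\mu_{\mathrm{Haar}}$ on $\mathbb{P}(\mathbb{C}^d)$.

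The contradiction then comes from a dimension count. The support of $\mu$ is contained in the image $\alpha(\mathbb{T}^m)$, which is the image of an $m$-dimensional domain under a piecewise-smooth (hence locally Lipschitz) map. A locally Lipschitz map does not raise Hausdorff dimension, so $\alpha(\mathbb{T}^m)$ has Hausdorff dimension at most $m$; when $m<2(d-1)=\dim_{\mathbb{R}}\mathbb{P}(\mathbb{C}^d)$ this image is therefore Lebesgue-null. Consequently $\mu$ is singular with respect to the Haar measure, whereas $\mu_{\mathrm{Haar}}$ is absolutely continuous with strictly positive density, and a singular probability measure cannot equal an absolutely continuous one, contradicting $\mu=\mu_{\mathrm{Haar}}$. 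Equivalently, one can argue geometrically as in Theorem~\ref{th:01}: CHSE implies the trajectory is dense (Appendix~\ref{app:ergimpliesdens}), its closure is the compact null set $\alpha(\mathbb{T}^m)$, and the open complement of a compact null set is nonempty, so the trajectory cannot be dense. The measure formulation is cleaner, however, as it sidesteps closedness issues. This rules out CHSE, and since CUE implies CHSE, CUE is excluded as well.

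I expect the main technical obstacle to be the third step, namely verifying cleanly that the image of the piecewise-smooth map is Lebesgue-null. The single-smooth-piece case is the standard fact that $C^1$ images of lower-dimensional sets are null (a covering/Sard-type argument), but the definition allows countably many smooth pieces; there the image is a countable union of null sets and is hence still null, which is precisely why routing the argument through the pushforward measure (singular versus absolutely continuous) is preferable to the density argument, since the latter would additionally require the image to be closed. I would take care to use the piecewise-smoothness hypothesis only to guarantee local Lipschitz continuity on each piece, so that no regularity stronger than that stated in the theorem is invoked.
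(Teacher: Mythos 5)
Your proof is correct, and although it rests on the same key geometric fact as the paper's proof --- a piecewise-smooth (hence piecewise locally Lipschitz) map cannot raise Hausdorff dimension (Proposition~1.7.19 of Ref.~\cite{BuragoBook2001}) --- the route to the contradiction is genuinely different. The paper converts CHSE into a \emph{topological} statement: via the $\varepsilon$-net property of state $k$-designs (Lemma~\ref{lemm:kdesignsformepsilonnets}, Appendix~\ref{app:ergimpliesdens}), the QE trajectory is dense in $\mathbb{P}(\mathbb{C}^d)$, hence the parent map $\bm\theta\mapsto\alpha(\bm\theta)$ is dense and, by piecewise continuity, surjective; surjectivity from $\mathbb{T}^m$ then forces $m\geq 2(d-1)$. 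You instead convert CHSE into a \emph{measure-theoretic} statement: by Kronecker--Weyl the temporal ensemble of the QE is the pushforward $\alpha_{*}(\mathrm{unif}_{\mathbb{T}^m})$; matching all Haar moments forces, via Stone--Weierstrass and the moment problem (which the paper itself invokes in its equivalent definition of CHSE), this pushforward to equal the Fubini--Study measure, yet its mass sits on the image $\alpha(\mathbb{T}^m)$, a set of Hausdorff dimension at most $m<2(d-1)$ and hence Haar-null --- a singular measure cannot equal an absolutely continuous one. As for what each approach buys: your formulation is tighter on one delicate point, since the paper's ``dense implies surjective'' step requires the image to be closed, which is automatic for finitely many smooth pieces (compact image) but not obvious for the countably many pieces permitted by Definition~\ref{def:qp}, whereas your countable-union-of-null-sets bookkeeping handles that case painlessly, exactly as you anticipated in flagging the closedness issue; conversely, the paper's design/$\varepsilon$-net route is quantitative at finite $k$ --- it is the same machinery that powers the strengthened Floquet bound of Eq.~\eqref{eq:bound_on_B} --- while the moment-problem step in your argument is intrinsically an all-$k$ (full CHSE) statement and yields no finite-$k$ analog.
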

\begin{proof}
    The quasienergy eigenstate $\alpha(t)$ densely visits $\mathbb{P}(\mathbb C^d)$ in time (see Appendix~\ref{app:ergimpliesdens}). By the quasiperiodicity of the time evolution, $\alpha(t)=\alpha(\bm \theta=\bm \omega t)$, we deduce that the map $\bm \theta\mapsto \alpha(\bm \theta)$ is dense, from $\mathbb{T}^m$ to $ \mathbb{P}(\mathbb{C}^d)$. Because $\bm \theta\mapsto \alpha(\bm \theta)$ is piecewise continuous, this map must be  surjective, or entirely covering $ \mathbb{P}(\mathbb{C}^d)$. Intuition suggests that a surjective map from $\mathbb{T}^m$ to $ \mathbb{P}(\mathbb{C}^d)$ requires that the dimension of the codomain, $\text{dim}(\mathbb{P}(\mathbb{C}^d))=2(d-1)$ (the amount of real numbers required to specify a pure density matrix), is not greater than the dimension of the domain $m=\text{dim}(\mathbb{T}^{m})$. This intuition is correct, as long as the map $\bm \theta\mapsto \alpha(\bm \theta)$ is piecewise smooth in the torus, which is required in our definition of quasienergy eigenstate \footnote{The smoothness assumption is necessary, as there are examples of nonsmooth, but continuous and surjective maps which increase dimension (e.g. space-filling curves)}. The technical reason is that a piecewise-smooth map is piecewise Lipschitz continuous, and such maps do not increase Hausdorff dimension (see Proposition~1.7.19 of Ref.~\cite{BuragoBook2001}).  
Thus, CHSE requires $m\geq 2(d-1)$.
\end{proof}

The bound $m<O(d)$ where CHSE is impossible is obtained from the real dimension of $\mathbb{P}(\mathbb{C}^d)$. Similarly, the same idea can be applied  for the consideration of CUE, and we will obtain a bound which is $m<O(d^2)$, coming from the dimension of the projective unitary group.
\begin{theorem}
\label{th:03}
    Let $H(t)$ be an $m$-quasiperiodic Hamiltonian with a basis of piecewise-smooth quasienergy eigenstates. Then the evolution given by $H(t)$ cannot exhibit CUE if \begin{equation}
        m< d(d-1).
    \end{equation}  
\end{theorem}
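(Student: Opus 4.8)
The plan is to mirror the dimension-counting proof of Theorem~\ref{th:02}, replacing the projective Hilbert space $\mathbb{P}(\mathbb{C}^d)$ by the projective unitary group $\operatorname{PU}(d)$, which has real dimension $d^2-1$. First I would establish the unitary analog of the density statement of Appendix~\ref{app:ergimpliesdens}: since CUE asserts that the temporal measure of channels $\{\mathcal{U}(t)\}_{t\geq 0}$ agrees with the Haar measure on $\operatorname{PU}(d)$ to all orders, the moment problem (the same Weierstrass/Peter--Weyl argument underlying the equivalent definition of CHSE) identifies the two measures. As the Haar measure has full support and the support of the temporal measure is contained in the closure of the orbit $\{\mathcal{U}(t)\}$, that closure must be all of $\operatorname{PU}(d)$; i.e., the orbit is dense.

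The key new step is to show that the orbit factors through a torus of dimension $m+(d-1)$. Using the generalized Floquet decomposition $U(t)=P(\bm\omega t)e^{-iQt}$ of Eq.~\eqref{eq:quasienergydecomp}, the vectorized channel splits as $U(t)^*\otimes U(t)=[P(\bm\omega t)^*\otimes P(\bm\omega t)]\,[(e^{-iQt})^*\otimes e^{-iQt}]$. The first factor winds the $m$-torus via $\bm\theta=\bm\omega t$. In the eigenbasis of $Q$, the second factor is diagonal with entries $e^{-i(q_\beta-q_\alpha)t}$, depending only on the quasienergy differences; because an overall phase of $e^{-iQt}$ cancels in the channel, these are fixed by the $d-1$ independent differences $q_\alpha-q_1$, so the second factor descends to a $(d-1)$-torus with coordinates $\phi_\alpha=(q_\alpha-q_1)t \bmod 2\pi$. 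I would thereby assemble a single map $\Phi:\mathbb{T}^m\times\mathbb{T}^{d-1}\to\operatorname{PU}(d)$, piecewise smooth because $P(\bm\theta)$ is piecewise smooth and the diagonal part is smooth, through which $t\mapsto\mathcal{U}(t)$ factors.

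Combining the two steps closes the argument: the closure $\overline{\{\mathcal{U}(t)\}}$ lies in the compact set $\Phi(\mathbb{T}^{m+d-1})$, yet by density it equals $\operatorname{PU}(d)$, so $\Phi$ is surjective. A piecewise-smooth map is piecewise Lipschitz and hence cannot increase Hausdorff dimension (Proposition~1.7.19 of Ref.~\cite{BuragoBook2001}), so $m+(d-1)=\dim(\mathbb{T}^{m+d-1})\geq\dim(\operatorname{PU}(d))=d^2-1$, giving $m\geq d^2-d=d(d-1)$. Hence CUE is impossible when $m<d(d-1)$.

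The hard part will be making the torus factorization airtight. I would need to verify that $\Phi$ is genuinely single-valued on $\mathbb{T}^{m+d-1}$ despite the nonuniqueness of the decomposition (the quasienergies are defined only modulo shifts $q_\alpha\to q_\alpha+\bm n\cdot\bm\omega$), and to confirm that the global-phase cancellation correctly reduces the $d$ quasienergy phases to exactly $d-1$ independent torus coordinates --- any residual rational relations among the $q_\alpha-q_1$ and $\bm\omega$ only lower the effective torus dimension, which strengthens rather than weakens the bound. The remaining ingredients (density and the Lipschitz dimension bound) transfer directly from the proof of Theorem~\ref{th:02}.
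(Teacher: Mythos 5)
Your proposal is correct and essentially reproduces the paper's proof (Appendix~\ref{app:proofTH3}): the same generalized Floquet decomposition $U(t)=P(\bm\omega t)e^{-iQt}$, the same reduction of the $d$ winding phases to $d-1$ effective torus coordinates after discarding the global phase (the paper takes $Q$ traceless where you use the differences $q_\alpha-q_1$, which is equivalent), and the same conclusion that a dense, piecewise-smooth (hence piecewise-Lipschitz) image forces $m+d-1\geq\dim(\operatorname{PU}(d))=d^2-1$. The only cosmetic difference is the density ingredient: the paper derives it from frame-potential asymptotics, $\lim_{k\to\infty}\big(\E_{W\sim\mathrm{Haar}}[\abs{\tr W}^{2k}]\big)^{1/k}=d^2$ (Appendix~\ref{app:Uergimpliesdens}), while you identify the temporal and Haar measures outright via the moment problem and the full support of Haar --- both are valid routes to the same lemma, and your handling of the decomposition's nonuniqueness and of possible rational dependencies (which only shrink the torus, so $n\leq m+d-1$) matches the paper's.
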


 We provide the detailed proof in Appendix~\ref{app:proofTH3}. The idea is to note that the generalized Floquet decomposition for $U(t)=P(\bm \omega t ) e^{-iQ t}$ is quasiperiodic, with $m$ tones corresponding to $P(\bm \omega t )$, and (at most) an extra $d-1$ tones corresponding to the winding phases $e^{-iQ t}$, which then implies that
 $m+d-1=\dim(\mathbb{T}^{m+d-1})\geq  \dim(\operatorname{PU}(d))=d^2-1$.

 These three no-go theorems are depicted in Fig.~\ref{fig:resultssumm}.
\section{Implications of no complete quantum ergodicity}
\label{sec:numerics}
\begin{figure*}
    \centering
    \includegraphics[width=\textwidth ]{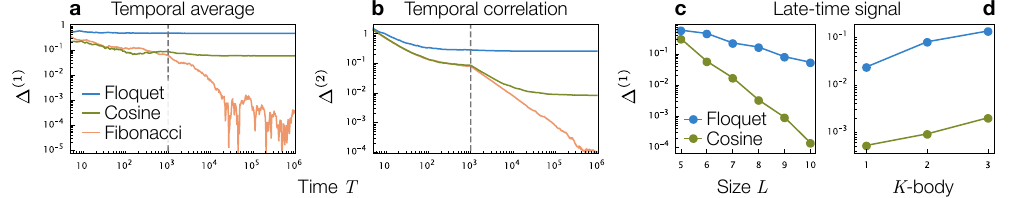}
    \caption{Difference between temporal and Haar moments of $K$-body optimized observable in a spin-$1/2$ chain driven by a Floquet protocol and  two-quasiperiodic drives with QEs (Cosine) and without QEs (Fibonacci). The observable is different for each driving protocol and moment. {\bf a} Difference between temporal average and Haar average $\Delta^{(1)}$ with $K=2$ and system size $L=6$. {\bf b} Difference between temporal and Haar second moments $\Delta^{(2)}$. The observables are selected to maximize the value at $T=10^3$, indicated by a vertical dashed line but are otherwise independent of time. {\bf c} Scaling of the late-time plateaus ($T=10^6$) of Floquet and Cosine drive with system size $L$ for the first moment and $K=2$. {\bf d} Same as {\bf c} but for fixed $L=9$ and varying $K$. The initial state is $\ket{\psi}=\ket{0}^{\otimes L}$ throughout.}
    \label{fig:03}
\end{figure*}
Our results in the previous section show that the evolution of any state under a few-tone quasiperiodic drive which allows for QEs necessarily has to be distinguishable  from a Haar-random state via some (potentially nonlocal) observable. One the one hand, this statement establishes a no-go theorem for CUE and CHSE as we already discussed, but on the other, it implies the existence of observables whose expectation values, temporal correlations, or higher statistical moments in time retain some memory of the initial state. 
For many-body quantum systems, the latter aspect presents us with an exciting possibility: Even at very late times --- when one expects an infinite-temperature, ``featureless'' {\it average state} due to the lack of energy conservation --- there nevertheless still remain nontrivial measurable features  which are different  from those coming from a genuinely featureless underlying {\it distribution}.

We test this idea numerically by focusing on local or few-body correlators which could be measured in a realistic setting. We consider a spin-1/2 chain of length $L$ and two classical Ising Hamiltonians along orthogonal directions,
\begin{align*}
H_0&=\sum_{j=1}^L X_j + \sum_{j=2}^L X_{j-1} X_j + \tfrac{1}{10} X_1 \\
H_1&=\sum_{j=1}^L Z_j + \sum_{j=2}^L Z_{j-1} Z_j + \tfrac{1}{10} Z_1,
\end{align*}
where $X_j$ and $Z_j$ are the Pauli operators acting on site $j$, and the boundary terms are introduced solely to break the spatial-reflection symmetry. 
We construct three driving protocols, consisting of certain alternating
kicks between the Ising Hamiltonians with varying amplitudes. First, we consider a Floquet drive by kicking with $H_0$ at even integer times and with $H_1$ at odd integer times:
\begin{align}
    H_{\textrm{Flo}}(t) = \sum_{n=1}^\infty \delta(2n-t) H_0 +\delta(2n-1 - t) H_1. 
\end{align}
Second, we consider a two-quasiperiodic drive  which we dub the ``Cosine drive'':
\begin{align}
    H_{\textrm{Cos}}(t) = \sum_{n=1}^\infty \delta(n-\omega_1 t)H_{g(\omega_2 t)}, 
\end{align}
where $g(\theta)=(1+\cos(\theta))/2$ and $H_{x} = (1-x)H_0 + x H_1$. We choose $\omega_1=1$ and $\omega_2=\pi (3-\sqrt{5})$ which are rationally independent. Both the Floquet and Cosine drives are expected to posses QEs~\cite{Feudel1995}.
Finally, we consider the two-quasiperiodic Fibonacci drive
\begin{align}
    H_{\textrm{Fib}}(t) = \sum_{n=1}^\infty \delta(n-\omega_1 t)H_{\chi(\omega_2 t)}, 
\end{align}
where $\chi(\theta)= 0$ if $\theta\in[0,2\pi - \omega_2)$ and $1$ if $\theta\in[2\pi - \omega_2,2\pi]$, which was shown to generically satisfy CUE in Ref.~\cite{Pilatowsky2023}. According to Theorems~\ref{th:01}~and~\ref{th:02}, the Floquet and Cosine drive cannot even exhibit CHSE if $L\geq 2$, and by the same results the Fibonacci drive does not admit QEs. We ask whether this difference has a measurable effect.

Our aim is to find a few-body observable whose late-time temporal moments are different from those of the Haar distribution. To this end, we consider a linear combination $O=\frac{1}{M}\sum_S J_S S$ of $K$-body Pauli observables
\begin{equation}
\label{K-body-ops}
    S=\sigma_{j_1}\cdots \sigma_{j_K},
\end{equation}
where each $\sigma_{j_l}\in\{X,Y,Z,\mathds{1}\}$ acts on a distinct site $j_l\in \{1,2,\dots,L\}$. We want to select $O$ as to maximize the difference between its temporal and Haar averages,
\begin{align}
    \Delta^{(1)}(T)&=\E_{0\leq t\leq T} \Big [\expval{O}{\psi(t)}\Big] - \E_{\phi\in \mathbb{P}(\mathbb{C}^d)}[\expval{O}{\phi} ] \nonumber \\& =\tr(O\rho_T^{(1)})-\tr(O\rho_{\textrm{Haar}}^{(1)}),
\end{align}
starting from $\ket{\psi(0)}=\ket{0}^{\otimes L}$, where $\rho_{T}^{(1)}=\E_{0\leq t< T}[\psi(t)]$ and $\rho_{\textrm{Haar}}^{(1)}=\mathds{1}/d$.  
 The maximum is achieved by computing the finite-time average 
$\rho_{T_\textrm{opt}}^{(1)}$ for a fixed $T_{\textrm{opt}}=10^3$~\footnote{Note that average is just a discrete sum because the drives consist purely of kicks at integer times.} and setting each coefficient to be 
  $J_S=\tr(S\rho_{T_\textrm{opt}}^{(1)}) - \text{tr}(S \rho_{\textrm{Haar}}^{(1)} )$~\footnote{We choose the normalization factor $M$ so to set the thermal fluctuations to unity $(\tr(O^2)-\tr(O)^2)/d=1$}. Note that the resulting observable $O$ is different for each driving protocol.

Figure~\ref{fig:03}a shows $\Delta^{(1)}(T)$ for the $2$\nobreakdash-body observable $O$ obtained by the procedure described above. We see that the temporal average remains distinguishable from the Haar average for both the Floquet and Cosine drives for times much beyond the optimization time $T_\textrm{opt}$ (vertical dashed line). In contrast, the corresponding quantity under the Fibonacci drive steadily decays toward the Haar average after the optimized time, as predicted by CHSE. 
This result shows that QEs in a many-body driven system leave a detectable signal at level of few-body expectation values, which is in accordance which our Theorems 1-3 ruling out CHSE is for such dynamics.

In Figure~\ref{fig:03}b we repeat a similar exercise now comparing the temporal correlations
$\E_{0\leq t\leq T}\Big[ \expval{S_1}{\psi(t)}\expval{S_2}{\psi(t)}\Big]$ against those from the Haar distribution $\!\E_{\phi\in \mathbb{P}(\mathbb{C}^d)}\!\Big[ \!\expval{S_1}{\phi}\!\!\expval{S_2}{\phi}\!\Big]$, with each $S_i$ of the form in Eq.~\eqref{K-body-ops}. These correlations can be written as $\tr(\rho^{(2)}_T S_1\otimes S_2)$ and $\tr(\rho^{(2)}_{\textrm{Haar}} S_1\otimes S_2)$, respectively, with $\rho^{(2)}_T=\!\E_{0\leq t< T}[\psi(t)^{\otimes 2}]$. Consequently, we consider a linear combination of correlators $O=\frac{1}{M}\sum_{S_1,S_2} J_{S_1,S_2} S_1\otimes S_2$, which maximizes the difference 
\begin{equation}
    \Delta^{(2)}(T)=\tr(O\rho_T^{(2)})-\tr(O\rho^{(2)}_{\textrm{Haar}})
\end{equation}
at $T_{\textrm{opt}}=10^3$ by picking $J_{S_1,S_2}=\text{tr}(\rho^{(2)}_{T_\textrm{opt}} S_1\otimes S_2) - \text{tr}(\rho_{\textrm{Haar}}^{(2)} S_1\otimes S_2)$ \footnote{Here, we normalize by selecting $M$ such that $(\text{tr}(O^2)-\text{tr}(O)^2)/d^2=1$}. 
Figure~\ref{fig:03}b shows that $\Delta^{(2)}$ for the Cosine and Floquet drives both eventually display a late-time plateau at a finite value, which indicates that there is a long-lived structure distinguishing them from Haar random which can be probed in the temporal correlations of these two-body observables. In contrast, $\Delta^{(2)}$ for the Fibonacci drive shows a steady decay toward the Haar moment.

We finally analyze the scaling of our results in terms of both the system size $L$ and size of the observable $K$. In Fig.~\ref{fig:03}c we show that the late-time signals for the Floquet and Cosine drive displayed in Fig.~\ref{fig:03}a decay exponentially in the system size $L$. This implies that a  measurement of these quantities in practical scenarios would be increasingly challenging, as the signal becomes exponentially weak. However, in Fig.~\ref{fig:03}d we also show an exponential improvement when increasing $K$. These $K$-body correlators can be experimentally probed by various techniques, including randomized-measurement approaches \cite{Elben2022}. It is an interesting future direction of this work to explore if the interplay between $K$ and $L$ could allow for a viable experimental procedure to measure the difference between the temporal and Haar moments of driven systems which violate CHSE.

\section{Many driving frequencies permit complete quantum ergodicity}
\label{sec:CUEModel}
In Sec.~\ref{sec:no-go-theorems}, we identified constraints on a $m$-time-quasiperiodic Hamiltonian's ability to uniformly cover either the Hilbert space (Theorem 2) or unitary space  (Theorem 3), under the assumption of existence of QEs. 
They tell us that a quantum system driven with too few tones cannot exhibit dynamical ergodicity: Namely, if $m <  O(d)$, CHSE is impossible; while if $m < O(d^2)$, CUE is impossible. 
Physically, this is sensible, as when the number of driving frequencies $m$ is small, dynamics will not be ``complex'' enough. However, this leaves open the obvious converse question: suppose $m$ is large enough. Then are there time-quasiperiodic systems that {\it do} exhibit CHSE or CUE?

In this section, we will answer this in the affirmative. We show how to construct explicit $m$-quasiperiodic Hamiltonians with $m=d^2-2$ tones that host QEs, and which provably satisfy CUE (and thus CHSE).
Together with the no-go theorems of the previous section, this leads us to the ``phase diagram'' depicted in Fig.~\ref{fig:resultssumm}.

\subsection{Single-qubit complete unitary ergodicity with $m=2$ driving frequencies}

We start with the case for a single qubit, with $m=2$, which will motivate the generalization for systems of arbitrary dimension.

Our key idea is to construct states $\ket{\alpha(\bm \theta)}$ ($\alpha=0,1$), parametrized by $\bm \theta=(\theta_1,\theta_2)$, that satisfy the CHSE condition, and then reverse engineer a Hamiltonian which has these states as its quasienergy eigenstates.  By imposing Eq.~\eqref{eq:kHSE_cond} on the states $\ket{\alpha(\bm \theta)}$, for all $k$, the resulting Hamiltonian will satisfy CHSE, but further CUE, which will motivate the generalization to $d>2$.

The CHSE condition requires the state $\alpha(t)=\dyad{\alpha(t)}$ to uniformly cover the Bloch sphere  $\mathbb{P}(\mathbb{C}^2)$. Because $\alpha(t)=\alpha(\bm \theta=\bm \omega t)$, we can achieve this by selecting $\bm \omega=(\omega_1,\omega_2)$ to be rationally independent, and ${\alpha(\bm \theta)}$ to be uniformly distributed  on  $\mathbb{P}(\mathbb{C}^2)$, as a function of the angles $\bm \theta$ on the torus $\mathbb{T}^2$.

First, we construct the state $\ket{0({\bm \theta})}$,  parametrized as
$$\ket{0(\bm \theta)}=\sqrt{p(\theta_1)}\ket{0}+\sqrt{1-p(\theta_1)}e^{-i\theta_2}\ket{1}.$$
To uniformly cover the Bloch sphere, the function $p(\theta_1)$ needs to be uniformly distributed in $[0,1]$ when $\theta_1$ is uniformly distributed in $[0,2\pi)$. This is achieved by any surjective function such that $\abs*{\tfrac{\dd p}{\dd \theta_1}}$ is almost-everywhere constant. Here, we consider $p(\theta_1)=\abs{1-\theta_1/\pi},$ which is  continuous on the circle.  The resulting map is depicted in Fig.~\ref{fig:4}. 
\begin{figure}[t]
    \centering
    \includegraphics[width=\columnwidth]{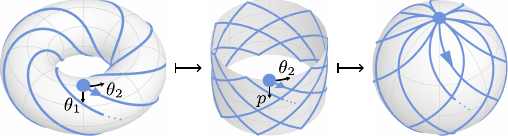}
    \caption{Transformation ${\bm \theta}\mapsto \ket{0(\bm \theta)}$. A pair of angles $\bm\theta=(\theta_1,\theta_2)$ in the torus (left) is mapped to $(p(\theta_1),\theta_2)$ in the cylinder (middle), which is further mapped to the state $\ket{0(\bm \theta)}=\sqrt{p}\ket{0}+e^{-i\theta_2}\sqrt{1-p}\ket{1}$ in the Bloch sphere (right). The blue line displays time evolution $\bm \theta=\bm \omega t$, with a blue disk marking $t=0$.}
    \label{fig:4}
\end{figure}

Having defined $\ket{0(\bm \theta)}$, we set $\ket{1(\bm \theta)}$ to be the orthogonal state
$$\ket{1(\bm \theta)}=\sqrt{p(\theta_1)}\ket{1}-\sqrt{1-p(\theta_1)}e^{i\theta_2}\ket{0}.$$ We now use these two states to construct a quasiperiodic Hamiltonian which has them as QEs.
We can write these two states as the columns of a unitary $P(\bm \theta)$, (i.e., $\ket{\alpha(\bm \theta)}=P(\bm \theta)\ket{\alpha}$), where
\begin{equation}
    P(\theta_1,\theta_2)=\begin{pmatrix}
        \cos  \xi(\theta_1) &&- \sin \xi(\theta_1) e^{i\theta_2}\\
         \sin \xi(\theta_1) e^{-i\theta_2}&& \cos\xi(\theta_1)
    \end{pmatrix},
\end{equation}
and $\xi(\theta_1)=\arccos\sqrt{p(\theta_1)}$. Then we choose any rationally independent driving frequencies $\bm \omega=(\omega_1,\omega_2)$ and quasienergy $q$ to {\it define} the evolution operator to be given by the generalized Floquet decomposition, Eq.~\eqref{eq:quasienergydecomp}, substituting $Q=\text{diag}(-q,q)$, $\theta_1=\omega_1t$, and $\theta_2=\omega_2t$,
\begin{equation}
\label{eq:evolutionopqubit}
    U(t)=\begin{pmatrix}
        \cos  \xi(\omega_1t)e^{iqt} &&- \sin \xi(\omega_1t) e^{i(\omega_2-q)t}\\
         \sin \xi(\omega_1t) e^{-i(\omega_2-q)t}&& \cos\xi(\omega_1t)e^{-iqt}
\end{pmatrix}.
\end{equation}
Finally, we can obtain the two-quasiperiodic Hamiltonian by the Schrödinger equation
$H(t) = i(\partial_t U(t))U(t)^{\dagger}.$ 

It turns out that the evolution given by Eq.~\eqref{eq:evolutionopqubit} not only satisfies CHSE, but the stronger CUE. This is because the transformation
\begin{equation}
    (\xi,\eta,\varphi)\mapsto \begin{pmatrix}
        \cos  \xi\,e^{i\eta } &&- \sin \xi\,e^{i\varphi}\\
         \sin \xi\,e^{-i\varphi}&& \cos\xi\,e^{-i\eta}
         \end{pmatrix}
\end{equation}
 is precisely the Euler-angle parametrization of the group $\operatorname{SU}(2)$, and furthermore the assignment $\xi=\arccos\sqrt{p(\theta_1)}$ makes it measure preserving, i.e. maps the Haar measure of the torus $\mathbb{T}^3\ni (\theta_1,\eta,\varphi)$ to the Haar measure of $\operatorname{SU}(2)$. Thus, upon substituting
\begin{align}\label{eq:SU(2)Euleragnleassignment}
    \xi=\arccos\sqrt{\abs{1-\omega_1 t/\pi}},&&\eta= qt, && \varphi=(\omega_2-q)t,
\end{align}
we guarantee that $U(t)$, in time, explores $\operatorname{SU}(2)$ uniformly. 

In the next section, we explain how to generalize this construction to $\operatorname{SU}(d)$, to obtain a $d$-dimensional quasiperiodic Hamiltonian which has QEs and satisfies CUE. That is, the time evolution operator uniformly explores the entire $\operatorname{SU}(d)$ space (the projective unitary space acting on a qudit of dimension $d$) over time.

\subsection{Qudit complete unitary ergodicity with $m=d^2-2$ driving frequencies}
By considering a specific sequence of rotations of the form 
$$R_j(\xi,\varphi,\eta)=\begin{pmatrix}
        \mathds{1}_{j-1} & & & \\
    & \cos \xi e^{i \eta} &-\sin \xi e^{i \varphi} & \\
    & \sin \xi e^{-i \varphi} & \cos \xi e^{-i \eta} & \\
    & & & \mathds{1}_{d-j-1}
    \end{pmatrix},$$
one can construct Hurwitz's parametrization of $\operatorname{SU}(d)$, in terms of $d^2-1$ Euler angles  \cite{Hurwitz1897,Zyczkowski1994,Diaconis2017}. We utilize this parametrization to construct an $m$-quasiperiodic drive which satisfies CUE and has QEs, with $m=d^2-2$. This is done by explicitly defining the evolution operator $U(t)$ in the generalized Floquet decomposition form [Eq.~\eqref{eq:quasienergydecomp}]. By assigning each Euler angle to a function of the driving frequencies and the quasienergies, we guarantee that $U(t)$ uniformly explores $\operatorname{SU}(d)$ in time. The assignment for the Euler angles is a generalization of Eq.~\eqref{eq:SU(2)Euleragnleassignment}, where the $d^2-1$ Euler angles are written in terms of the $m=d^2-2$ driving angles $\omega_1t,\dots,\omega_mt$, and one of the quasienergies. The details of this construction are left to  Appendix~\ref{app:CUEdrive}. 

 The driving frequencies and quasienergies can be selected so that the corresponding winding in the torus is {\it equidistributed} \cite{Ramamoorthy2008}, and consequently the trace distance between the finite-time temporal moments and the corresponding Haar moments decay like $1/T$. This power law is quadratically faster than the $1/\sqrt{T}$ decay one gets from independent random sampling, so this construction might be useful for producing quasirandom states or unitaries for quasi-Monte Carlo integration \cite{Morokoff1995}.
 
In our construction, only one quasienergy is related to one of the Euler angles, and the remaining quasienergy degrees of freedom are just averaged out in time. We leave as an open question if it is possible to utilize all the $d-1$  quasienergy degrees of freedom. If the answer is positive, this would decrease the required number of driving angles to $m= d^2-d$, saturating the bound given by Theorem~\ref{th:03} and removing the white sliver in the phase diagram in Fig.~\ref{fig:resultssumm}. If the answer is negative, then the bound in Theorem~\ref{th:03} could potentially be strengthened.

\section{Quantum $k$-ergodicity in time-periodic systems}
\label{sec:finiteHSEPeriodicsystems}
Theorems~\ref{th:01}, \ref{th:02}, and \ref{th:03} show that the existence of quasienergy eigenstates forbids the achievability of the most stringent forms of dynamical ergodicity: CHSE and CUE. It is natural to ask if there are similar obstructions to quantum ergodicity if one relaxes to finite moments, as in the notions of $k$\nobreakdash-HSE and $k$\nobreakdash-UE, introduced in Sec.~\ref{sec:CHSE}. Surprisingly, we show here that $k$\nobreakdash-HSE and $k$\nobreakdash-UE can be reached even by time-periodic Hamiltonians, corresponding to the minimal $m=1$ time-periodic or Floquet case.

The achievability of finite $k$\nobreakdash-UE in time-periodic systems can be understood from the existence of finite $k$-unitary designs in quantum information theory \cite{Seymour1984,Scott2008} --- an ensemble of a \emph{finite} number of unitaries which reproduces the Haar measure up to the $k$th statistical moment (see Appendix \ref{app:designs} for more details). 

Utilizing the fact that finite unitary $k$-designs exist, we may construct a periodic sequence of rotations which satisfies $k$\nobreakdash-UE, over discrete time. The construction proceeds as follows: For any $k$, let $\mathcal{D}_k=\{V_0, V_1,\dots,V_{n-1}\}$ be a finite unitary $k$-design with $n$ elements, which can be selected so that $V_0=\mathds{1}$ by otherwise applying $V_0^\dagger$ to all of its elements. We define a periodic drive by applying a sequence of gates such that the evolution operator cycles through $\mathcal{D}_k$. 

At time every integer time $t= j \mod n$, we apply the unitary $V_{j}V_{j-1}^\dagger$. Then, the evolution operator satisfies $U(t)=V_{j}$. In this case, the integral in the left-hand side of Eq.~\eqref{eq:UEDef} which defines  $k$\nobreakdash-UE, can be rewritten in terms of the series 
$$\lim_{N\to \infty}\frac{1}{N+1}\sum_{t=0}^N U(t)^{\otimes k,k} =\E_{\mathrm{Haar}}[{W^{\otimes k,k}}].$$
Note that this evolution has period $T=n$, and it can achieve  the $k$\nobreakdash-UE condition with $B={\int_0^T dt \|H(t)\|_\infty \le n \pi}$, where the time-periodic Hamiltonian $H(t)$ consists of a sequence of infinite-strength kicks $H_j(t)=i\delta(t-j)\mathrm{log}(V_{j}V_{j-1}^\dagger)$ which satisfy $\int dt \|H_j(t)\|_\infty\le \pi $. This is consistent with the bound on $B$ given by Eq.~\eqref{eq:bound_on_B}, as $n$ has to be sufficiently large in order for $\mathcal{D}_k$ to form a unitary $k$-design.

In the construction above $k$\nobreakdash-UE is achieved by a periodic sequence of gates, in discrete time. The Hamiltonian discontinuously drives the state around $\mathbb{P}(\mathbb{C}^d)$.  It is, however, interesting to ask if the same level of ergodicity can be achieved when the evolution is continuous, or even smooth. In what is left of this section, we present some examples to show that the answer is positive.

We provide examples of continuous time-periodic systems that satisfy $k$\nobreakdash-HSE and $k$\nobreakdash-UE. We start with a qubit, $d=2$. In this case, $k$\nobreakdash-UE is completely characterized by the time trajectory of a single quasienergy eigenstate since the trajectory of the remaining state is determined by their orthogonality. In a single-qubit Hamiltonian $H(t)$, if one quasienergy eigenstate $\ket{\alpha(\theta)}$ ($\alpha=0$ or $\alpha=1$) satisfies the $k$\nobreakdash-HSE condition \begin{equation}\label{eq:condkHSEeigenstate}
    \E_{\theta\in\mathbb{T}}[\alpha(\theta)^{\otimes k}]=\E_{\phi\in\mathbb{P}(\mathbb{C}^2)}[\phi^{\otimes k}],
\end{equation} and the corresponding quasienergy and driving frequency are rationally independent, then $H(t)$ satisfies $k$\nobreakdash-UE (see Corollary~\ref{cor:singleeigenstateimpliesUE}). Thus, constructing a single-qubit time-periodic drive which satisfies $k$\nobreakdash-UE reduces to designing a closed curve $\alpha(\theta)$ in $\mathbb{P}(\mathbb{C}^2)$ which satisfies Eq.~\eqref{eq:condkHSEeigenstate}, from which one can construct the evolution operator by the Floquet decomposition $U(t)=P(\omega t)e^{-i \text{diag}(-q,q)}$ with $P(\theta)=\sum_{\alpha=0}^1 \dyad{\alpha(\theta)}{\alpha}$, where $q,\omega$ are chosen to be rationally independent.

\begin{figure}
    \centering
    \includegraphics{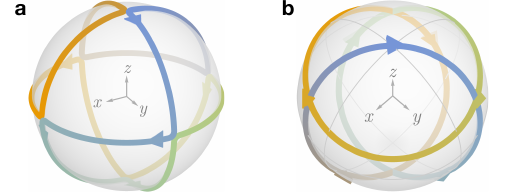}
    \caption{Single-qubit quasienergy eigenstate of a time-periodic drive that satisfies $k$-UE, for {\bf a} $k=3$ and {\bf b} $k=2$. In {\bf a}, the right-angle corners form  six-state $3$-design,  and, in {\bf b}, arrows mark a seven-state $2$-design.}
    \label{fig:5}
\end{figure} 
We use two approaches to find curves $0(\theta)\in \mathbb{P}(\mathbb{C}^d)$ that satisfy Eq.~\eqref{eq:condkHSEeigenstate}. In Fig.~\ref{fig:5}a, we show a continuous curve constructed to interpolate through the six-state $3$-design $\{\ket{0},\ket{1}, (\ket{0}\pm \ket{1})/\sqrt{2},(\ket{0}\pm i\ket{1})/\sqrt{2}\}$, via the great circles of the Bloch sphere. Equation~\eqref{eq:condkHSEeigenstate} is easily shown to hold for $k=3$ by explicit integration. This curve is not entirely differentiable, resulting in a Hamiltonian with discontinuous time dependence which satisfies $3$-UE. Alternatively, the curve shown Fig.~\ref{fig:5}b is obtained by solving Eq.~\eqref{eq:condkHSEeigenstate} for $k=2$ in Fourier space (see Appendix~\ref{app:Fourier}), which yields an analytic curve that turns out to interpolate through a seven-state $2$-design. The corresponding Hamiltonian has analytic time dependence, but satisfies only $2$-UE. The values of $B=\int_0^T \dd t\norm{H(t)}_\infty$ for the drives shown in Figs.~\ref{fig:5}a and b satisfy $B\geq3\pi$ and $B\geq 8.296$, respectively \footnote{These lower bounds are given by the trace distance traversed by the trajectories in Figs.~\ref{fig:5}a and b (see Corollary \ref{cor:dis_traveled_bound}).}. 

In Appendix~\ref{app:Fourier} we construct a time-periodic analytic Hamiltonian which satisfies $1$-HSE in any dimension. This is done again by going into Fourier space.  We believe the Fourier approach may generalize to arbitrary $k$, and might allow to explicitly find  time-periodic Hamiltonians which satisfy $k$\nobreakdash-HSE, or even $k$\nobreakdash-UE, where the time dependence is smooth, although more analytical understanding is required in this direction, which we leave open (see Appendix~\ref{app:Fourier} for more details).

\section{Summary and discussion}
\label{sec:discussion}
In this work, we have introduced and studied novel dynamical notions of quantum ergodicity defined in terms of statistical similarities of the temporal ensemble of states or unitaries to their respective uniform spatial ensembles. These are dubbed Hilbert-space ergodicity (HSE) and unitary ergodicity (UE),  and define a hierarchical tower of quantum ergodicities based on equivalence at different levels of moments $k$. 
In the limit of $k\to \infty$, we obtain complete Hilbert-space ergodicity (CHSE) and complete unitary ergodicity (CUE), in which the temporal distribution of initial states and  evolution operators, respectively, are exactly equal to the respective uniform Haar distribution. 
We studied the achievability of HSE and UE in the class of quasiperiodically driven systems driven by $m$ fundamental tones assuming the existence of quasienergy eigenstates, and proved that CHSE and CUE are not achievable in Floquet systems, as well as in quasiperiodically driven $d$-dimensional systems if $m< 2(d-1)$ and $m< d(d-1)$, respectively. Conversely, we provided examples of drives satisfying CUE (and hence CHSE) with $m=d^2-2$. We finally showed that a more relaxed form of quantum ergodicity, $k$\nobreakdash-HSE and $k$\nobreakdash-UE for some fixed $k$, can be achieved even by Floquet systems with driving periods that are long enough. 

Besides representing an important step toward a   unifying notion of quantum ergodicity and chaos applicable across different classes of quantum dynamics, 
our work has  several conceptual and technical implications. For one, our dynamical notions of ergodicity provide a   framework to understand the emergence of thermalization in extended driven systems, without reference to eigenstates like in the eigenstate thermalization hypothesis. For example, a system exhibiting $1$-HSE is such that the infinite-time average of any observable is equal to its expectation value at infinite temperature. Moreover, the higher levels of $k$\nobreakdash-HSE and $k$\nobreakdash-UE imply that the system at almost all times is locally maximally mixed, and furthermore the ensemble of pure quantum states which make up a local subsystem itself forms a quantum state $k'$-design, for some moment $k'$ related to $k$ \cite{Cotler2023,Pilatowsky2023}, a recently uncovered stronger form of quantum thermalization called ``deep thermalization'' \cite{Claeys2022,Wilming2022,Ho2022,Ippoliti2022,Choi2023,Cotler2023,Ippoliti2023,Lucas2023,Shrotriya2023}.

Our  results also provide an avenue to partially answer the open question whether a quasiperiodically driven system exhibits quasienergy eigenstates (QEs) or not, which is the mathematical question of reducibility of quantum dynamics. Physically, it corresponds to the question of localization versus delocalization of a driven system when mapped to the so-called extended Hilbert space \cite{Sambe1973,Ho1983} (frequently referred to as the ``frequency lattice"). 
As we have seen, the presence of QEs is incompatible with CHSE and CUE in large-dimensional systems, and so a demonstration of CHSE or CUE would preclude the existence of QEs within a given model. For instance, in Ref.~\cite{Pilatowsky2023}, it was shown that the family of $(m=2)$\nobreakdash-quasiperiodically driven systems called Fibonacci drives provably satisfies CUE in any dimension. This result, compounded with our Theorem~\ref{th:03}, implies that these  drives cannot be reducible, a nontrivial mathematical statement, and further suggests the computational complexity required to describe such a system grows unboundedly with time, owing to the lack of regular structure of quantum dynamics. 

There are several open questions arising from our work. First, our work relates two notions of complexity of quantum dynamics: (i) the number of driving frequencies $m$ underlying a driven Hamiltonian, and (ii) the degree of  ergodicity exhibited by dynamics, captured by the moment $k$ in HSE or UE. An immediate interesting question is the connection of these notions of complexity to other existing notions, such as the Krylov \cite{Parker2019} or circuit \cite{Brown2018,Haferkamp2022} complexities of quantum dynamics. These have been recently studied in periodically driven systems~\cite{Giancarlo2020,Suchsland2023}. 
Second, the question of typicality deserves to be addressed: While we have provided explicit constructions of quasiperiodically driven Hamiltonians provably exhibiting HSE and UE,  is such ergodicity expected to hold more in {\it generic} quasiperiodically driven  systems?
Relatedly, beginning from a system that does exhibit $k$\nobreakdash-HSE and $k$\nobreakdash-UE, are these properties robust against noise and perturbations to the driving Hamiltonian, i.e., can we  define universality classes of ergodic behavior? 
We leave the exploration of such interesting questions to future work.

\begin{acknowledgments}
We thank C.~Dag, D.~Mark, D.~Long, and A.~Chandran for insightful conversations. This work is partly supported by the Center for Ultracold Atoms (an NSF Physics Frontiers Center) PHY-1734011, NSF CAREER (DMR-2237244), NSF STAQ (PHY-1818914), the DARPA ONISQ program (W911NF201002), NSF PHY-2046195, and NSF QLCI grant OMA-2120757. W.~W.~H.~is supported by the NRF Fellowship, NRF-NRFF15-2023-0008, and the CQT Bridging Fund. 
\end{acknowledgments}

\appendix
  \newtheorem{apptheorem}{Theorem}
    \newtheorem{corollary}{Corollary}
    \newtheorem{lemma}{Lemma}
    \newtheorem{prop}{Proposition}
    \newtheorem{appdefinition}{Definition}

\let\oldsection\section
\renewcommand{\section}[1]{\oldsection{#1}\setcounter{apptheorem}{0}
\renewcommand\theHapptheorem{\theapptheorem}
\setcounter{corollary}{0}
\renewcommand\theHcorollary{\thecorollary}
\setcounter{lemma}{0}
\renewcommand\theHlemma{\thelemma}
\setcounter{prop}{0}
\renewcommand\theHprop{\theprop}
\setcounter{appdefinition}{0}
\renewcommand\theHappdefinition{\theappdefinition}
}
\renewcommand{\theapptheorem}{\Alph{section}\arabic{apptheorem}}
\renewcommand{\thecorollary}{\Alph{section}\arabic{corollary}}
\renewcommand{\thelemma}{\Alph{section}\arabic{lemma}}
\renewcommand{\theprop}{\Alph{section}\arabic{prop}}
\renewcommand{\theappdefinition}{\Alph{section}\arabic{appdefinition}}

\section{Quantum ergodicity by design}
In this appendix, we introduce the notions of state and unitary $k$-designs from quantum information  theory. We state some of their properties and utilize them to prove the relations between the different levels of HSE and UE described in the main text.

We begin with the notion of state $k$-design, which underpins HSE.
\label{app:designs}
\begin{appdefinition}[State $k$-design]
    A probability measure $\mu$ over $\mathbb{P}{(\mathbb{C}}^d)$ is a (state) $k$-design if
    \begin{equation}
        \E_{\psi \sim \mu}[\psi^{\otimes k}]=\E_{\phi\in \mathbb{P}{(\mathbb{C}}^d)}[\phi^{\otimes k}].
    \end{equation}
\end{appdefinition}
\noindent The right-hand side is to be understood as the  expectation value with respect to the invariant measure induced by the Haar measure of the unitary group. It can be calculated explicitly using Schur’s lemma of representation theory, \begin{equation}
\label{eq:rho_Haar_symm_proj}
\rho_{\text{Haar}}^{(k)}\coloneqq\E_{\phi\in\mathbb{P}(\mathbb{C}^d)}[\phi^{\otimes{k}}]=\Pi_{\mathrm{sym}}^{(k)}\frac{(d-1)!\, k!}{(d+k-1)!},
\end{equation}  where $\Pi_{\mathrm{sym}}^{(k)}$ is the orthogonal projector into  the symmetric subspace of $(\mathbb{C}^d)^{\otimes k}$, obtained by averaging the operators $V_\pi$ which permute the $k$ tensors, $V_\pi \bigotimes_{j=1}^k \ket{\psi_j}  =\bigotimes_{j=1}^k \ket*{\psi_{\pi(j)}}$, over all permutations $\pi$ in the symmetric group of $k$ elements $\mathcal{S}_k$  \cite{Harrow2013}, \begin{equation}
\label{eq:symm_proj}
    \Pi_{\mathrm{sym}}^{(k)}=\frac{1}{k!}\sum_{\pi \in \mathcal{S}_k} V_\pi.
\end{equation} 

For HSE, we are interested in the case where $\mu$ is the state temporal ensemble, in continuous time,  $\mu_{\text{time}}=\lim_{T\to \infty}\frac{1}{T}\int_0^T\dd{t} \delta_{\psi(t)}$, with $\delta_{\psi(t)}$ the Dirac measure centered at $\psi(t)$. Simply, $k$\nobreakdash-HSE is  the statement that $\mu_{\text{time}}$ forms a $k$-design. 

Note that the assumption that the limit ${T\to \infty}$ exists is implicit in the definition of  $k$\nobreakdash-HSE. There are examples of dynamics where this average may fail to converge. Nevertheless, if the Hamiltonian is quasiperiodic and has quasienergy eigenstates, then $\mu_{\textrm{time}}$ is guaranteed to exist. This is because $\ket{\psi(t)}$, when expanded in the quasienergy-eigenstate basis, is seen to be $n$-quasiperiodic, for some integer $n$ that depends on the rational dependence of the quasienergy and driving frequencies, and then by the Kronecker-Weyl theorem, $\mu_{\textrm{time}}=(2\pi)^{-n}\int_{\mathbb{T}^n}\dd{\bm \theta} \delta_{\psi(\bm \theta)}$.

One simple way to verify if a probability measure $\mu$ forms a $k$-design is via the so-called frame potential
\begin{equation}
    \mathcal{F}^{(k)}_{\mu}\coloneqq \E_{\psi,\phi\sim \mu}[\abs{\braket{\psi}{\phi}}^{2k}]. \label{eq:state_frame_pot}
\end{equation}
It can be shown that $\mu$ forms a $k$-design if and only if $\mathcal{F}_{\mu}^{(k)}=\mathcal{F}_{\text{Haar}}^{(k)}=\frac{(d-1)!\, k!}{(d+k-1)!}$ (see Proposition~38 in Ref.~\cite{Mele2023}). In the particular case of the temporal ensemble $\mu_{\text{time}}$ with initial state $\psi(0)$, the frame potential is given by
$$\mathcal{F}_{\text{time}}^{(k)}= \E_{t,t'\geq 0}[\abs{\braket{\psi(t')}{\psi(t)}}^{2k}],
$$
which is equal to the Haar frame potential if and only if the system satisfies $k$\nobreakdash-HSE.

Now we introduce unitary $k$-designs, which provide the framework of UE.
\begin{appdefinition}[Unitary $k$-design]
    A probability measure $\nu$ over $\operatorname{U}(d)$ is a unitary $k$-design 
    \begin{equation}
        \E_{U\sim \nu}[U^{\otimes k,k}]=\E_{\mathrm{Haar}}[V^{\otimes k,k}].
    \end{equation}
\end{appdefinition}
\noindent The right-hand side denotes average over the Haar measure of $\operatorname{PU}(d)$, which can be constructed by sampling Haar $V$ from $\operatorname{U}(d)$ or $\operatorname{SU}(d)$, and projecting into $\operatorname{PU}(d)$ by taking the tensor product $V^*\otimes V$.

Unitary $k$-ergodicity ($k$\nobreakdash-UE) is the statement that the unitary operator temporal ensemble $\nu_{\text{time}}=\lim_{T\to \infty}\frac{1}{T}\int_0^T\dd{t} \delta_{U(t)}$ forms a unitary $k$-design. As before, this ensemble is guaranteed to converge under a quasiperiodic Hamiltonian with quasienergy eigenstates.

A probability measure $\nu$ is a unitary $k$-design if the frame potential 
$$\mathcal{F}^{(k)}_{\nu}\coloneqq \E_{U,V \sim \nu}[\abs{\tr(U^\dagger V)}^{2k}]$$ is equal to the Haar frame potential (Lemma 33 in Ref.~\cite{Mele2023})
$$\mathcal{F}^{(k)}_{\text{Haar}}\coloneqq \E_{W,V\sim \text{Haar}}[\abs{\tr(W^\dagger V)}^{2k}]=\E_{V\sim \text{Haar}}[\abs{\tr(V)}^{2k}].$$ The unitary frame potential for the temporal ensemble is given by 
$\mathcal{F}^{(k)}_{\text{time}}= \E_{t\geq 0,t'\geq 0}[\abs{\tr(U(t')^\dagger U(t))}^{2k}]$, which is equal to the unitary Haar frame potential if and only if the system satisfies $k$\nobreakdash-UE.

There are two basic properties of designs, which we state below, which allow us to prove the relations between the different levels of the hierarchies of quantum ergodicity. 

\begin{prop}[{$k$-designs are $k'$-designs if $k'\leq k$}] 
\label{prop:kdesignsarekm1designs}
Let $\mu$ be a state (unitary) $k$-design. Then $\mu$ is a state (unitary) $k'$-design for all $k'\leq k$ (Observation 29 in Ref.~\cite{Mele2023}).
\end{prop}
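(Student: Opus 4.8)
The plan is to prove both cases (state designs and unitary designs) by \emph{moment reduction}: starting from the $k$-design identity, I apply a $\psi$- (respectively $U$-) independent linear contraction to one tensor factor, commute it through the expectation by linearity, and thereby descend from the $k$-th moment to the $(k-1)$-th. Iterating this single step downward then yields the $k'$-design condition for every $k'\le k$, so the content of the proof is entirely in constructing the correct one-step reduction.

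For the state case the reduction is a partial trace. Applying $\tr_k$ (tracing out the last copy) to both sides of $\E_{\psi\sim\mu}[\psi^{\otimes k}]=\E_{\phi}[\phi^{\otimes k}]$, and using linearity of the expectation to move the partial trace inside, the left-hand side becomes $\E_{\psi\sim\mu}[\psi^{\otimes(k-1)}\tr(\psi)]=\E_{\psi\sim\mu}[\psi^{\otimes(k-1)}]$ since $\tr(\psi)=1$ for a normalized pure state, and the right-hand side reduces identically. This is exactly the $(k-1)$-design condition, and a downward induction on the number of traced-out copies gives the claim for all $k'\le k$.

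For the unitary case the naive partial trace fails, since $\tr(U^*\otimes U)=\abs{\tr U}^2$ depends on $U$ and therefore cannot be pulled out of the expectation. The fix is to contract the chosen doubled copy not with the identity but with the (unnormalized) maximally entangled vector $\ket{\Omega}=\sum_i\ket{i}\otimes\ket{i}$. The key computation is that $\ket{\Omega}$ is a fixed vector, $(U^*\otimes U)\ket{\Omega}=\ket{\Omega}$, an immediate consequence of unitarity $U^\dagger U=\mathds{1}$, so that $\bra{\Omega}(U^*\otimes U)\ket{\Omega}=\braket{\Omega}{\Omega}=\tr(U^\dagger U)=d$ is manifestly independent of $U$. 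Applying the map $\mathrm{id}^{\otimes(k-1)}\otimes\big(\bra{\Omega}\,\cdot\,\ket{\Omega}\big)$ to the last of the $k$ doubled copies on both sides of $\E_{U\sim\nu}[(U^*\otimes U)^{\otimes k}]=\E_{\mathrm{Haar}}[(V^*\otimes V)^{\otimes k}]$ then produces a factor $d$ times the $(k-1)$-th moment on each side; dividing by $d$ and inducting downward gives the $k'$-design condition.

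The only genuine subtlety, and hence the main obstacle, is the unitary case: one must recognize that the right $U$-independent contraction is the maximally entangled state rather than a plain partial trace, and verify the fixed-vector identity above (the vectorized form of the trace-preservation of the channel $\rho\mapsto U\rho U^\dagger$). Once that fact is established, linearity of the expectation together with the straightforward induction on the moment order closes both cases uniformly.
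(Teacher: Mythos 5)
Your proposal is correct: the partial-trace reduction handles the state case, and your contraction of the last doubled copy with $\ket{\Omega}=\sum_i\ket{i}\otimes\ket{i}$, using the fixed-vector identity $(U^*\otimes U)\ket{\Omega}=\ket{\Omega}$, is exactly the vectorized form of feeding the identity into the last tensor slot of the $k$-fold twirl and tracing it out. The paper itself gives no proof here --- it defers to Observation~29 of Ref.~\cite{Mele2023} --- and your argument is essentially the standard one found there, so there is nothing to add.
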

\begin{prop}[A unitary $k$-design acted on a state forms a state $k$-design]
\label{prop:unitarydesignsgeneratestatedesigns}
    Let $\nu$ be a unitary $k$-design. For a fixed state $\ket{\psi}$ let $\nu_\psi$ be  the probability distribution on $\mathbb{P}(\mathbb{C}^d)$ that results from applying a $\nu$-distributed unitary to  $\ket{\psi}$.
     Then $\nu_\psi$ is a state $k$-design (Ref.~\cite{Mele2023}, p.~25).
\end{prop}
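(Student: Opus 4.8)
The plan is to recognize the $k$-th moment of $\nu_\psi$ as a \emph{twirl} of the fixed operator $\psi^{\otimes k}$, and then invoke the defining moment condition of a unitary $k$-design. Writing $\phi = U\psi U^\dagger$ for $U\sim\nu$ (where $\psi=\dyad{\psi}$), I would first expand
\[
\E_{\phi\sim\nu_\psi}[\phi^{\otimes k}] = \E_{U\sim\nu}\big[(U\psi U^\dagger)^{\otimes k}\big] = \E_{U\sim\nu}\big[U^{\otimes k}\,\psi^{\otimes k}\,(U^\dagger)^{\otimes k}\big],
\]
so that the left-hand side is exactly the image of $\psi^{\otimes k}$ under the $\nu$-twirl superoperator $\mathcal{T}_\nu(X) := \E_{U\sim\nu}[U^{\otimes k} X (U^\dagger)^{\otimes k}]$ acting on $(\mathbb{C}^d)^{\otimes k}$.

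The crux is to show $\mathcal{T}_\nu = \mathcal{T}_{\mathrm{Haar}}$, i.e.\ that the $k$-design hypothesis forces the twirl channel to coincide with its Haar counterpart on \emph{every} input. I would establish this by vectorization: under the standard isomorphism sending a map $X\mapsto AXB$ to the operator $A\otimes B^{T}$ acting on $\mathrm{vec}(X)$, the twirl $\mathcal{T}_\nu$ is represented by $\E_{U\sim\nu}[U^{\otimes k}\otimes (U^{*})^{\otimes k}]$, since $\big((U^\dagger)^{\otimes k}\big)^{T} = (U^{*})^{\otimes k}$. Up to a fixed permutation of the tensor legs, this is precisely the moment operator $\E_{U\sim\nu}[U^{\otimes k,k}]$ with $U^{\otimes k,k}=(U^{*}\otimes U)^{\otimes k}$ that appears in the definition of a unitary $k$-design. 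Hence the hypothesis $\E_{U\sim\nu}[U^{\otimes k,k}] = \E_{\mathrm{Haar}}[V^{\otimes k,k}]$ is equivalent to $\mathcal{T}_\nu = \mathcal{T}_{\mathrm{Haar}}$, and in particular the two channels agree when fed $X=\psi^{\otimes k}$.

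With this the conclusion is immediate:
\[
\E_{\phi\sim\nu_\psi}[\phi^{\otimes k}] = \mathcal{T}_\nu(\psi^{\otimes k}) = \mathcal{T}_{\mathrm{Haar}}(\psi^{\otimes k}) = \E_{V\sim\mathrm{Haar}}\big[(V\psi V^\dagger)^{\otimes k}\big],
\]
and the right-hand side equals $\rho_{\mathrm{Haar}}^{(k)} = \E_{\phi\in\mathbb{P}(\mathbb{C}^d)}[\phi^{\otimes k}]$ because the spatial ensemble is precisely the orbit of any fixed reference state under Haar-random unitaries (here taking $\psi$ itself as the reference). Therefore $\nu_\psi$ reproduces the Haar $k$-th moment, which is exactly the statement that $\nu_\psi$ is a state $k$-design.

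I expect the only genuinely delicate point to be the vectorization and tensor-leg reordering in the second paragraph, namely making precise that the moment operator $\E_{U\sim\nu}[U^{\otimes k,k}]$ and the vectorized twirl $\E_{U\sim\nu}[U^{\otimes k}\otimes (U^{*})^{\otimes k}]$ carry the same data; once this identification is in place, the remaining steps are routine.
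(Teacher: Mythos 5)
Your proof is correct: expanding the $k$th moment of $\nu_\psi$ as the twirl $\E_{U\sim\nu}\big[U^{\otimes k}\,\psi^{\otimes k}\,(U^\dagger)^{\otimes k}\big]$, observing that the vectorized twirl and the moment operator $\E_{U\sim\nu}[U^{\otimes k,k}]$ carry the same data up to a fixed permutation of tensor legs (so the design hypothesis forces $\mathcal{T}_\nu=\mathcal{T}_{\mathrm{Haar}}$), and then using unitary invariance of the spatial ensemble is exactly the standard argument. The paper offers no in-line proof of this proposition, deferring to Ref.~\cite{Mele2023} (p.~25), where the same twirl-based reasoning is given, so your route coincides with the intended one.
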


\begin{corollary}[Arrows in Fig.~\ref{fig:hier}]\label{cor:relationsHSEandUE}
In any time-dependent system, the following implications hold.
\begin{enumerate}
    \item[{\normalfont(a)}] $\forall k\geq k'\colon\,\, k$-HSE $\implies$ $k'$-HSE,
    \item[{\normalfont(b)}] $\forall k\geq k'\colon \,\,k$-UE $\,\,\,\implies$ $k'$-UE,
    \item[{\normalfont(c)}] $\forall k\in\mathbb{N}\,\colon\,\, k$-UE $\,\,\,\implies$ $k$\nobreakdash-HSE,
        \item[{\normalfont(d)}] CUE $\,\,\,\implies$ CHSE.
\end{enumerate}
\end{corollary}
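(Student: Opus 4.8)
The plan is to derive all four implications directly from the two structural facts about designs stated earlier, namely Proposition~\ref{prop:kdesignsarekm1designs} ($k$-designs are $k'$-designs for $k' \le k$) and Proposition~\ref{prop:unitarydesignsgeneratestatedesigns} (a unitary $k$-design acting on a fixed state produces a state $k$-design), together with the identifications that $k$\nobreakdash-HSE means $\mu_{\text{time}}$ is a state $k$-design and $k$\nobreakdash-UE means $\nu_{\text{time}}$ is a unitary $k$-design. No new computation should be needed; the whole statement is a bookkeeping exercise translating facts about abstract measures into statements about the temporal ensembles.

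For items (a) and (b) I would argue as follows. By definition, $k$\nobreakdash-HSE asserts that for every initial state the temporal ensemble $\mu_{\text{time}}$ is a state $k$-design; applying Proposition~\ref{prop:kdesignsarekm1designs} to $\mu_{\text{time}}$ immediately yields that it is also a $k'$-design for every $k' \le k$, which is precisely $k'$-HSE. Item (b) is identical with $\nu_{\text{time}}$ and the unitary version of Proposition~\ref{prop:kdesignsarekm1designs} in place of the state version. Since the ``for any initial state'' quantifier is carried along unchanged, both implications follow verbatim.

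For item (c), the key observation is that $k$\nobreakdash-UE says $\nu_{\text{time}}$ is a unitary $k$-design. Fix an arbitrary initial state $\ket{\psi(0)}$. The state temporal ensemble $\mu_{\text{time}}$ obtained from this initial state is exactly the pushforward of $\nu_{\text{time}}$ under $U \mapsto U\psi(0)U^\dagger$, i.e.\ the measure $\nu_{\psi(0)}$ of Proposition~\ref{prop:unitarydesignsgeneratestatedesigns}. That proposition then guarantees $\nu_{\psi(0)}$ is a state $k$-design, which is the $k$\nobreakdash-HSE condition for this initial state; since $\ket{\psi(0)}$ was arbitrary, $k$\nobreakdash-UE $\implies$ $k$\nobreakdash-HSE. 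The only point requiring care is the identification of $\mu_{\text{time}}$ with the pushforward of $\nu_{\text{time}}$, which follows because both are defined by the same long-time average and the map $U\mapsto U\psi(0)U^\dagger$ is continuous, so averaging then projecting coincides with projecting then averaging.

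Finally, item (d) follows by taking the limit $k \to \infty$ in item (c): CUE is by definition $k$\nobreakdash-UE for all $k$, which by (c) gives $k$\nobreakdash-HSE for all $k$, which is precisely CHSE. I do not anticipate a genuine obstacle here; the mildest subtlety, and the one I would be most careful to state cleanly, is the measure-pushforward identification used in (c), since everything else is a direct invocation of the two propositions. I would present the four items compactly, perhaps reusing the argument of (c) to obtain (d) in a single sentence rather than re-deriving it.
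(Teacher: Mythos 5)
Your proposal is correct and follows exactly the paper's route: items (a) and (b) from Proposition~\ref{prop:kdesignsarekm1designs} applied to the temporal ensembles, item (c) from Proposition~\ref{prop:unitarydesignsgeneratestatedesigns}, and item (d) as an immediate consequence of (c). Your explicit identification of $\mu_{\text{time}}$ with the pushforward of $\nu_{\text{time}}$ under $U\mapsto U\psi(0)U^\dagger$ is a point the paper leaves implicit, and stating it is a welcome clarification rather than a deviation.
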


\begin{proof}
Properties~(a) and (b) follow from Proposition~\ref{prop:kdesignsarekm1designs}, applied to the state and unitary operator temporal ensembles. Property (c) follows directly from Proposition~\ref{prop:unitarydesignsgeneratestatedesigns}. Property (d) is an immediate consequence of property (c).
\end{proof}

Corollary~\ref{cor:relationsHSEandUE} tells us that $k$\nobreakdash-UE (CUE) is a stronger property than $k$\nobreakdash-HSE (CHSE). It is natural to ask if it is strictly stronger. In the particular case of a qubit, $k$\nobreakdash-HSE and $k$\nobreakdash-UE are equivalent. The reason is the following property of $k$-designs in qubits, which is a converse for Proposition~\ref{prop:unitarydesignsgeneratestatedesigns}.

\begin{apptheorem}
\label{th:qubitallstatekdesignisunitarydesign}
    Let $\nu$ be a probability measure on $ \operatorname{U}(2)$ such that for any state $\psi\in \mathbb{P}(\mathbb{C}^2)$, the state distribution $\nu_\psi$ on $\mathbb{P}(\mathbb{C}^2)$ that results from applying a $\nu$-distributed unitary to  $\psi$ forms a state $k$-design. Then $\nu$ is a unitary $k$-design.  
\end{apptheorem}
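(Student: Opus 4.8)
The plan is to recast both the hypothesis and the conclusion as statements about moment operators in the irreducible representations of $\operatorname{PU}(2)\cong\mathrm{SO}(3)$, and then to show that the two coincide because for a qubit the relevant representation ranges match exactly. I would write $\bar\nu$ for the measure on $\operatorname{PU}(2)$ induced by $\nu$; only this image matters for both design conditions, since $U\psi U^\dagger$ and $U^{\otimes k,k}=(U^*\otimes U)^{\otimes k}$ are both insensitive to the global phase. Identifying $\mathbb{P}(\mathbb{C}^2)$ with the Bloch sphere $S^2$ via $\psi=\tfrac12(\mathds{1}+\hat n\cdot\vec{\sigma})$, the pushforward $\nu_\psi$ is precisely the law of $R\hat n$ with $R\sim\bar\nu$, where $R\in\mathrm{SO}(3)$ is the rotation associated with $U$. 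Recall that a state $k$-design on a qubit is the same as a spherical $k$-design on $S^2$, because $\psi^{\otimes k}$ is a degree-$k$ polynomial in $\hat n$; in terms of spherical harmonics this reads $\E_{R\sim\bar\nu}[Y_{\ell m}(R\hat n)]=0$ for all $1\le\ell\le k$ and all $m$. On the other side, $U^*\otimes U$ decomposes over $\mathrm{SO}(3)$ as $V_0\oplus V_1$ (trace plus adjoint/spin-$1$), so $(U^*\otimes U)^{\otimes k}$ contains precisely the integer-spin irreps $V_0,V_1,\dots,V_k$; consequently $\nu$ is a unitary $k$-design if and only if the moment operator $\E_{R\sim\bar\nu}[\rho_\ell(R)]$ vanishes for $\ell=1,\dots,k$, where $\rho_\ell$ is the spin-$\ell$ Wigner representation.

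The bridge between the two sides is the transformation law of spherical harmonics. I would fix $\ell\in\{1,\dots,k\}$ and use $Y_{\ell m}(R\hat n)=\sum_{m'}[\rho_\ell(R^{-1})]_{m'm}\,Y_{\ell m'}(\hat n)$ to write
\begin{equation}
\E_{R\sim\bar\nu}[Y_{\ell m}(R\hat n)]=\sum_{m'}\big(\E_{R\sim\bar\nu}[\rho_\ell(R^{-1})]\big)_{m'm}\,Y_{\ell m'}(\hat n),
\end{equation}
which, as a function of $\hat n$, lives entirely in the spin-$\ell$ harmonic space. The hypothesis, applied for \emph{every} $\psi$ (i.e.\ every $\hat n\in S^2$), forces the left-hand side to vanish identically on $S^2$ for each $m$. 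Since the $\{Y_{\ell m'}\}_{m'}$ are linearly independent as functions on $S^2$, every coefficient must be zero, i.e.\ $\E_{R\sim\bar\nu}[\rho_\ell(R^{-1})]=0$, and hence, by unitarity of $\rho_\ell$, $\E_{R\sim\bar\nu}[\rho_\ell(R)]=0$. Running over $\ell=1,\dots,k$ yields exactly the unitary $k$-design condition identified above, completing the argument.

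The step I expect to be the crux — and the reason the statement is special to $d=2$ — is the exact matching of representation ranges. The per-state hypothesis only constrains the images of $\bar\nu$ in the irreps that the functions $\hat n\mapsto\psi^{\otimes k}$ can ``see,'' namely the integer spins $V_1,\dots,V_k$; for a qubit these are \emph{precisely} the nontrivial irreps occurring in $(U^*\otimes U)^{\otimes k}$, so no unitary-design condition is left unprobed. (For $d>2$ this coincidence fails: $(U^*\otimes U)^{\otimes k}$ contains irreps of $\operatorname{SU}(d)$ absent from the symmetric-power moments of states, which is why the converse of Proposition~\ref{prop:unitarydesignsgeneratestatedesigns} holds only for qubits.) The accompanying technical point to handle carefully is the promotion from pointwise vanishing of $\E_{R\sim\bar\nu}[Y_{\ell m}(R\hat n)]$ at each individual $\hat n$ to vanishing of the full moment operator; this is exactly where invoking the hypothesis for \emph{all} initial states $\psi$, together with linear independence of the harmonics, is indispensable.
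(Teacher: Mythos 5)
Your proof is correct, and while it rests on the same underlying mechanism as the paper's --- the coincidence, special to $d=2$, of the irreducible content probed by the per-state hypothesis with the irreducible content of $U\mapsto U^{\otimes k,k}$ --- the execution is a genuinely different instantiation. The paper works abstractly with $\operatorname{SU}(2)$: it vectorizes the hypothesis, uses the fact that $\{\psi^{\otimes k}\}_\psi$ spans the operators on the symmetric subspace to convert it into equality of the moment operators of $V_{\mathrm{sym}}^{(k)*}\otimes V_{\mathrm{sym}}^{(k)}$ under $\nu$ and Haar, and then invokes a general Peter--Weyl-type lemma together with spin addition (two spin-$k/2$ particles versus $2k$ spin-$1/2$ particles both yield total spins $0,1,\dots,k$, via self-duality of $\operatorname{SU}(2)$) to transfer the equality to $U^{\otimes k,k}$. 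You instead descend to $\operatorname{PU}(2)\cong\mathrm{SO}(3)$ and make everything concrete on the Bloch sphere: the hypothesis becomes the spherical $k$-design condition $\E_{R\sim\bar\nu}[Y_{\ell m}(R\hat n)]=0$, $1\le\ell\le k$, at every base point $\hat n$, and the transformation law of harmonics plus their linear independence forces the quantitative conclusion $\E_{\bar\nu}[\rho_\ell(R)]=0$ for $\ell=1,\dots,k$ (your handling of the $R\mapsto R^{-1}$ convention via unitarity is correct), which is exactly the unitary $k$-design condition since $(U^*\otimes U)^{\otimes k}$ contains only the integer spins $0,\dots,k$. Your closing diagnosis of why the converse fails for $d\ge3$ matches the paper's remark precisely. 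Two observations on what each route buys: the paper's version isolates a reusable lemma valid for any compact group, which makes the $d\ge 3$ obstruction transparent in the same language; your version replaces that abstract lemma with explicit $S^2$ harmonic analysis and yields the irrep-wise vanishing statement directly. One point to be aware of: your step ``state $k$-design $\Rightarrow$ spherical $k$-design'' is the nontrivial direction of the equivalence you invoke as known --- it requires that the multipole coefficients of the spin-$k/2$ coherent-state projector $\psi^{\otimes k}$ be nonzero for all $\ell\le k$, i.e.\ that the matrix entries of $\hat n\mapsto\psi^{\otimes k}$ actually exhaust the harmonics up to degree $k$. This is the exact analog of the spanning fact the paper cites for the symmetric subspace, so your argument carries the same (standard) load at the same place; stating it explicitly would make the proof fully self-contained.
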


Taking $\nu$ to be the unitary temporal ensemble we immediately deduce the following.
\begin{corollary}
\label{cor:qubitHSEimpliesUE}
    In a qubit ($d=2$),  $k$\nobreakdash-HSE (CHSE) is equivalent to $k$\nobreakdash-UE (CUE).
\end{corollary}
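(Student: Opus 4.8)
The plan is to pass to the Bloch/adjoint picture, where the qubit structure trivializes the problem. Since both the hypothesis and the conclusion involve $U$ only through the channel $\rho\mapsto U\rho U^\dagger$, I would replace $U\in\operatorname{U}(2)$ by its image $R_U\in\operatorname{SO}(3)\cong\operatorname{PU}(2)$, the rotation through which the channel acts on the traceless (Bloch) part. Concretely, in the normalized Pauli basis $\{\mathds{1},X,Y,Z\}/\sqrt{2}$ the vectorized channel $U^*\otimes U$ is represented by the block-diagonal matrix $\operatorname{diag}(1,R_U)$. Writing a pure qubit state as $\psi=\tfrac12(\mathds{1}+\bm n\cdot\bm\sigma)$ with $\bm n$ a unit Bloch vector, the output $U\psi U^\dagger$ has Bloch vector $R_U\bm n$, and a short computation shows that $\E_{U\sim\nu}[(U\psi U^\dagger)^{\otimes k}]$ is completely encoded by the moments $\E_\nu[(R_U\bm n)_{i_1}\cdots(R_U\bm n)_{i_l}]$ for $l\le k$ (indices equal to the identity component contribute a factor $1$). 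Hence the hypothesis---that $\nu_\psi$ is a state $k$-design for every $\psi$---is equivalent to the statement that, for every unit vector $\bm n$, these moments of $R_U\bm n$ up to order $k$ coincide with their Haar counterparts [where the output Bloch vector is uniform on the sphere, cf.\ Eq.~\eqref{eq:rho_Haar_symm_proj}]. Likewise, in the Pauli basis $\E_\nu[(U^*\otimes U)^{\otimes k}]$ is represented by $\E_\nu[\operatorname{diag}(1,R_U)^{\otimes k}]$, whose entries are exactly the full moments $\E_\nu[(R_U)_{i_1 j_1}\cdots(R_U)_{i_l j_l}]$ with $l\le k$, so the conclusion that $\nu$ is a unitary $k$-design is equivalent to all of these full moments of $R_U$ matching Haar.

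The crux is therefore to upgrade ``moments of $R_U\bm n$ for every $\bm n$'' to ``full moments of $R_U$,'' order by order. I would package the order-$l$ moments into the bihomogeneous generating polynomial
\begin{equation*}
    F_l(\bm x,\bm y)=\E_\nu\big[(\bm x^{\mathsf{T}} R_U\bm y)^l\big]=\sum_{\bm i,\bm j}\E_\nu\big[(R_U)_{i_1 j_1}\cdots(R_U)_{i_l j_l}\big]\,x_{i_1}y_{j_1}\cdots x_{i_l}y_{j_l},
\end{equation*}
whose coefficients recover the full, jointly permutation-symmetric order-$l$ moment tensor of $R_U$. The moment condition on $R_U\bm n$ says precisely that $F_l(\bm x,\bm n)=F_l^{\mathrm{Haar}}(\bm x,\bm n)$ for all $\bm x\in\mathbb{R}^3$ and all unit vectors $\bm n$.

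To close the argument I would exploit that $F_l$ is homogeneous of degree $l$ in $\bm y$: a homogeneous polynomial vanishing on the unit sphere vanishes identically, since $P(\bm n)=0$ for $\abs{\bm n}=1$ forces $P(\bm y)=\abs{\bm y}^l P(\bm y/\abs{\bm y})=0$ for all $\bm y$. Applying this to $F_l-F_l^{\mathrm{Haar}}$ gives $F_l=F_l^{\mathrm{Haar}}$ as polynomials, hence equality of the full order-$l$ moment tensors of $R_U$; ranging over $l\le k$ and reading off the entries of $\operatorname{diag}(1,R_U)^{\otimes k}$ then yields $\E_\nu[(U^*\otimes U)^{\otimes k}]=\E_{\mathrm{Haar}}[(V^*\otimes V)^{\otimes k}]$, i.e.\ $\nu$ is a unitary $k$-design.

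The step I expect to be the genuine obstacle---and the one place where $d=2$ is essential---is exactly this recovery of the full moment tensor. A priori the state-design hypothesis only pins down the $\bm j$-symmetrization of each moment tensor for fixed output indices; it is the combination of the tensor's \emph{joint} permutation symmetry with the fact that pure qubit states fill the \emph{entire} Bloch sphere (making the sphere a determining set for homogeneous polynomials) that removes this ambiguity. For $d>2$ the pure states occupy only a proper submanifold of the generalized Bloch body, so the analogue of $F_l$ is no longer determined on a full sphere and the generating polynomial is not pinned down; correspondingly the converse of Proposition~\ref{prop:unitarydesignsgeneratestatedesigns} fails, consistent with the theorem being stated only for qubits.
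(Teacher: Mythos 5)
Your Bloch-picture setup is correct---the translation of the hypothesis into moments of $R_U\bm n$ and of the conclusion into full moments of $R_U$ is exactly right---but the pivotal step fails as stated: the coefficients of $F_l(\bm x,\bm y)=\E_{\nu}\big[(\bm x^{\mathsf T}R_U\bm y)^l\big]$ do \emph{not} recover the full moment tensor $M_{i_1j_1,\dots,i_lj_l}=\E_{\nu}\big[(R_U)_{i_1j_1}\cdots(R_U)_{i_lj_l}\big]$. Because the monomials $x_{i_1}\cdots x_{i_l}$ and $y_{j_1}\cdots y_{j_l}$ depend only on the multisets of indices, $F_l$ determines only the symmetrization of $M$ under \emph{independent} permutations of the $i$'s and of the $j$'s, i.e., the compression of $\E_{\nu}[R_U^{\otimes l}]$ to the symmetric subspace of $(\mathbb{R}^3)^{\otimes l}$; the joint permutation symmetry of $M$ (simultaneous permutation of the pairs $(i_a,j_a)$) does not imply this stronger symmetry. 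Concretely, at $l=2$ the polynomial $F_2$ fixes the sum $\E_{\nu}[(R_U)_{11}(R_U)_{22}]+\E_{\nu}[(R_U)_{12}(R_U)_{21}]$ but not the two moments separately, whereas the unitary $k$-design condition, read off entrywise from $\operatorname{diag}(1,R_U)^{\otimes k}$, requires each unsymmetrized moment individually. Your homogeneity/sphere lemma handles a different (and easy) issue---extending from unit $\bm n$ to all $\bm y$---and does not touch this one, so the proof stops one genuine step short, and it is precisely the step the paper supplies by representation theory.

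The gap is fixable, and the repair turns your argument into the adjoint-picture version of the paper's proof. What your matched polynomials give is $\E_{\nu}[\operatorname{Sym}^l R_U]=\E_{\mathrm{Haar}}[\operatorname{Sym}^l R]$ for all $l\le k$. Now decompose into $\operatorname{SO}(3)$ irreps: $\operatorname{Sym}^l(\mathbb{R}^3)$ carries spins $l,l-2,\dots$, so as $l$ ranges over $0,\dots,k$ every spin $j\le k$ occurs, and by Schur's lemma / Peter--Weyl (the paper's Lemma~\ref{lemm:equalityofexpvalsiffequalityinirreps}: a fixed decomposition into invariant subspaces block-diagonalizes the average, and averages over equivalent irreps are conjugate) this pins $\E_{\nu}[W_j]$ to its Haar value for every spin $j\le k$. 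Since spins $j\le l\le k$ are exactly the irrep types occurring in $(\mathbb{R}^3)^{\otimes l}$, this upgrades the symmetric-subspace equalities to $\E_{\nu}[R_U^{\otimes l}]=\E_{\mathrm{Haar}}[R^{\otimes l}]$ for all $l\le k$, which is your conclusion. (At $l=2$ the repair is transparent: $\Lambda^2(\mathbb{R}^3)\cong\mathbb{R}^3$ as $\operatorname{SO}(3)$ representations, since the cofactor matrix of $R\in\operatorname{SO}(3)$ is $R$ itself, so the antisymmetric block is already fixed by your $l=1$ moments.) This is the same irrep bookkeeping the paper performs one cover up in $\operatorname{SU}(2)$, comparing spins $0,\dots,k$ in $V^{(k)*}_{\mathrm{sym}}\otimes V^{(k)}_{\mathrm{sym}}$ against those in $U^{\otimes k,k}$; your closing intuition about $d>2$ is likewise better phrased in these terms (for $d\ge 3$ some irreps of $U^{\otimes k,k}$ simply never appear in the symmetric-representation hypothesis) than via the geometry of the Bloch body.
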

\noindent The proof of Theorem~\ref{th:qubitallstatekdesignisunitarydesign} relies on the representation theory of $\mathrm{SU}(2)$. One can understand the central argument physically, in terms of spin addition: Adding $2k$ spin\nobreakdash-$1/2$ particles generates the same total spin subspaces as adding two spin\nobreakdash-$k/2$ particles (ignoring multiplicities).
\begin{proof}[Proof of Theorem~\ref{th:qubitallstatekdesignisunitarydesign}]
      We first transform the assumption that $\nu_\psi$ is a state design for all $\psi$ into a single convenient equality. By the definition of $\nu_\psi$, we have that  $\E_{\phi\sim \nu_\psi}[\phi^{\otimes k}]=\E_{U\sim \nu}[(U\psi U)^{\otimes k}]$. Then, that the distribution $\nu_\psi$ forms a state $k$-design means that $\E_{U\sim \nu}[(U\psi U)^{\otimes k}]=\E_{U\sim \textrm{Haar}}[(U\psi U)^{\otimes k}],$ which is vectorized to 
    \begin{equation}
\label{eq:eqofHaarandacteddistvec}\E_{U\sim \nu}[U^{\otimes k,k}]\text{vec}(\psi^{\otimes k})=\E_{U\sim \textrm{Haar}}[U^{\otimes k,k}]\text{vec}(\psi^{\otimes k}).
    \end{equation} 
    The subspace spanned by $\{\psi^{\otimes k}\}_{\psi}$ is the space of operators in the symmetric subspace of $(\mathbb{C}^2)^{\otimes k}$. Consequently Eq.~\eqref{eq:eqofHaarandacteddistvec} holds for all $\psi$ if and only if 
    \begin{equation}
\label{eq:eqofHaarandacteddistvecsymproj}\E_{U\sim \nu}[U^{\otimes k,k}]\Pi_{\mathrm{sym}}^{(k)}\otimes \Pi_{\mathrm{sym}}^{(k)}=\E_{U\sim \textrm{Haar}}[U^{\otimes k,k}]\Pi_{\mathrm{sym}}^{(k)}\otimes \Pi_{\mathrm{sym}}^{(k)},
    \end{equation} 
    where $\Pi_{\mathrm{sym}}^{(k)}$ is the projector into the symmetric subspace given by Eq.~\eqref{eq:symm_proj}.

    Now, in order to use representation-theoretic results, it is convenient to rewrite Eq.~\eqref{eq:eqofHaarandacteddistvecsymproj} in terms of the representation of $\mathrm{SU}(2)$ on the symmetric subspace of  $(\mathbb{C}^2)^{\otimes k}$, which we denote by $V_{\text{sym}}^{(k)}(U)\coloneqq U^{\otimes k}\Pi_{\mathrm{sym}}^{(k)}$. We have
    \begin{equation*}
\E_{U\sim \nu}[V_{\text{sym}}^{(k)*}(U)\otimes V_{\text{sym}}^{(k)}(U)]=\E_{U\sim \textrm{Haar}}[V_{\text{sym}}^{(k)*}(U)\otimes V_{\text{sym}}^{(k)}(U)].
    \end{equation*} 
       This allows us to appeal to the following general result from representation theory, which is a straightforward consequence of the Peter-Weyl theorem \cite{Diestel2014Haar}.
    \begin{lemma}
\label{lemm:equalityofexpvalsiffequalityinirreps}
        Let $G$ be a compact topological group, $g\mapsto V(g)$ a finite-dimensional unitary representation, and $\mu_1,\mu_2$  probability measures on $G$. Then
        $$\E_{g\sim \mu_1}[V(g)]=\E_{g\sim \mu_2}[V(g)]$$ if and only if 
        $$\E_{g\sim \mu_1}[W(g)]=\E_{g\sim \mu_2}[W(g)],$$
        for each  irreducible subrepresentation $W$ of $V$.
    \end{lemma}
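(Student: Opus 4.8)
The plan is to reduce Lemma~\ref{lemm:equalityofexpvalsiffequalityinirreps} to the complete reducibility of $V$, which holds because $G$ is compact and $V$ is finite dimensional and \emph{unitary}. The crucial elementary fact I would extract first is that, for a unitary representation, the orthogonal projector $P$ onto any invariant subspace commutes with every $V(g)$: invariance of the subspace plus unitarity of $V(g)$ forces the orthogonal complement to be invariant as well, so $P V(g)=V(g)P$. Iterating the splitting of invariant subspaces, I would fix once and for all a decomposition $\mathcal{H}=\bigoplus_i \mathcal{H}_i$ of the representation space into mutually orthogonal irreducible invariant subspaces, so that in a suitable orthonormal basis $V(g)=\bigoplus_i W_i(g)$ is block diagonal, with each $W_i$ an irreducible subrepresentation.

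For the ``if'' direction I would use that expectation against $\mu_j$ is the matrix-valued integral $\E_{g\sim\mu_j}[V(g)]=\int_G V(g)\,\dd{\mu_j(g)}$, a bounded linear operation on the finite-dimensional matrix space, hence it preserves the block structure: $\E_{g\sim\mu_j}[V(g)]=\bigoplus_i \E_{g\sim\mu_j}[W_i(g)]$. Since each $W_i$ is one of the irreducible subrepresentations on which equality is assumed, the two direct sums agree block by block and thus $\E_{\mu_1}[V]=\E_{\mu_2}[V]$. Note that this direction only ever needs the hypothesis on the particular summands $W_i$ of the chosen decomposition, not on all irreducible subrepresentations.

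For the ``only if'' direction, let $W$ be an \emph{arbitrary} irreducible subrepresentation carried by an invariant subspace $\mathcal{H}_W$ with orthogonal projector $P_W$. Using the commutation $P_W V(g)=V(g)P_W$ again, $W(g)$ is exactly the compression $P_W V(g) P_W$ viewed on $\mathcal{H}_W$. Because compression by the \emph{fixed} operator $P_W$ commutes with integration against $\mu_j$, I get $\E_{\mu_j}[W]=P_W\,\E_{\mu_j}[V]\,P_W\big|_{\mathcal{H}_W}$, and the assumed equality $\E_{\mu_1}[V]=\E_{\mu_2}[V]$ immediately yields $\E_{\mu_1}[W]=\E_{\mu_2}[W]$ for every such $W$.

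The only point requiring genuine care, and the closest thing to an obstacle, is the mismatch between ``each irreducible \emph{subrepresentation}'' (an infinite family of invariant subspaces) and ``each summand of one fixed decomposition'': the ``if'' direction needs only the latter, while the ``only if'' direction must deliver the former, and I would make both uniform through the projector-commutation argument rather than through any canonical choice of decomposition. It is worth remarking that neither Schur's lemma nor the full force of Peter--Weyl (orthogonality and completeness of matrix coefficients) is actually invoked; what is used is solely complete reducibility of a finite-dimensional unitary representation of a compact group, which is the elementary averaging part of that circle of results.
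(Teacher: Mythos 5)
Your proof is correct, but it takes a genuinely different route from the paper, which offers no argument of its own: there the lemma is simply asserted to be ``a straightforward consequence of the Peter--Weyl theorem'' with a citation. You replace that appeal with the elementary observation that a finite-dimensional \emph{unitary} representation is completely reducible (the orthogonal complement of an invariant subspace is invariant, so the projector $P_W$ commutes with every $V(g)$), plus the fact that both block decomposition and compression by the fixed projector $P_W$ commute with the linear operation $X(\cdot)\mapsto\int_G X(g)\,\dd{\mu_j(g)}$; the ``if'' direction is block-diagonal bookkeeping over one fixed decomposition, and the ``only if'' direction is $\E_{\mu_j}[W]=P_W\,\E_{\mu_j}[V]\,P_W\big|_{\mathcal{H}_W}$. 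Your closing remark is accurate: Peter--Weyl (completeness of matrix coefficients, existence of enough irreps) is far stronger than needed, and in fact your argument uses compactness of $G$ only to make the integrals of the bounded continuous integrand well defined, so the lemma holds in greater generality than stated. Your point of care is also exactly the right one: when an isotypic component has multiplicity greater than one, there are irreducible invariant subspaces (graphs of intertwiners between isomorphic summands) that belong to no fixed decomposition, and the compression argument handles all of them uniformly while the ``if'' direction needs only the finitely many chosen summands. One small addendum worth recording for the paper's actual use of the lemma: the proof of Theorem~\ref{th:qubitallstatekdesignisunitarydesign} transfers the equality between two different representations sharing the same irreducible content up to multiplicity, which additionally requires the (trivial) invariance of the condition under isomorphism, $\E_{\mu_j}[S W S^{-1}]=S\,\E_{\mu_j}[W]\,S^{-1}$ for a fixed intertwiner $S$; this follows from the same linearity you already exploit, but it is not literally contained in the lemma, so your subrepresentation-level statement should be read together with that remark.
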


    We apply Lemma~\ref{lemm:equalityofexpvalsiffequalityinirreps} to the representation $V=V_{\text{sym}}^{(k)*}\otimes  V_{\text{sym}}^{(k)}$ and the probability distributions $\mu_1=\nu$ and $\mu_2=\textrm{Haar}[\textrm{SU}(2)]$. We obtain that for each  irreducible subrepresentations $W$ of $V_{\text{sym}}^{(k)*}\otimes  V_{\text{sym}}^{(k)}$, $\E_{U\sim \nu}[W(U)]=\E_{U\sim \mathrm{Haar}}[W(U)]$. However, observe that, because we are working in $d=2$, and $\textrm{SU}(2)$ is self-dual, $V_{\text{sym}}^{(k)}$ and $V_{\text{sym}}^{(k)*}$ are just the ($k$+1)-dimensional representations of $\textrm{SU}(2)$, acting in the Hilbert space of a spin-$k/2$ particle. Thus, the irreducible subrepresentations of $V_{\text{sym}}^{(k)*}\otimes  V_{\text{sym}}^{(k)}$ are just the $(1,3,\dots, 2k+1$)-dimensional representations, labeled by the total spin $j=0,1,\dots, k$, obtained by adding such spins. These are the same irreps obtained from the addition of $2k$ spin-$1/2$ particles, which are also the irreducible subrepresentations of $ U^{\otimes k, k}$, where again we utilized the self-duality of $\textrm{SU}(2)$. Thus, we can apply the converse implication of Lemma~\ref{lemm:equalityofexpvalsiffequalityinirreps}, and we find that $\E_{U\sim \nu}[U^{\otimes k, k}]=\E_{U\sim \textrm{Haar}}[U^{\otimes k, k}]$, which says that $\nu$ is a unitary $k$-design. 
\end{proof}

It is worth noting that Theorem~\ref{th:qubitallstatekdesignisunitarydesign}  only holds for qubits. The underlying reason is that, if $d\geq 3$, there are irreps which appear in the representation $U\mapsto U^{\otimes k,k}$ that do not appear in the symmetric representation $U\mapsto V_{\text{sym}}^{(k)*}(U)\otimes V_{\text{sym}}^{(k)}(U)$.

\section{Ergodicity implies density}
In this appendix, we show that our notions of quantum ergodicity imply density over time, in two ways. First, if the system satisfies CHSE, then any state visits the projective Hilbert space densely in time, meaning that it eventually  comes arbitrarily close to any other state. Second, if the system satisfies CUE, then the unitary operator visits the projective unitary group densely in time, meaning that it eventually comes arbitrarily close to any other unitary.

\label{app:ergimpliesdens}
\subsection{Complete ergodicity implies density in the projective Hilbert space}
\label{app:Hergimpliesdens}
We will show that a state $\ket{\psi(t)}$ undergoing evolution which satisfies $k$\nobreakdash-HSE uniformly covers the the projective Hilbert space. To precisely quantify by we mean by uniformity, we introduce the following concept.
\begin{appdefinition}[$\varepsilon$-net and dense set]
\label{def:enetdense}
For $\varepsilon> 0$, a set of states $S\subseteq \mathbb{P}(\mathbb{C}^d)$ is an $\varepsilon$-net if for any state $\phi\in \mathbb{P}(\mathbb{C}^d)$ there exists $\psi\in S$ such that $D(\phi,\psi)\leq\varepsilon$, where $D(\phi,\psi)$ is the trace distance given by Eq.~\eqref{eq:tracedistancedef}. If $S$ forms an $\varepsilon$-net for any $\varepsilon>0$, it is said that $S$ is dense.
\end{appdefinition}
We show that, under $k$\nobreakdash-HSE, for any initial state $\psi$, its evolution is an $\varepsilon$-net, for $\varepsilon$ that grows smaller with increasing $k$ and, consequently, under CHSE, the evolution of $\psi$ is dense in  $\mathbb{P}(\mathbb{C}^d)$. To that end, we prove the following result about state $k$-designs.
\begin{lemma}
\label{lemm:kdesignsformepsilonnets}
    Let $\nu$ be a state $k$-design, and define \begin{equation}
        \gamma=\sqrt{1-\left(\frac{(d-1)!\, k!}{(d+k-1)!}\right)^{1/k}}.
    \end{equation} For any $\varepsilon\geq \gamma$, the support of $\nu$ forms an $\varepsilon$-net. 

    \begin{proof}
        Let $\phi\in \mathbb{P}(\mathbb{C}^d)$ remain fixed. We consider the quantity $\mathcal{F}\coloneqq\E_{\psi\sim \nu}\left[\abs{\braket{\phi}{\psi}}^{2k}\right]$. This is a modified frame potential [Eq.~\eqref{eq:state_frame_pot}] in which, instead of a double average, we keep one state fixed and only perform one average. We will verify that 
        $\mathcal{F}$ is lower bounded by $(1-\varepsilon^2)^k$, which implies that there is some $\psi\in \text{supp}(\nu)$ such that $\abs{\braket{\phi}{\psi}}^{2k}\geq (1-\varepsilon^2)^k$ and $D(\phi,\psi)\leq\varepsilon$. Because $\nu$ is a $k$-design,
        $$\mathcal{F}=\tr(\phi^{\otimes k}\,\E_{\psi\sim \nu}\big[{\psi}^{\otimes k}\big])=\tr({\phi}^{\otimes k}\!\!\!\!\E_{\phi\in\mathbb{P}(\mathbb{C}^d)}[\phi^{\otimes{k}}]).$$
        From Eq.~\eqref{eq:rho_Haar_symm_proj} and the fact that $\phi^{\otimes k}$ has support only in the symmetric subspace we readily obtain
\begin{equation*}
    \mathcal{F}=\frac{(d-1)!\, k!}{(d+k-1)!}\geq(1-\varepsilon^2)^k.\qedhere
\end{equation*}
    \end{proof}
\end{lemma}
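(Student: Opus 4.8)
The plan is to reduce the $\varepsilon$-net property to a single inequality on overlaps, and then extract a near-optimal point from the support of $\nu$ by a simple averaging argument. First I would recall that, by Eq.~\eqref{eq:tracedistancedef}, the condition $D(\phi,\psi)\le\varepsilon$ is equivalent to $\abs{\braket{\phi}{\psi}}^2\ge 1-\varepsilon^2$. Fixing an arbitrary target state $\phi\in\mathbb{P}(\mathbb{C}^d)$, it therefore suffices to exhibit a single $\psi$ in the support of $\nu$ with large overlap $\abs{\braket{\phi}{\psi}}$. To that end I would introduce the one-sided frame potential $\mathcal{F}\coloneqq\E_{\psi\sim\nu}[\abs{\braket{\phi}{\psi}}^{2k}]$, which differs from the ordinary frame potential of Eq.~\eqref{eq:state_frame_pot} only in that the fixed state $\phi$ replaces one of the two averaged states.

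Next I would evaluate $\mathcal{F}$ exactly using the design hypothesis. Writing $\abs{\braket{\phi}{\psi}}^{2k}=\tr(\phi^{\otimes k}\psi^{\otimes k})$ and pulling the average inside the trace gives $\mathcal{F}=\tr(\phi^{\otimes k}\,\E_{\psi\sim\nu}[\psi^{\otimes k}])$. Because $\nu$ is a $k$-design, the averaged moment equals $\rho^{(k)}_{\mathrm{Haar}}$, and substituting the closed form Eq.~\eqref{eq:rho_Haar_symm_proj}, together with the facts that $\phi^{\otimes k}$ is supported on the symmetric subspace (so $\Pi^{(k)}_{\mathrm{sym}}\phi^{\otimes k}=\phi^{\otimes k}$) and $\tr(\phi^{\otimes k})=1$, yields $\mathcal{F}=\frac{(d-1)!\,k!}{(d+k-1)!}$. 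The point of the definition of $\gamma$ is precisely that this value equals $(1-\gamma^2)^k$.

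Finally I would convert the value of the average into the existence of a good point. Since $\mathcal{F}$ is an average over $\nu$ of the continuous function $\psi\mapsto\abs{\braket{\phi}{\psi}}^{2k}$, its supremum over $\operatorname{supp}(\nu)$ is at least $\mathcal{F}=(1-\gamma^2)^k$; as $\operatorname{supp}(\nu)$ is a closed subset of the compact space $\mathbb{P}(\mathbb{C}^d)$, this supremum is attained at some $\psi\in\operatorname{supp}(\nu)$. Taking $k$th roots gives $\abs{\braket{\phi}{\psi}}^2\ge 1-\gamma^2$, hence $D(\phi,\psi)^2=1-\abs{\braket{\phi}{\psi}}^2\le\gamma^2\le\varepsilon^2$ whenever $\varepsilon\ge\gamma$, so $D(\phi,\psi)\le\varepsilon$. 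As $\phi$ was arbitrary, $\operatorname{supp}(\nu)$ is an $\varepsilon$-net. The only step requiring genuine care --- the \emph{main obstacle} in an otherwise short argument --- is this last one: a priori one controls only the \emph{average} overlap, not any individual one, so one must argue that the average cannot exceed the supremum over the support and that this supremum is actually attained; the compactness of $\mathbb{P}(\mathbb{C}^d)$ and the closedness of the support are exactly what make this rigorous.
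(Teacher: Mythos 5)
Your proof is correct and follows essentially the same route as the paper's: fix $\phi$, evaluate the one-sided frame potential $\mathcal{F}=\E_{\psi\sim\nu}[\abs{\braket{\phi}{\psi}}^{2k}]$ exactly via the $k$-design property and Eq.~\eqref{eq:rho_Haar_symm_proj}, note $\mathcal{F}=(1-\gamma^2)^k$, and extract a high-overlap state from $\operatorname{supp}(\nu)$ by averaging. The only difference is that you spell out the final extraction step (the average is bounded by the supremum over the support, which is attained by closedness of $\operatorname{supp}(\nu)$ in the compact space $\mathbb{P}(\mathbb{C}^d)$), a justification the paper leaves implicit.
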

Applying Lemma~\ref{lemm:kdesignsformepsilonnets} to the temporal ensemble generated by an initial state $\psi$, whose support is $S_\psi =\{\psi(t)\,|\, t\in[0,\infty)\}$, we see that $S_\psi$ is an $\varepsilon$-net under $k$\nobreakdash-HSE, as long as $\varepsilon\geq \gamma$. Now, because $\lim_{k\to \infty} \gamma=0$, we have that, under CHSE,  $S_\psi$ is dense, meaning that  for any other state $\ket{\phi}$ and $\varepsilon>0$,  there is a time at which the trace distance satisfies \begin{equation}
\label{eq:compergdensity}
    D(\phi,\psi(t))<\varepsilon.
\end{equation}

\subsection{CUE implies density in the projective unitary group}
\label{app:Uergimpliesdens}
A similar result to the previous section holds for unitary complete ergodicity. If the evolution given by $U(t)$ satisfies CUE, then $U(t)$ densely visits the projective unitary group $\operatorname{PU}(d)$, meaning that for any other unitary $V$ and $\varepsilon>0$, \begin{equation}
    D_{\operatorname{PU}(d)}(U,V)\coloneqq\sqrt{1-\frac{1}{d^2}\abs{\tr(U(t)^\dagger V)}^2}<\varepsilon
    \label{eq:unitaryergdensity}
\end{equation} at some time $t$. The quantity $\frac{1}{d^2}\abs{\tr(U^\dagger V)}^2$ is a matrix analog of the fidelity between states, as it equals $1$ if and only if $U$ equals $V$ up to some global phase, so $D_{\operatorname{PU}(d)}(U,V)$ may be understood as a matrix analog of the trace distance. 

The proof is very similar to the one for CHSE, but now using tools of unitary designs instead of state designs. We define
 \begin{align*}
\mathcal{F}\coloneqq\E_{t\geq 0}\Big[\abs{\tr U(t)^\dagger V}^{2k}\Big]=\tr(\E_{t\geq0}[U(t)^{\otimes k,k}]^\dagger V^{\otimes k,k}),
 \end{align*}
 which is a modified unitary frame potential, in which we perform only one average while keeping the unitary $V$ fixed.
 Under CUE, $$\mathcal{F}=\tr(\E_{W\in \operatorname{SU}(d)}[W^{\otimes k,k}]^\dagger V^{\otimes k,k})=\E_{W\in \operatorname{SU}(d)}[\abs{\tr W}^{2k}].$$
where the second equality holds because of the right-invariance of the Haar measure.

The quantity $\mathcal{F}_{\text{Haar}}^{(k)}=\E_{W\in \operatorname{SU}(d)}[\abs{\tr W}^{2k}]$ is  the unitary frame potential of the Haar measure, and it is well known to be $k!$ for $k\leq d$ \cite{Roberts2017,Mele2023}, but for $k>d$, which is the relevant case here, this is not longer true. In general, $\mathcal{F}=\mathcal{F}_{\text{Haar}}^{(k)}$ can be shown to be equal to the number of permutations of $\{1,\dots, k\}$ satisfying a specific subsequence-length constraint \cite{Rains1998}. This number cannot be written as a simple expression, but it can be shown to satisfy \cite{Regev1981,Stanley2006} $$\lim_{k\to \infty} {\mathcal{F}}^{1/k}=d^2.$$  This means that for any $\varepsilon$ there exits $k$ such that $d^{-2k}\mathcal{F}> (1-\varepsilon^2)^k$, which implies that at some time, $d^{-2k}\abs*{\text{tr}(U(t)^\dagger V)}{}^{2k}>(1-\varepsilon^2)^k$. Taking the $k$th root yields Eq.~\eqref{eq:unitaryergdensity}.

As a remark, although not used in this article, it is in fact true that if the system is $k$\nobreakdash-UE, for only finite $k$, then the unitary operator, in time, forms an $\varepsilon$-net over the projective unitary group. This follows directly from a unitary-design analog to Lemma~\ref{lemm:kdesignsformepsilonnets} \cite{Oszmaniec2020}.

\section{A quantum speed limit}
\label{app:speed_limit}
We show a type of quantum speed limit, in which the distance traveled by any state in the projective Hilbert space is upper bounded by the time integral of the norm of the Hamiltonian  \footnote{Proposition \ref{prop:speed_limit_prop} is a special case of Lemma~2 of Ref.~\cite{Ng2011}, taking one of the Hamiltonians to be zero}.

\begin{prop}
\label{prop:speed_limit_prop}
 Consider any state $\ket{\psi(t)}$ evolving under the unitary dynamics generated by $H(t)$. For any pair of times $t_0\leq t_1$, the trace distance between the state at time $t_0$ and the state at $t_1$ is upper bounded as follows:
$$D(\psi(t_0),\psi(t_1))\coloneqq  \sqrt{1-\abs{\braket{\psi(t_0)}{\psi(t_1)}}^2} \leq \int_{t_0}^{t_1}\!\!\!\dd{t} \norm{H(t)}_\infty$$ 
\end{prop}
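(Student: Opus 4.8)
The plan is to sidestep differentiating the trace distance directly---which is singular in the overlap near $t_0$---and instead track the component of the evolved state orthogonal to the initial one. Write $P^\perp = \mathds{1} - \psi(t_0)$ for the projector onto the orthogonal complement of $\ket{\psi(t_0)}$. The first step is the elementary identity: decomposing $\ket{\psi(t_1)} = \braket{\psi(t_0)}{\psi(t_1)}\ket{\psi(t_0)} + P^\perp\ket{\psi(t_1)}$ and using normalization gives $\norm{P^\perp\ket{\psi(t_1)}}^2 = 1 - \abs{\braket{\psi(t_0)}{\psi(t_1)}}^2$, so that
\[
  D(\psi(t_0),\psi(t_1)) = \norm{P^\perp\ket{\psi(t_1)}}.
\]
This converts the geometric quantity into the norm of an explicit vector, which I can then estimate using the dynamics.

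The second step exploits that $P^\perp\ket{\psi(t_0)} = 0$: subtracting this vanishing term and applying the fundamental theorem of calculus to the (absolutely continuous) curve $t\mapsto\ket{\psi(t)}$, and noting that the fixed operator $P^\perp$ passes through the integral,
\[
  P^\perp\ket{\psi(t_1)} = P^\perp\big(\ket{\psi(t_1)} - \ket{\psi(t_0)}\big) = \int_{t_0}^{t_1}\!\dd{t}\, P^\perp \frac{\dd}{\dd t}\ket{\psi(t)}.
\]
Inserting the Schr\"odinger equation $\tfrac{\dd}{\dd t}\ket{\psi(t)} = -iH(t)\ket{\psi(t)}$ reduces the claim to a chain of norm estimates: by the triangle inequality for vector-valued integrals, $\norm{P^\perp\ket{\psi(t_1)}} \le \int_{t_0}^{t_1}\dd{t}\,\norm{P^\perp H(t)\ket{\psi(t)}}$; since $P^\perp$ is a projector, $\norm{P^\perp}_\infty \le 1$; and since $\ket{\psi(t)}$ is a unit vector, $\norm{H(t)\ket{\psi(t)}} \le \norm{H(t)}_\infty$. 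Combining these yields the advertised bound immediately. (As a sanity check and physical interpretation, the same manipulation applied infinitesimally identifies the instantaneous speed $\lim_{\varepsilon\to0}D(\psi(t),\psi(t+\varepsilon))/\varepsilon$ with the energy uncertainty $\Delta_{\psi(t)}H(t)$, which is $\le \norm{H(t)}_\infty$; this is the Mandelstam--Tamm content hidden in the estimate.)

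The one genuine subtlety I expect to be the main obstacle is \emph{regularity}, not the core inequality. When $B=\int_{t_0}^{t_1}\norm{H(t)}_\infty\dd{t}<\infty$ with $H$ locally integrable, the curve $t\mapsto\ket{\psi(t)}$ is absolutely continuous and the Schr\"odinger equation holds almost everywhere, so the fundamental-theorem step is valid verbatim. The more delicate case is the discontinuous (Dirac-$\delta$ pulse / brickwork) dynamics relevant to Theorem~\ref{th:01}: across a kick $e^{-iK}$ one applies the identical estimate to a smooth internal interpolation $s\mapsto e^{-isK}\ket{\psi}$, $s\in[0,1]$, whose path length is again bounded by $\norm{K}_\infty = \int\norm{H}_\infty\dd{t}$ over the pulse, or one recovers the general statement by a limiting argument approximating each pulse by a bounded Hamiltonian. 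I anticipate this bookkeeping to require the most care, whereas the central identity $D = \norm{P^\perp\ket{\psi(t_1)}}$ together with the orthogonality $P^\perp\ket{\psi(t_0)} = 0$ is what makes the whole estimate essentially a one-line calculation.
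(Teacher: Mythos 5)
Your proof is correct, and its skeleton coincides with the paper's: both integrate the Schr\"odinger equation via the fundamental theorem of calculus, apply the integral triangle inequality, and bound $\norm{H(t)\ket{\psi(t)}}\leq \norm{H(t)}_\infty$. Where you genuinely deviate is in how the vector-norm estimate is converted into the trace distance. The paper bounds the chord length directly, $\int_{t_0}^{t_1}\dd{t}\norm{H(t)}_\infty \geq \norm{\ket{\psi(t_1)}-\ket{\psi(t_0)}} = \sqrt{2-2\Re\braket{\psi(t_0)}{\psi(t_1)}}$, and then needs the scalar inequality $2-2\Re(z)\geq 1-\abs{z}^2$ (i.e., $\abs{1-z}^2\geq 0$) to reach $D$; this step is phase-sensitive, since the chord length depends on the relative phase between the endpoints while $D$ does not. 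You instead insert the projector $P^\perp=\mathds{1}-\psi(t_0)$ before integrating, which gives the \emph{exact} identity $D(\psi(t_0),\psi(t_1))=\norm{P^\perp\ket{\psi(t_1)}}$ and absorbs the paper's final scalar inequality into the contraction property $\norm{P^\perp}_\infty\leq 1$. The two proofs yield the identical bound, so nothing is gained quantitatively, but your route is manifestly gauge-invariant and replaces an ad hoc inequality with a structural one; the paper's is one line shorter. Two further points in your favor: your Mandelstam--Tamm remark is correct (expanding $\abs{\braket{\psi(t)}{\psi(t+\varepsilon)}}^2$ to second order gives instantaneous speed $\Delta_{\psi(t)}H(t)\leq\norm{H(t)}_\infty$), and your regularity discussion --- absolute continuity when $H$ is locally integrable, and handling Dirac-$\delta$ kicks by the smooth interpolation $s\mapsto e^{-isK}\ket{\psi}$ with path length at most $\norm{K}_\infty$ --- addresses a point the paper passes over silently, even though its footnote to Theorem~\ref{th:01} and the kicked drives of Sec.~\ref{sec:finiteHSEPeriodicsystems} rely on exactly this extension.
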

\begin{proof}
 We have the following chain of inequalities:
    \begin{align*}
    \int_{t_0}^{t_1}\dd{t} \norm{H(t)}_\infty &\geq \int_{t_0}^{t_1}\dd{t} \norm{H(t)\ket{\psi(t)}} \\ &=\int_{t_0}^{t_1}\dd{t} \norm{\partial_t\ket{\psi(t)}} \\ &\geq \norm{\int_{t_0}^{t_1}\dd{t} \partial_t\ket{\psi(t)}}
    \\ &=\norm{\ket{\psi(t_1)}-\ket{\psi(t_0)}}
    \\ &=\sqrt{2-2\Re\braket{\psi(t_0)}{\psi(t_1)}}
      \\ &\geq\sqrt{1-\abs{\braket{\psi(t_0)}{\psi(t_1)}}^2}.
    \end{align*}
    The first line holds by the definition of operator norm $\norm{\, \cdot\,}_\infty$, the second is Schr\"odinger's equation, the third is the integral triangle inequality, and the fourth is the fundamental theorem of calculus.  The last inequality holds because  $2-2\text{Re}(z)\geq 1-\abs{z}^2$ for any  $z\in\mathbb{C}$.
\end{proof}

By applying the result above to a finite sequence of times, we can bound the length of the path traversed by state.
\begin{corollary}
\label{cor:dis_traveled_bound}
For times $t_0\leq t_1\leq \dots\leq t_M=T$,
$$\sum_{j=1}^M D(\psi(t_{j-1}),\psi(t_j)) \leq \int_{t_0}^{T}\dd{t} \norm{H(t)}_\infty.$$
\end{corollary}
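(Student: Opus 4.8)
The plan is to obtain the corollary as an immediate consequence of Proposition~\ref{prop:speed_limit_prop}, applied consecutively to each adjacent pair of times in the partition. Since Proposition~\ref{prop:speed_limit_prop} already bounds the trace distance traveled between any two times $t_{j-1} \leq t_j$ by the integrated Hamiltonian norm over that subinterval, the only additional ingredient needed is the additivity of the integral over a partition of $[t_0, T]$.

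Concretely, first I would apply Proposition~\ref{prop:speed_limit_prop} to each pair $(t_{j-1}, t_j)$ for $j = 1, \dots, M$, yielding
\begin{equation*}
    D(\psi(t_{j-1}), \psi(t_j)) \leq \int_{t_{j-1}}^{t_j} \dd{t}\, \norm{H(t)}_\infty.
\end{equation*}
Next I would sum these $M$ inequalities over $j$, so that the left-hand side becomes the total path length $\sum_{j=1}^M D(\psi(t_{j-1}), \psi(t_j))$. On the right-hand side, the subinterval integrals telescope by the additivity of the Lebesgue (or Riemann) integral over the consecutive intervals $[t_{j-1}, t_j]$ partitioning $[t_0, t_M]$, giving $\sum_{j=1}^M \int_{t_{j-1}}^{t_j} \dd{t}\, \norm{H(t)}_\infty = \int_{t_0}^{t_M} \dd{t}\, \norm{H(t)}_\infty = \int_{t_0}^{T} \dd{t}\, \norm{H(t)}_\infty$, using $t_M = T$. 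This establishes the claim.

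There is no real obstacle here: the corollary is a routine telescoping of Proposition~\ref{prop:speed_limit_prop}, with all the analytic content (Schrödinger's equation, the triangle inequality, and the elementary bound $2 - 2\Re(z) \geq 1 - |z|^2$) already discharged in the proposition itself. The only point worth stating explicitly is that the monotone ordering $t_0 \leq t_1 \leq \dots \leq t_M$ guarantees the subintervals are nonoverlapping and cover $[t_0, T]$ exactly, so that the integrated norm over the full interval is recovered without double-counting. This telescoping structure is precisely what connects the single-step speed limit to the finite-partition statement used, in turn, to lower-bound $B$ via Eq.~\eqref{eq:totaldistandB} in the proof of Theorem~\ref{th:01}.
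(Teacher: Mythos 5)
Your proof is correct and matches the paper's own argument exactly: the paper also obtains the corollary by applying Proposition~\ref{prop:speed_limit_prop} to each consecutive pair $(t_{j-1},t_j)$ and summing, with the right-hand side recombining by additivity of the integral over the ordered partition. Nothing is missing; this is precisely the intended routine telescoping.
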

\noindent As we take the sequence of times to have finer spacings, the left-hand side approaches the total length of the path traveled by the state $\psi$ in $\mathbb{P}(\mathbb{C}^d)$, which is seen to be upper bounded by the right-hand side, which depends only on the endpoints $t_0$ and $T$.
\section{Time-periodic systems and $k$\nobreakdash-HSE}
\label{app:periodicsystemsandkHSE}
In this appendix, we show that $k$\nobreakdash-HSE in a time-periodic Hamiltonian with period $T$ requires a Hamiltonian strength $B\coloneqq\int_0^T \dd{t}\norm{H(t)}_\infty$ which grows with $k$ and $d$.  Specifically, $k$\nobreakdash-HSE implies that $B\geq\max\{B_1,B_2\}$, with
\begin{align}
B_1&=\frac{\sqrt{2}}{3k}  \Bigg(\binom{k+d-1}{k}-1\Bigg),\label{eq:boundB1}\\
    B_2&=C\gamma^{3-2d}\geq\frac{8}{(4 d)^d}\left(\frac{k}{\log (k+1)}\right)^{d-\frac{3}{2}}, \label{eq:boundB2}
\end{align}
 $C=2^{5-4 d} (d-1)^{2-2 d}(2 d-3)^{2 d-3}$, and $\gamma=\gamma(k,d)$ as defined as in Lemma \ref{lemm:kdesignsformepsilonnets}. Theorem~\ref{th:01} follows, upon taking $k\to\infty$. The bound $B_1$ is better when $k\lesssim    d$, and is surpassed by $B_2$   when $k\gg d$. We derive each bound separately, as they require different techniques.

To obtain the bound $B_1$ in Eq.~\eqref{eq:boundB1}, we utilize the following simple combinatorial result.
\begin{lemma}
Let $(j_l)_{l\in\{0,1,\dots d-1\}}$ be a permutation of the set of integers $\{0,1,\dots , d-1\}$, where $d\geq 2$. Then,
$$\sum_{l=1}^{d-1}  \sqrt{1-\frac{\max(j_l,j_{l-1})}{d}}\geq \frac{\sqrt{2}}{3}  (d-1).$$
\label{lemm:permboundedtracedist}
\end{lemma}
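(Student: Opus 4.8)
The plan is to first reformulate the inequality into a more symmetric form. Setting $k_l := d - j_l$, the tuple $(k_0,\dots,k_{d-1})$ is again a permutation, now of $\{1,\dots,d\}$, and $d-\max(j_{l-1},j_l)=\min(k_{l-1},k_l)$. Since $\sqrt{1-\max(j_l,j_{l-1})/d}=\sqrt{\min(k_{l-1},k_l)}/\sqrt d$, the claim is equivalent to $\sum_{l=1}^{d-1}\sqrt{\min(k_{l-1},k_l)}\ge \tfrac{\sqrt2}{3}(d-1)\sqrt d$. This recasts the problem as lower-bounding a sum of square roots of edge-minima along a Hamiltonian path on the value set $\{1,\dots,d\}$, where each consecutive pair $(k_{l-1},k_l)$ is an ``edge.''

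The combinatorial heart is the observation that each value $v\in\{1,\dots,d\}$ can equal $\min(k_{l-1},k_l)$ for at most two indices $l$: the value $v$ occupies a single position, which belongs to at most two consecutive pairs, and $v$ is the minimum of a pair only when it is one of its two entries (so in particular $v=d$ never occurs). Hence the size-$(d-1)$ multiset $\{\min(k_{l-1},k_l)\}_{l=1}^{d-1}$ contains every value with multiplicity at most two. I would then relax to minimizing $\sum_m\sqrt m$ over all multisets of size $d-1$ drawn from $\{1,\dots,d\}$ with multiplicities $\le 2$; since $\sqrt{\cdot}$ is increasing, the minimizer is the greedy choice $\{1,1,2,2,\dots\}$, which assigns the value $\lceil l/2\rceil$ to slot $l$. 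Because the true multiset lies in this relaxed family, this produces the permutation-independent lower bound $\sum_{l=1}^{d-1}\sqrt{\min(k_{l-1},k_l)}\ge \sum_{l=1}^{d-1}\sqrt{\lceil l/2\rceil}$.

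It then remains to prove $\sum_{l=1}^{d-1}\sqrt{\lceil l/2\rceil}\ge \tfrac{\sqrt2}{3}(d-1)\sqrt d$, for which I would use $\sqrt{\lceil l/2\rceil}\ge\sqrt{l/2}$ and estimate $\sum_{l=1}^{d-1}\sqrt l$ by an integral. The subtle point, and the main obstacle, is that the constant $\tfrac{\sqrt2}{3}$ is asymptotically tight, saturated by the alternating placement of large and small values (each large value then serving as the maximum of two edges), so a crude bound such as $\sum_{l=1}^{d-1}\sqrt l\ge\int_0^{d-1}\sqrt x\,dx=\tfrac23(d-1)^{3/2}$ is too lossy: it discards exactly the $O(\sqrt d)$ correction that decides the inequality and in fact fails already at $d=4$. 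Instead I would exploit the concavity of $\sqrt{\cdot}$, which gives the sharper midpoint estimate $\sqrt l\ge\int_{l-1/2}^{l+1/2}\sqrt x\,dx$ and hence $\sum_{l=1}^{d-1}\sqrt l\ge\tfrac23\big[(d-\tfrac12)^{3/2}-(\tfrac12)^{3/2}\big]$. Substituting collapses everything to the single-variable inequality $(d-\tfrac12)^{3/2}-(\tfrac12)^{3/2}\ge(d-1)\sqrt d$, which is elementary: it can be verified for small $d$ directly, and for large $d$ the expansion $(d-\tfrac12)^{3/2}=d^{3/2}-\tfrac34 d^{1/2}+O(d^{-1/2})$ leaves a positive margin of order $\tfrac14\sqrt d$ over $(d-1)\sqrt d=d^{3/2}-d^{1/2}$, so the concavity bound is just sharp enough to close the gap and recover the tight constant.
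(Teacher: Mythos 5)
Your proof is correct, and although it shares the broad skeleton of the paper's argument --- reduce to the extremal multiset of values $\{1,1,2,2,\dots\}$ and then compare a sum of square roots with an integral --- your treatment of the combinatorial core is genuinely different and in fact more rigorous than the paper's. The paper asserts that the minimum over permutations is attained by the alternating arrangement $(0,d-1,1,d-2,\dots)$, justified only by the heuristic that alternating large and small entries maximizes the values $\max(j_l,j_{l-1})$. Your multiplicity argument replaces this unproven extremality claim with a clean counting bound: each value $v$ occupies a single position, hence can be the minimum of at most two of the $d-1$ consecutive pairs (and $v=d$, i.e.\ $j=0$, of none), so the sorted multiset of edge-minima dominates $(\lceil l/2\rceil)_{l=1}^{d-1}$ entrywise. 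Since this is exactly the multiset realized by the paper's alternating permutation, your relaxation simultaneously proves the lower bound and certifies that the paper's claimed minimizer is optimal --- a genuine improvement in rigor. The analytic finish also differs: the paper bounds $\sum_{j}\sqrt{\min(j,d-j)}$ below by $\int_0^{d-1}\sqrt{\min(x,d-x)}\,\dd{x}$ directly, whereas you first pass to $\sqrt{l/2}$ and then need the sharper midpoint (concavity) estimate $\sqrt{l}\geq \int_{l-1/2}^{l+1/2}\sqrt{x}\,\dd{x}$; both routes recover the constant $\sqrt{2}/3$.

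Two small points to tighten. First, your side remark that the crude bound $\sum_{l=1}^{d-1}\sqrt{l}\geq \tfrac{2}{3}(d-1)^{3/2}$ ``fails already at $d=4$'' is slightly off: since $(d-1)^{3/2}<(d-1)\sqrt{d}$ for every $d\geq 2$, that bound is inconclusive for all $d$, not just from $d=4$ onward. Second, the terminal inequality $(d-\tfrac12)^{3/2}-(\tfrac12)^{3/2}\geq (d-1)\sqrt{d}$ is left as a sketch (small-$d$ checks plus an asymptotic expansion); in a final writeup it should be closed rigorously, which is easy: the difference of the two sides is increasing in $d$, because comparing the derivatives $\tfrac{3}{2}\sqrt{d-\tfrac12}$ and $\tfrac{3d-1}{2\sqrt{d}}$ after squaring reduces to $9d(d-\tfrac12)\geq (3d-1)^2$, i.e.\ $\tfrac32 d\geq 1$, and the base case $d=2$ holds numerically ($1.837\ldots - 0.354\ldots > 1.415$). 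With those cosmetic repairs your argument is complete.
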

\begin{proof}
Let us minimize over all possible permutations $j_l$, 
    \begin{equation}
    \label{eq:optLmaxmin}
L=\min_{j_l}\sum_{l=1}^{d-1}  \sqrt{1-\frac{\max(j_l,j_{l-1})}{d}}.
    \end{equation}
 The minimum on Eq.~\eqref{eq:optLmaxmin} is achieved for the permutation $(j_l)=(0,d-1,1,d-2,2,d-3,\dots),$ 
as the alternation between large and small numbers maximizes the values $\max(j_l,j_{l-1})$. For this permutation, $L=\sum_{j=1}^{d-1} \sqrt{\min(j,d-j)}/\sqrt{d}$ (this is easier to verify by separating the cases where $d$ is even or odd). We can lower bound this sum by the integral
\begin{equation*}
    L\geq \frac{1}{\sqrt{d}}\int_{0}^{ d-1 } \dd{x}\sqrt{\min(x,d-x)}
\geq \frac{\sqrt{2}}{3}  (d-1).\qedhere 
\end{equation*}
\end{proof}

\begin{prop}(First lower bound on Hamiltonian strength under periodic $k$\nobreakdash-HSE).
\label{prop:lowerboundonHamStrengthB1}
Let $H(t)$ be a periodic Hamiltonian with period $T$. If $H(t)$ satisfies $k$\nobreakdash-HSE, then $ B\coloneqq\int_0^T \dd{t}\norm{H(t)}_\infty\geq B_1$, as defined by Eq.~\eqref{eq:boundB1}.
\end{prop}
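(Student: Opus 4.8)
The plan is to reduce the $k$\nobreakdash-HSE condition to a statement about a single closed curve in a larger Hilbert space, and then play the quantum speed limit against the combinatorial Lemma~\ref{lemm:permboundedtracedist}. Concretely, I would lift the quasienergy-eigenstate trajectory to the symmetric subspace of $(\mathbb{C}^d)^{\otimes k}$, whose dimension is $D_{\mathrm{sym}}=\binom{k+d-1}{k}$, observe that $k$\nobreakdash-HSE forces this lifted curve to be a \emph{1}-design there, and then bound its length from below by $\tfrac{\sqrt2}{3}(D_{\mathrm{sym}}-1)$ while bounding it from above by $kB$. The two bounds together give $B\geq \tfrac{\sqrt2}{3k}(D_{\mathrm{sym}}-1)=B_1$.

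In detail, by Floquet's theorem the periodic Hamiltonian admits a quasienergy eigenstate, so applying $k$\nobreakdash-HSE to the initial state $\ket{\alpha}=\ket{\alpha(0)}$ gives a closed curve $\theta\mapsto\alpha(\theta)$ whose uniform temporal measure is a state $k$-design, $\E_\theta[\alpha(\theta)^{\otimes k}]=\rho^{(k)}_{\mathrm{Haar}}=\Pi^{(k)}_{\mathrm{sym}}/D_{\mathrm{sym}}$ [Eq.~\eqref{eq:rho_Haar_symm_proj}]. I define the lifted curve $\beta(\theta):=\alpha(\theta)^{\otimes k}$, a closed curve in the projective space of $\mathcal{H}_{\mathrm{sym}}=\mathrm{Sym}^k(\mathbb{C}^d)\cong\mathbb{C}^{D_{\mathrm{sym}}}$; the design identity says precisely that $\E_\theta[\beta(\theta)]=\mathds{1}_{\mathcal H_{\mathrm{sym}}}/D_{\mathrm{sym}}$, i.e.\ $\beta$ is a 1-design (1-HSE) on $\mathbb{P}(\mathcal{H}_{\mathrm{sym}})$. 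Since $\ket{\alpha(t)}^{\otimes k}$ evolves under the $k$-fold Hamiltonian $H^{(k)}(t)=\sum_{i=1}^{k}\mathds{1}^{\otimes(i-1)}\otimes H(t)\otimes\mathds{1}^{\otimes(k-i)}$, which satisfies $\norm{H^{(k)}(t)}_\infty\leq k\norm{H(t)}_\infty$ and preserves $\mathcal H_{\mathrm{sym}}$, Corollary~\ref{cor:dis_traveled_bound} applied in $\mathcal{H}_{\mathrm{sym}}$ bounds the total length of $\beta$ over one period by $\int_0^T\dd{t}\norm{H^{(k)}(t)}_\infty\leq kB$.

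The crux is the lower bound on the length of $\beta$. Here I would fix an orthonormal eigenbasis $\{\ket{e_j}\}_{j=0}^{D_{\mathrm{sym}}-1}$ of the maximally mixed state and track the populations $g_j(\theta)=\expval{\beta(\theta)}{e_j}$, each of which has time-average $1/D_{\mathrm{sym}}$ and which sum to one. From this population dynamics I would extract a cyclic sequence of points $\beta(\theta_0),\dots,\beta(\theta_{D_{\mathrm{sym}}-1})$ ordered along the curve and labelled by a permutation $(j_l)$ of $\{0,\dots,D_{\mathrm{sym}}-1\}$, arranged so that the squared overlap of consecutive points is at most $\max(j_{l-1},j_l)/D_{\mathrm{sym}}$; then each consecutive trace distance obeys $D(\beta(\theta_{l-1}),\beta(\theta_l))\geq\sqrt{1-\max(j_{l-1},j_l)/D_{\mathrm{sym}}}$, and since the length dominates any sum of consecutive distances, Lemma~\ref{lemm:permboundedtracedist} (applied in dimension $D_{\mathrm{sym}}$) gives length $\geq\tfrac{\sqrt2}{3}(D_{\mathrm{sym}}-1)$. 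Combining with the upper bound $kB$ yields $B\geq B_1$.

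The main obstacle is precisely this extraction step: turning the ``flat'' first moment $\E_\theta[\beta(\theta)]=\mathds{1}/D_{\mathrm{sym}}$ into an ordered family of curve points whose consecutive overlaps are controlled by a permutation in the exact form $\max(j_{l-1},j_l)/D_{\mathrm{sym}}$ demanded by Lemma~\ref{lemm:permboundedtracedist}. Everything else (the lift, the bound $\norm{H^{(k)}(t)}_\infty\le k\norm{H(t)}_\infty$, and invoking the speed limit) is routine; notably, the factor $1/k$ in $B_1$ originates entirely from the $k$-fold Hamiltonian norm, while the factor $\binom{k+d-1}{k}-1$ originates from applying the permutation lemma in the symmetric subspace.
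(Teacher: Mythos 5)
Your overall architecture is exactly the paper's: the paper first proves the $k=1$ case in dimension $d$ and then applies it to the symmetric-subspace Hamiltonian $H_{\mathrm{sym}}(t)=\sum_{j}\mathds{1}\otimes\cdots\otimes H(t)\otimes\cdots\otimes\mathds{1}$, using $\norm{H_{\mathrm{sym}}(t)}_\infty=k\norm{H(t)}_\infty$ and $d_{\mathrm{sym}}=\binom{k+d-1}{k}$ --- precisely your lift $\beta(\theta)=\alpha(\theta)^{\otimes k}$, with the factor $1/k$ and the factor $\binom{k+d-1}{k}-1$ arising in the same way. But the step you flag as ``the main obstacle'' is a genuine gap, and the route you sketch for it would fail: tracking the populations $g_j(\theta)=\expval{\beta(\theta)}{e_j}$ in a \emph{fixed} orthonormal basis only tells you that each $g_j$ averages to $1/D_{\mathrm{sym}}$, which controls overlaps of the curve with the basis vectors $\ket{e_j}$, not the pairwise overlaps between curve points $\beta(\theta_i)$ and $\beta(\theta_j)$ that Lemma~\ref{lemm:permboundedtracedist} needs. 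Nor can you ``arrange'' the cyclic sequence so that consecutive overlaps are controlled --- the visiting order is dictated by the curve and is out of your hands.

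The missing idea is a greedy, adaptive construction that simultaneously produces the overlap bounds and makes the ordering problem moot. Set $\theta_0=0$; having chosen $\theta_0,\dots,\theta_{j-1}$, let $\Pi_j$ be the orthogonal projector onto $\mathrm{span}\{\ket{\beta(\theta_0)},\dots,\ket{\beta(\theta_{j-1})}\}$. The 1-design identity gives $\E_{\theta}[\tr(\Pi_j\beta(\theta))]=\tr(\Pi_j)/D_{\mathrm{sym}}\leq j/D_{\mathrm{sym}}$, so there exists $\theta_j$ with $\tr(\Pi_j\beta(\theta_j))\leq j/D_{\mathrm{sym}}$, whence $\abs{\braket{\beta(\theta_i)}{\beta(\theta_j)}}^2\leq \tr(\Pi_j\beta(\theta_j))\leq j/D_{\mathrm{sym}}$ for \emph{every} $i<j$. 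Thus all pairs --- not merely consecutive ones --- obey $D(\beta(\theta_i),\beta(\theta_j))\geq\sqrt{1-\max(i,j)/D_{\mathrm{sym}}}$: the bound is controlled by the labels, not by positions along the curve. Then whichever cyclic order $(j_l)$ the curve actually visits the points in, the consecutive distances automatically satisfy the hypothesis of Lemma~\ref{lemm:permboundedtracedist}, whose minimization over all permutations exists precisely to absorb this uncontrolled ordering. With this insertion your argument closes and coincides with the paper's proof.
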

\begin{proof}
    We begin with the case $k=1$, where $B_1=\frac{\sqrt 2}{3}(d-1)$.  Consider a quasienergy eigenstate $\alpha(\omega t)\in \mathbb{P}(\mathbb{C}^d)$, whose existence is guaranteed by Floquet's theorem. We will apply Corollary~\ref{cor:dis_traveled_bound} on the state $\alpha(\theta)$ by finding a list of angles $\theta_0,\theta_1,\dots,\theta_{d-1}$ such the trace distance between the states $\alpha(\theta_j)$ is lower bounded as 
    \begin{equation}
D(\alpha(\theta_i),\alpha(\theta_j))\geq \sqrt{1-\frac{\max(i,j)}{d}}\label{eq:boundondistji}
    \end{equation} for $i\neq j$.
     As $\alpha(\theta)$ touches all the states $\alpha(\theta_j)$ in some order $\theta_{j_0}\leq \theta_{j_1}\leq \cdots \leq \theta_{j_{d-1}}$ (where $j_l$ is a permutation of the indices), Corollary \ref{cor:dis_traveled_bound} guarantees that
    $$B\geq  \sum_{l=1}^{d-1}  \sqrt{1-\frac{\max(j_l,j_{l-1})}{d}}\geq \frac{\sqrt{2}}{3}(d-1),$$
    where the second inequality is Lemma~\ref{lemm:permboundedtracedist}. 
    
    To construct the angles $\theta_j$ satisfying Eq.~\eqref{eq:boundondistji}, begin by setting  $\theta_0=0$. Now, inductively, assume we have already found the first $j<d$ angles $\theta_0,\theta_1,\dots,\theta_{j-1}$. We set $\Pi_j$ to be the orthogonal projector into  $\text{span}\{\ket{\alpha(\theta_0)},\dots,\ket{\alpha(\theta_{j-1})}\}$. By $1$-HSE, $\E_{\theta}[\tr(\Pi_j \alpha(\theta))]=\tr(\Pi_j)/d=j/d$, so there must exist $\theta_{j}$ such that $\tr(\Pi_j \alpha(\theta_{j}))\leq j/d$. For any $i< j$, we have $\abs{\braket{\alpha(\theta_i)}{\alpha(\theta_{j})}}^2\leq \tr(\Pi_j \alpha(\theta_{j}))\leq j/d,$
    so  $D(\alpha(\theta_i),\alpha(\theta_j))\geq \sqrt{1-j/d}$, yielding Eq.~\eqref{eq:boundondistji} and proving the bound for $k=1$.

    For the case where $k>1$, observe that the Hamiltonian $$H_{\text{sym}}(t)=\sum_{j=1}^k \mathds{1}\otimes \cdots \otimes \underbrace{H(t)}_{\mathclap{j\text{-th entry}}}\otimes \cdots \otimes \mathds{1}$$ acting on the symmetric subspace of $(\mathbb{C}^d)^{\otimes k}$ satisfies ${1}$\nobreakdash-HSE if and only if $H(t)$ satisfies $k$-HSE, because $\{\dyad{\phi}^{\otimes k}\}_\phi$  spans the space of  operators in the symmetric subspace [Eq.~(11b), Ref.~\cite{Harrow2013}]. Consequently, we can apply the case $k=1$  on $H_{\text{sym}}(t)$, which gives
    $\int_0^T \dd t\norm{H_\text{sym}(t)}_{\infty}\geq \frac{\sqrt 2}{3}(d_{\text{sym}}-1)$, where $d_{\text{sym}}=\binom{k+d-1}{k}$ is the dimension of the symmetric subspace. Finally, note that $\norm{H_\text{sym}(t)}_{\infty}=k\norm{H(t)}_{\infty}$, which yields the desired result.
\end{proof}

Now we prove the bound $B_2$ given by Eq.~\eqref{eq:boundB2}, for which we require the following lemma regarding the geometry of $\mathbb{P}({\mathbb{C}^d})$.

\begin{lemma}(Lower bound on the packing number of complex projective space).
\label{lemm:packingprojspace}
For any $\varepsilon\in(0,1)$, we can pack inside $\mathbb{P}(\mathbb{C}^d)$ at least $n=\lceil\varepsilon^{-2(d-1)}\rceil$ disjoint balls \footnote{Balls are taken with respect to the trace distance, i.e., $B({\phi,r})=\{\psi \in \mathbb{P}(\mathbb{C}^d) | D(\psi,\phi)<r\}$.} of radius $\varepsilon/2$, where $\lceil\,\cdot\,\rceil$ denotes the ceiling function.
\label{lemm:D2}
\begin{proof}
The following is a standard argument in covering and packing theory, which we include here for completeness.

 Let $n_{\text{max}}$ be largest number of disjoint balls of radius $\varepsilon/2$ that we can pack inside $\mathbb{P}(\mathbb{C}^d)$. Take $\mathcal{C}=\{\phi_1,\dots \phi_{n_\text{max}}\}$ to be the centers of the balls forming such a maximal packing.  The set $\mathcal{C}$ forms an $\varepsilon$-net, for, if it did not, there would exist a state which is more than $\varepsilon$ trace distance away from any state in $\mathcal{C}$, which would mean that we can add another ball of radius $\varepsilon/2$ to the packing, contradicting the maximality of $n_\text{max}$. Then, all the balls of radius $\varepsilon$ centered at points in $\mathcal{C}$ together cover the whole $\mathbb{P}(\mathbb{C}^d)$. Each ball has volume $V(\varepsilon)$, and normalizing the total volume of $\mathbb{P}(\mathbb{C}^d)$ to unity, we must have  $n_\text{max}V(\varepsilon)\geq 1$. Thus we can pack at least $n=\lceil(1/V(\varepsilon)\rceil\leq n_{\text{max}}$ balls of radius $\varepsilon$ inside   $\mathbb{P}(\mathbb{C}^d)$.

 To finish the proof, we explicitly compute the volume  $V(\varepsilon)$. The distribution of the overlap $x=\abs{\braket{\psi}{\phi_0}}^2=1-D(\psi,\phi_0)^2$ of a Haar-random state $\psi\in \mathbb{P}(\mathbb{C}^d)$ with a fixed state $\phi_0$ is given by 
$p(x)=(d-1)(1-x)^{d-2}$ [Eq.~(14) of Ref.~\cite{Kus1988}], from which we can derive the volume  \cite{Oszmaniec2024}
\begin{equation}
    V(\varepsilon)=\int_{D(\psi,\phi_0)<\varepsilon} \!\!\!\!\!\dd{\psi}=\int_{1-\varepsilon^2}^1 \dd{x}p(x) =\varepsilon^{2(d-1)},
\end{equation}
which gives $n=\lceil\varepsilon^{-2(d-1)}\rceil$, as claimed.
\end{proof}
\end{lemma}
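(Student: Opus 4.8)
The plan is to prove this by the standard packing--covering duality, whose only nontrivial ingredient is the exact volume of a trace-distance ball in $\mathbb{P}(\mathbb{C}^d)$. Throughout, I would normalize the unitarily invariant (Haar-induced) measure on $\mathbb{P}(\mathbb{C}^d)$ so that the total volume is $1$, and write $V(\varepsilon)$ for the volume of a ball $B(\phi_0,\varepsilon)=\{\psi : D(\psi,\phi_0)<\varepsilon\}$; by invariance of the measure this is independent of the center $\phi_0$.

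First I would pass from packing to covering. Let $n_{\max}$ denote the largest number of pairwise-disjoint balls of radius $\varepsilon/2$ that fit inside $\mathbb{P}(\mathbb{C}^d)$, and fix such a maximal packing with centers $\mathcal{C}=\{\phi_1,\dots,\phi_{n_{\max}}\}$. The key observation is that $\mathcal{C}$ must be an $\varepsilon$-net in the sense of Definition~\ref{def:enetdense}: if some $\psi$ satisfied $D(\psi,\phi_i)>\varepsilon$ for all $i$, then the ball of radius $\varepsilon/2$ about $\psi$ would be disjoint from every ball in the packing, contradicting maximality. Consequently the balls of radius $\varepsilon$ centered at the points of $\mathcal{C}$ cover all of $\mathbb{P}(\mathbb{C}^d)$, so their volumes sum to at least the total volume, $n_{\max}\,V(\varepsilon)\geq 1$, i.e.\ $n_{\max}\geq 1/V(\varepsilon)$.

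It then remains to evaluate $V(\varepsilon)$ exactly. By Eq.~\eqref{eq:tracedistancedef}, the condition $D(\psi,\phi_0)<\varepsilon$ is equivalent to the overlap $x=\abs{\braket{\psi}{\phi_0}}^2$ exceeding $1-\varepsilon^2$. I would invoke the known distribution of this overlap for a Haar-random $\psi$, namely the density $p(x)=(d-1)(1-x)^{d-2}$ on $[0,1]$, and integrate to obtain $V(\varepsilon)=\int_{1-\varepsilon^2}^1 (d-1)(1-x)^{d-2}\,\dd{x}=\varepsilon^{2(d-1)}$. Combined with the previous step this yields $n_{\max}\geq \varepsilon^{-2(d-1)}$, and since $n_{\max}$ is an integer it is at least $\lceil\varepsilon^{-2(d-1)}\rceil$, which is exactly the claim.

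The only genuine obstacle is the volume computation, which rests on correctly identifying the overlap density $p(x)=(d-1)(1-x)^{d-2}$ (equivalently, it is the real dimension $2(d-1)$ of $\mathbb{P}(\mathbb{C}^d)$ that surfaces as the exponent); the packing--covering step itself is entirely routine, provided one tracks the normalization of the measure so that ``volume'' is meaningful. A minor point to verify is the open/closed-ball convention in the maximality argument, but this affects only the boundary case and not the final count.
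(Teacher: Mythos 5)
Your proposal is correct and follows essentially the same argument as the paper's proof: maximal packing centers form an $\varepsilon$-net, hence the radius-$\varepsilon$ balls cover $\mathbb{P}(\mathbb{C}^d)$ and $n_{\max} V(\varepsilon)\geq 1$, with the volume $V(\varepsilon)=\varepsilon^{2(d-1)}$ computed from the overlap density $p(x)=(d-1)(1-x)^{d-2}$. The only (harmless) difference is that you extract the ceiling from integrality of $n_{\max}$, while the paper phrases it as packing $\lceil 1/V(\varepsilon)\rceil\leq n_{\max}$ balls.
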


\begin{prop}(Second lower bound on Hamiltonian strength under periodic $k$\nobreakdash-HSE).
\label{prop:lowerboundonHamStrengthB2}
Let $H(t)$ be a periodic Hamiltonian with period $T$. If $H(t)$ satisfies $k$\nobreakdash-HSE, then $ B\coloneqq\int_0^T \dd{t}\norm{H(t)}_\infty\geq B_2$, as defined by Eq.~\eqref{eq:boundB2}
\end{prop}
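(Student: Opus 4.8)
The plan is to combine the three tools already assembled: that a $k$-design forms a $\gamma$-net (Lemma~\ref{lemm:kdesignsformepsilonnets}), the packing bound on $\mathbb{P}(\mathbb{C}^d)$ (Lemma~\ref{lemm:packingprojspace}), and the state-independent quantum speed limit (Corollary~\ref{cor:dis_traveled_bound}). By Floquet's theorem the periodic Hamiltonian admits a quasienergy eigenstate, whose projective trajectory $\theta\mapsto\alpha(\theta)$ is a closed loop parametrized by the circle; $k$\nobreakdash-HSE makes its uniform temporal measure a state $k$-design, so by Lemma~\ref{lemm:kdesignsformepsilonnets} the image $\{\alpha(\theta)\}_\theta$ is a $\gamma$-net in $\mathbb{P}(\mathbb{C}^d)$, with $\gamma=\gamma(k,d)$ as in that lemma.

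First I would fix a packing parameter $\delta\in(0,1)$ and invoke Lemma~\ref{lemm:packingprojspace} to obtain $n=\lceil\delta^{-2(d-1)}\rceil$ centers $\phi_1,\dots,\phi_n$ of disjoint radius-$\delta/2$ balls, so that $D(\phi_i,\phi_j)\geq\delta$ for $i\neq j$. Because the trajectory is a $\gamma$-net, for each $i$ there is an angle $\theta_i$ with $D(\alpha(\theta_i),\phi_i)\leq\gamma$; the triangle inequality then gives $D(\alpha(\theta_i),\alpha(\theta_j))\geq\delta-2\gamma$ whenever $i\neq j$. The crucial step is to convert these $n$ well-separated visits into a lower bound on the traversed path length. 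Since $\alpha$ is a \emph{closed} loop, ordering the $\theta_i$ cyclically and applying Corollary~\ref{cor:dis_traveled_bound} over one full period bounds the total distance below by the sum of the $n$ cyclic segment lengths, each at least $\delta-2\gamma$ (the wrap-around segment being handled by one additional triangle inequality). This yields
\begin{equation*}
    B\geq n(\delta-2\gamma)\geq \delta^{-2(d-1)}(\delta-2\gamma).
\end{equation*}
Retaining the full factor $n$ rather than $n-1$ here is precisely what makes the constant come out exactly, and is the main subtlety of the argument.

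Next I would optimize the free parameter by setting $\delta=c\gamma$, turning the bound into $B\geq f(c)\,\gamma^{3-2d}$ with $f(c)=c^{-2(d-1)}(c-2)$. Maximizing $f$ over $c>2$ gives the stationary point $c^\star=\tfrac{4(d-1)}{2d-3}$ and $f(c^\star)=C=2^{5-4d}(d-1)^{2-2d}(2d-3)^{2d-3}$, establishing $B\geq C\gamma^{3-2d}=B_2$; this uses $\delta=c^\star\gamma<1$, which holds in the regime (large $k$) where $B_2$ is the dominant bound. Finally, to reach the explicit form I would upper bound $\gamma$ via $\gamma^2=1-R^{1/k}\leq-\tfrac1k\ln R$ with $R=(d-1)!/\prod_{j=1}^{d-1}(k+j)$, and then apply $\ln(k+j)\leq\ln(k+1)+\ln j$ to obtain $\gamma^2\leq(d-1)\log(k+1)/k$. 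Since $3-2d<0$, this lower-bounds $\gamma^{3-2d}$ by $(d-1)^{3/2-d}\big(k/\log(k+1)\big)^{d-3/2}$, and checking the $d$-dependent inequality $2^{2-2d}(d-1)^{7/2-3d}(2d-3)^{2d-3}d^d\geq1$ (tight at $d=2$) absorbs the remaining constants into $\tfrac{8}{(4d)^d}$, giving the stated bound. I expect the closed-loop counting that pins down $C$ exactly, together with this final constant-chasing, to be the only places requiring care; the rest is a direct assembly of the cited results.
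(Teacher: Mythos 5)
Your proposal is correct and follows essentially the same route as the paper's proof: a quasienergy eigenstate whose trajectory forms a $\gamma$-net (Lemma~\ref{lemm:kdesignsformepsilonnets}), a $\delta$-packing from Lemma~\ref{lemm:packingprojspace}, the cyclic application of Corollary~\ref{cor:dis_traveled_bound} giving $B\geq n(\delta-2\gamma)$, and optimization at $\delta=4(d-1)\gamma/(2d-3)$ yielding the same constant $C$. Your only deviation is making the final step explicit --- replacing the paper's appeal to Stirling's approximation with the elementary bounds $1-x\leq-\ln x$ and $\ln(k+j)\leq\ln(k+1)+\ln j$ plus a direct check of the $d$-dependent constant inequality --- which is a valid (indeed slightly more detailed) rendering of the same argument.
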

\begin{proof}
Again, consider a quasienergy eigenstate $\alpha(\omega t)\in \mathbb{P}(\mathbb{C}^d)$. By $k$\nobreakdash-HSE, the curve $\{\alpha(\theta)\,|\,\theta\in\mathbb{T}\}$ forms a $\gamma$-net, taking $\gamma$ as in Lemma \ref{lemm:kdesignsformepsilonnets}.  For any $\delta\in(0,1)$, we can pack at least $n=\lceil\delta^{-2(d-1)}\rceil$ balls of radius $\delta/2$ inside $\mathbb{P}(\mathbb{C}^d)$ (Lemma~\ref{lemm:D2}). That is, there exists a set of states $\{\phi_j\}_{j=1}^{n}$ (the centers of the balls) whose pairwise trace distances are lower bounded, $D(\phi_i,\phi_j)>\delta$ for $i\neq j$.  By the $\gamma$-net property, we can find angles $\theta_j$ so that $D(\alpha(\theta_j),\phi_j)<\gamma$. The angles may be assumed to be sorted, relabeling the $\phi_j$ otherwise. By Corollary~\ref{cor:dis_traveled_bound} and the triangle inequality,
      \begin{align}
          B&\geq \sum_{j=1}^{n} D(\alpha(\theta_{j}),\alpha(\theta_{j+1 \mod n}))\geq n\left(\delta-2\gamma\right)\nonumber \\&\geq \delta^{-2(d-1)}\left(\delta-2\gamma\right). \label{eq:Bgeqval}
      \end{align}
Maximizing Eq.~\eqref{eq:Bgeqval} over $\delta$, we get $B\geq B_2$ for $\delta=4 (d-1) \gamma /(2 d-3)<1$. 
One can verify that $B_2\geq \frac{8}{(4 d)^d}(\frac{k}{\log (k+1)})^{d-{3}/{2}}$ by applying Stirling's approximation to the binomial in $\gamma$.
\end{proof}

\section{Proof of Theorem \ref{th:03}}
\label{app:proofTH3}
Let $H(t)$ be an $m$-quasiperiodic Hamiltonian with a basis of piecewise smooth quasienergy eigenstates, i.e. such that the generalized Floquet decomposition given by Eq.~\eqref{eq:quasienergydecomp} holds. We will show that, under CUE, necessarily $m\geq d(d-1)$.

In Eq.~\eqref{eq:quasienergydecomp}, we may write $Q=\text{diag}(q_0,\dots,q_{d-1})$ as a diagonal matrix in the basis of QEs. Because global phases are irrelevant, we may assume that $Q$ is traceless, so that $q_{0}=-\sum_{\alpha=1}^{d-1}q_\alpha$, giving a total of $d-1$ rationally independent quasienergies $\bm{q}=(q_1,\dots, q_{d-1})$.  
The exponential $$e^{-iQt}=\text{diag}\big(e^{i\sum_{\alpha=1}^{d-1}q_\alpha t}, e^{-iq_1t},\dots, e^{-iq_{d-1}t}\big)$$ is a quasiperiodic function, with frequency vector contained in $\bm q$, and $U(t)=P(\bm \omega t ) e^{-iQ t}$ overall is a quasiperiodic function, with frequency vector contained in $(\bm q,\bm \omega)$. We say ``contained in," and not ``equal to," because the driving frequencies may be reducible (e.g. if there is rational dependence), but, regardless, we are guaranteed that the map $t\mapsto U(t)\otimes U(t)^*$ is $n$-quasiperiodic, for some $n\leq m+d-1$.

      Furthermore, if the evolution satisfies CUE,  the $n$-quasiperiodic map $t\mapsto U(t)\otimes U(t)^*$ densely visits the projective unitary group $\operatorname{PU}(d)$ (see Appendix~\ref{app:Uergimpliesdens}). Then the parent function $\mathbb{T}^{n}\to \operatorname{PU}(d)$ is also dense. By assumption, this map is piecewise smooth, so  $$m+d-1\geq n=\text{dim}(\mathbb{T}^{n})\geq \text{dim}(\operatorname{PU}(d))=d^2-1,$$
      which gives $m\geq d(d-1)$. 
\qed

\section{Complete unitary ergodicity with  $m=d^2-2$ tones}
\label{app:CUEdrive}

In this appendix, we explain how to construct a measure preserving surjective function from the $(d^2-1)$-dimensional torus to $\operatorname{SU}(d)$. We then utilize this map to construct a $(d^2-2)$-quasiperiodic Hamiltonian which has QEs and satisfies CUE.

We consider Hurwitz's Euler-angle parametrization of $\operatorname{SU}(d)$ \cite{Hurwitz1897,Zyczkowski1994,Diaconis2017}, which is constructed as follows.
    For $j\in\{1,2,\dots, d-1\}$, define the two-level unitary rotation matrices  $$R_j(\xi,\varphi,\eta)=\begin{pmatrix}
        \mathds{1}_{j-1} & & & \\
    & \cos \xi e^{i \eta} &-\sin \xi e^{i \varphi} & \\
    & \sin \xi e^{-i \varphi} & \cos \xi e^{-i \eta} & \\
    & & & \mathds{1}_{d-j-1}
    \end{pmatrix},$$
    and for $r\in\{1,2,\dots, j\}$ define the Euler angles,
    \begin{align}
        \xi_{r,j}\in [0,\tfrac{\pi}{2})&&\varphi_{r,j}\in [0,2\pi)&&\eta_{j}\in [0,2\pi).
    \end{align}
    which are, in total, $d^2-1$.
    Consider the matrices
       \begin{align*}
       E_1&=R_1(\xi_{1,1},\varphi_{1,1},\eta_{1})\\
       E_2&=R_2(\xi_{2,2},\varphi_{2,2},0)R_1(\xi_{1,2},\varphi_{1,2},\eta_{2})\\
       E_3&=R_3(\xi_{3,3},\varphi_{3,3},0)R_2(\xi_{2,3},\varphi_{2,3},0)R_1(\xi_{1,3},\varphi_{1,3},\eta_{3})\\
       \vdots \\
       E_j&=R_j(\xi_{j,j},\varphi_{j,j},0)R_{j-1}(\xi_{j-1,j},\varphi_{j-1,j},0)\cdots \\&R_2(\xi_{2,j},\varphi_{2,j},0) R_1(\xi_{1,j},\varphi_{1,j},\eta_{j}),
          \end{align*}
          for $j\leq  d-1$,
       and multiply them all together, to obtain   
$$V(\xi_{r,j},\varphi_{r,j},\eta_{j})=E_1E_2\cdots E_{d-1},$$
        which yields a parametrization of $\operatorname{SU}(d)$. One can compute the Haar measure of $\textrm{SU}(d)$ to be \cite{Diaconis2017}
        $$\ \dd{U}_{\text{Haar}} \propto \prod_{1\leq j \leq d-1} \dd \eta_{j} \prod_{1 \leq r\leq j}  \dd{\big[\left(\sin \xi_{r,j}\right)^{2r}\big]} \dd \varphi_{r,j},$$
        which means that this parametrization is not measure preserving, because of that $\sin$ term. However, we can make it measure preserving by considering a change of variables $\theta_{r,j}\mapsto \xi_{r,j}$ given by $$\xi_{r,j}(\theta_{r,j})=\arcsin(\abs{1-\theta_{r,j}/\pi}^{1/2r}),$$ for $\theta_{r,j}\in [0,2\pi)$, which gives $\dd{\big[\left(\sin \xi_{r,j}\right)^{2r}\big]}=\dd{\abs{1-\theta_{r,j}/\pi}}\propto \dd{\theta_{r,j}}$. Then the map $$(\theta_{r,j},\varphi_{r,j},\eta_{j})\mapsto V(\xi_{r,j}(\theta_{r,j}),\varphi_{r,j},\eta_{j})$$ is measure preserving from $\mathbb{T}^{d^2-1}$ to $\operatorname{SU}(d)$.

     To construct a drive that satisfies CUE with QEs, consider $q_1,\dots,q_{d-1},\omega_1,\omega_2,\dots, \omega_{d^2-2}$ to be rationally independent frequencies. We assign two of the Euler angles as $\eta_{d-1}=q_{1}t$,  $\varphi_{1,d-1}=(\omega_{d^2-2}-q_{1})t$. The remaining $d^2-3$ angles $\theta_{r,j},\varphi_{r,j},\eta_{j}$ are set to be equal to $\omega_1t,\dots \omega_{d^2-3}t$, respectively. By this assignment, the parametrization $V$ is a function of $\bm \omega t=(\omega_1,\omega_2,\dots, \omega_{d^2-2})t$ and $q_1t$. Now, using the fact that \begin{align*}
        R_1&\big(\xi,\varphi=(\omega+q)t,\eta=qt\big)=\\&R_1\big(\xi,\varphi=\omega t,\eta=0\big)\exp(-i\,\text{diag}(-q,q,0,\dots,0)t),
     \end{align*}
     it is seen that
     \begin{align}
         P(\bm \omega t)\coloneqq  V(\bm \omega t, q_1 t ) \exp(i\,\text{diag}(-q_1,q_1,0,\dots,0)t)V(\bm 0)^\dagger
     \end{align}
     depends only on $\bm \omega t$ and not on $q_1t$. Then, we may define the evolution operator by the generalized Floquet decomposition
    \begin{equation}
        \label{eq:utfloquetCUE} U(t)\coloneqq  P(\bm \omega t)e^{-iQt}= V(\bm \omega t, q_1 t ) e^{-i\widetilde{Q}t} V(\bm 0)^\dagger,
    \end{equation}
    where the matrix $Q=\text{diag}(-\sum_{\alpha=1}^{d-1} q_\alpha,q_1,q_2,\dots,q_{d-1})$ is the diagonal matrix of quasienergies, and $\widetilde{Q}=\text{ diag}(-\sum_{\alpha=2}^{d-1} q_\alpha,0,q_{2},\dots,q_{d-1})$.

     Because of the rational independence of  $(\bm \omega,\bm q)$, the map $t\mapsto (\bm{\omega},\bm q)t$ uniformly covers the ${(d^2+d-3)}$\nobreakdash-dimensional torus. Using that $V$ forms a measure preserving map, from $\mathbb{T}^{d^2-1}$ to $\operatorname{SU}(d)$, we obtain 
     \begin{align}
         \E_{t\geq 0}\Big[U(t)^{\otimes k,k}\Big]&=\E_{t\geq 0}\Big[ \E_{W\in\operatorname{SU}(d)}[(We^{-i\widetilde{Q}t})^{\otimes k,k}]\Big]\\&=\E_{W\in\operatorname{SU}(d)}\Big[W^{\otimes k,k}\Big],
     \end{align} where the second equality holds by the right invariance of the Haar measure. This proves that  the $(d^2-d)$-quasiperiodic Hamiltonian $H(t) = i(\partial_t U(t))U(t)^{\dagger}$ satisfies CUE and has, by construction, QEs with the preselected quasienergies $q_0,\dots,q_{d-1}$.

\section{Sufficient and necessary conditions for $k$\nobreakdash-HSE with quasienergy eigenstates}
In this appendix, we assume that an $m$-quasiperiodic Hamiltonian $H(t)$ has a basis of QE $\ket{\alpha(t)}=e^{-i q_\alpha t}\ket{\alpha(\bm \theta={\bm \omega t})}$, with $\alpha\in\{0,\dots,d-1\}$. We derive a property on $\ket{\alpha(\bm \theta={\bm \omega})}$ which is equivalent to $k$\nobreakdash-HSE. Specifically, we see that $k$\nobreakdash-HSE is equivalent to requiring that all tensor product combinations (of length $k$) of the states $\alpha(\bm \theta)$, averaged over the torus and symmetrized, are equal to $\rho_{\mathrm{Haar}}^{(k)}.$ To show this, we require first to compute the time average of an arbitrary state expanded in the basis of QE.  

\begin{lemma}
    Let $H(t)$ be a quasiperiodic Hamiltonian with QEs, such that the quasienergies (possibly excluding one) and  $\bm \omega$ are rationally independent and a state $\ket{\psi(t)}=\sum_{\alpha}c_\alpha e^{-iq_\alpha t}\ket{\alpha(\bm\theta=\bm\omega t)}$. Then \begin{align}
\label{eq:time_average_of_state_reduced_sim}
    \E_{t\geq0}[\psi(t)^{\otimes k}]= \sum_{{\alpha_1,\dots, \alpha_k=0}}^{d-1} \abs{c_{\alpha_1}}^2\cdots \abs{c_{\alpha_{k}}}^2\rho_{\mathrm{sym}(\bm \alpha)},
\end{align}
where $\rho_{\mathrm{sym}(\bm \alpha)}\coloneqq\mathcal{P}_{\bm \alpha}\Pi_{\mathrm{sym}}^{(k)}\E_{\bm\theta \in \mathbb{T}^m} \Big[\bigotimes_{j=1}^k{\alpha_j(\bm \theta)}\Big]\Pi_{\mathrm{sym}}^{(k)}$, with $\Pi_{\mathrm{sym}}^{(k)}$  the orthogonal projector into the symmetric subspace and $\mathcal{P}_{\bm \alpha}$ a normalization factor, equal to the total number of different permutations of $\bm \alpha=(\alpha_1,\dots,\alpha_k)$.
\label{lemm:G01}
\end{lemma}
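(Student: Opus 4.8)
The plan is to expand the $k$-fold tensor power in the quasienergy basis, convert the infinite-time average into a torus average via the Kronecker--Weyl theorem, dephase, and finally collapse the surviving terms onto the symmetric subspace.

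First I would write $\ket{\psi(t)} = \sum_\alpha c_\alpha e^{-iq_\alpha t}\ket{\alpha(\bm\omega t)}$, so that
\[
\psi(t)^{\otimes k} = \sum_{\bm\alpha,\bm\beta}\Big(\prod_{j=1}^k c_{\alpha_j}\bar c_{\beta_j}\Big)\, e^{-i\sum_j (q_{\alpha_j}-q_{\beta_j})t}\bigotimes_{j=1}^k \ket{\alpha_j(\bm\omega t)}\bra{\beta_j(\bm\omega t)},
\]
where $\bm\alpha,\bm\beta$ run over $\{0,\dots,d-1\}^k$. There are two sources of time dependence: the quasienergy phases and the winding $\bm\theta=\bm\omega t$ inside the state-valued functions. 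Since those functions are piecewise smooth on $\mathbb{T}^m$ and $\bm\omega$ together with the quasienergies (up to the one that may be absorbed into a global, hence irrelevant, phase) are rationally independent, the continuous Kronecker--Weyl theorem lets me replace $\E_{t\ge 0}[\cdot]$ by the uniform average over the product torus $\mathbb{T}^m\times\mathbb{T}^{d-1}$, with the angles $\bm\theta$ and the phase variables $\varphi_\gamma\leftrightarrow q_\gamma t$ equidistributing independently.

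Second, I would perform the phase average first. Writing $\Delta_\gamma:=N_\gamma(\bm\alpha)-N_\gamma(\bm\beta)$ for the difference in occupation numbers, the phase factor is $e^{-i\sum_\gamma \Delta_\gamma\varphi_\gamma}$, whose average over the phase torus is the indicator that all $\Delta_\gamma$ vanish. Combining $\sum_\gamma\Delta_\gamma=0$ (both tuples have length $k$) with rational independence forces $N_\gamma(\bm\alpha)=N_\gamma(\bm\beta)$ for every $\gamma$, so only pairs with $\bm\beta$ a rearrangement of $\bm\alpha$ survive. The average then reduces to $\sum_{\bm\alpha}\sum_{\bm\beta\sim\bm\alpha}\big(\prod_j c_{\alpha_j}\bar c_{\beta_j}\big)\,\E_{\bm\theta}\big[\bigotimes_j\ket{\alpha_j(\bm\theta)}\bra{\beta_j(\bm\theta)}\big]$, and along each such class the scalar prefactor collapses to $\prod_\gamma|c_\gamma|^{2N_\gamma}=\prod_j|c_{\alpha_j}|^2$, independent of the ordering.

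Finally, I would exploit that $\psi(t)^{\otimes k}$ lives in the symmetric subspace, so I may freely insert $\Pi_{\mathrm{sym}}^{(k)}$ on both sides. The key algebraic fact is that $\Pi_{\mathrm{sym}}^{(k)}\big(\bigotimes_j\ket{\gamma_j}\big)=\tfrac{1}{k!}\sum_\pi\bigotimes_j\ket{\gamma_{\pi(j)}}$ depends only on the multiset $\{\gamma_j\}$; hence, after sandwiching, every cross term with $\bm\beta\sim\bm\alpha$ collapses to the single operator obtained from the diagonal tuple, namely $\Pi_{\mathrm{sym}}^{(k)}\E_{\bm\theta}\big[\bigotimes_j\alpha_j(\bm\theta)\big]\Pi_{\mathrm{sym}}^{(k)}$. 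I expect the main obstacle to be the bookkeeping here: matching the $\mathcal{P}_{\bm\alpha}^2$ ordered pairs of a given composition on the left against the $\mathcal{P}_{\bm\alpha}$ diagonal tuples each carrying normalization $\mathcal{P}_{\bm\alpha}$ on the right, and verifying carefully that the off-diagonal cross terms genuinely reduce to the diagonal reference rather than to some new operator. Once this collapse and the counting are checked, grouping by composition yields exactly $\sum_{\bm\alpha}\prod_j|c_{\alpha_j}|^2\,\rho_{\mathrm{sym}(\bm\alpha)}$, completing the proof.
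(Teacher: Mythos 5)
Your proposal is correct and takes essentially the same route as the paper's proof: you expand $\psi(t)^{\otimes k}$ in the quasienergy basis, split the time average into a phase average and a torus average via rational independence (handling the leftover quasienergy as an irrelevant global phase, just as the paper does by shifting $H(t)$ by a multiple of the identity), and dephase so that only $\bm\beta\in\mathrm{Perms}(\bm\alpha)$ survive, exactly the paper's Eq.~\eqref{eq:time_integral_phases}. Your final step --- sandwiching with $\Pi_{\mathrm{sym}}^{(k)}$ on both sides (legitimate since $\psi(t)^{\otimes k}$ is supported on the symmetric subspace) and matching the $\mathcal{P}_{\bm\alpha}^2$ ordered pairs per class --- is an equivalent reformulation of the paper's identity \eqref{eq:permsysmm} followed by left multiplication with $\Pi_{\mathrm{sym}}^{(k)}$, and your counting checks out.
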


\noindent To gain intuition, it is useful to first understand the time-independent version of Lemma \ref{lemm:G01}, derived in Ref. \cite{Mark2024}. If the Hamiltonian has no time dependence, it has proper eigenstates $\ket{\alpha}$, and $\rho_{\mathrm{sym}(\bm \alpha)}=\mathcal{P}_{\bm \alpha}\Pi_{\mathrm{sym}}^{(k)}\bigotimes_{j=1}^k{\alpha_j}\Pi_{\mathrm{sym}}^{(k)}$ reduces to a symmetrized product of $\alpha_1,\dots,\alpha_k$.  
We generalize this result to quasiperiodic systems, where the only difference is an additional average over the torus.

\begin{proof}
    In the statement, we allow one quasienergy to not be rationally independent but, in fact, up to an irrelevant global phase, we can shift all quasienergies by adding a constant multiple of the identity to $H(t)$. This constant can be chosen to ensure that all quasienergies and $\bm \omega$ are rationally independent, which we henceforth assume. Moreover, note that although the quasienergies are only defined up to a shift $\bm n\cdot \bm \omega$, this condition is preserved upon substituting $q_\alpha\rightarrow q_\alpha+\bm n_\alpha\cdot \bm \omega$, so it is a well-defined condition on the quasienergy spectrum.

By the rational independence and the quasiperiodicity of the states $\alpha_j(t)$, we can split the time average in two separate averages, one corresponding to the winding quasienergy phases, and the other to the quasienergy eigenstates, defined over the torus,
\begin{align}
\label{eq:timeavestate}
    \E_{t\geq0}[\psi(t)^{\otimes k}]= &\sum_{\substack{\bm \alpha \bm  \beta}}\Bigg(\prod_{j=1}^k c_{\alpha_j}c_{\beta_j}^*\E_{t\geq 0}\Big[e^{-i\sum_{j=1}^k (q_{\alpha_j}-q_{\alpha_j'})t}\Big]\nonumber\\ & \quad\times\E_{\bm\theta \in \mathbb{T}^m}\Big[\bigotimes_{j=1}^k\dyad{\alpha_j(\bm \theta)}{\beta_j(\bm \theta)}\Big]\Bigg),
\end{align}
where the sum runs over all possible pairs of tuples of indices $\bm \alpha=(\alpha_1,\dots,\alpha_k)$, $\bm  \beta=(\beta_1,\dots,\beta_k)$.
 The time average of the exponential in Eq.~\eqref{eq:timeavestate} is 
 \begin{equation}
 \label{eq:time_integral_phases}
\E_{t\geq 0}[e^{-i\sum_{j=1}^k (q_{\alpha_j}-q_{\beta_j})t}]=\begin{cases}1\text{ if } \bm \beta \in \text{Perms}(\bm \alpha)\\ 0\text{ else,}\end{cases}
 \end{equation}
 where $\text{Perms}(\bm \alpha)$ is the set of all permutations of $\bm \alpha$. This is a consequence of the rational independence, which only allows the linear combination $\sum_{j=1}^k (q_{\alpha_j}-q_{\beta_j})$ to be zero if $\bm \beta$ is a permutation of $\bm \alpha$. This last statement is  called the no $k$-resonance condition in Ref.~\cite{Mark2024}. 
 
 By writing the symmetric projector explicitly in terms of permutation operators [Eq.~\eqref{eq:symm_proj}], we see that
 \begin{equation}
     \label{eq:permsysmm}
 \sum_{\bm \beta\in  \mathrm{Perms}(\bm \alpha)}\bigotimes_{j=1}^k\dyad{\alpha_j(\bm \theta)}{\beta_j(\bm \theta)}=\mathcal{P}_{\bm \alpha}\bigotimes_{j=1}^k{\alpha_j(\bm \theta)}\Pi_{\mathrm{sym}}^{(k)}.
  \end{equation}
 Inserting Eqs.~\eqref{eq:time_integral_phases} and \eqref{eq:permsysmm} into Eq.~\eqref{eq:timeavestate}, and overall left-multiplying by $\Pi_{\mathrm{sym}}^{(k)}$, we obtain Eq.~\eqref{eq:time_average_of_state_reduced_sim}.
\end{proof}
\begin{apptheorem}
\label{th:HSEconditionintermsofeigenstates}
Let $H(t)$ be a quasiperiodic Hamiltonian with QEs, such that the quasienergies (possibly excluding one) and  $\bm \omega$ are rationally independent. Then $H(t)$ satisfies $k$\nobreakdash-HSE if and only if for every $\bm \alpha=(\alpha_1,\dots,\alpha_k)\in \{0,\dots,d-1\}^k$,
\begin{equation}
\label{eq:thH2statmentform}
\rho_{\mathrm{sym}(\bm \alpha)}=\rho_{\mathrm{Haar}}^{(k)},
\end{equation}
where $\rho_{\mathrm{sym}(\bm \alpha)}$ is defined in Lemma \ref{lemm:G01} and $\rho_{\mathrm{Haar}}^{(k)}\coloneqq\E_{\phi\in\mathbb{P}(\mathbb{C}^d)}[\phi^{\otimes{k}}]$
.\end{apptheorem}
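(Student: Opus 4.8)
The plan is to use Lemma~\ref{lemm:G01} to convert the $k$\nobreakdash-HSE condition---an equality required to hold for \emph{every} initial state---into the operator equalities~\eqref{eq:thH2statmentform}, one per multi-index. Writing the initial state as $\ket{\psi(0)}=\sum_\alpha c_\alpha\ket{\alpha}$ and setting $p_\alpha\coloneqq\abs{c_\alpha}^2$, Lemma~\ref{lemm:G01} gives $\rho_{\mathrm{time}}^{(k)}=\sum_{\bm\alpha}p_{\alpha_1}\cdots p_{\alpha_k}\,\rho_{\mathrm{sym}(\bm\alpha)}$, where $\bm\alpha=(\alpha_1,\dots,\alpha_k)$ runs over $\{0,\dots,d-1\}^k$. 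As $\ket{\psi(0)}$ ranges over $\mathbb{P}(\mathbb{C}^d)$, the population vector $(p_0,\dots,p_{d-1})$ sweeps out the entire probability simplex. Thus the theorem is equivalent to the assertion that the homogeneous degree-$k$ operator-valued polynomial $\sum_{\bm\alpha}p_{\alpha_1}\cdots p_{\alpha_k}\,\rho_{\mathrm{sym}(\bm\alpha)}$ equals the constant $\rho_{\mathrm{Haar}}^{(k)}$ for all $p$ in the simplex if and only if $\rho_{\mathrm{sym}(\bm\alpha)}=\rho_{\mathrm{Haar}}^{(k)}$ for every $\bm\alpha$.

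For the ($\Leftarrow$) direction I would simply substitute $\rho_{\mathrm{sym}(\bm\alpha)}=\rho_{\mathrm{Haar}}^{(k)}$ into the expression above and factor, using $\sum_{\bm\alpha}p_{\alpha_1}\cdots p_{\alpha_k}=\big(\sum_\alpha p_\alpha\big)^k=1$ by normalization, so that $\rho_{\mathrm{time}}^{(k)}=\rho_{\mathrm{Haar}}^{(k)}$ for every initial state; this is precisely $k$\nobreakdash-HSE.

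For the ($\Rightarrow$) direction I would first record the permutation symmetry of the coefficients: since $\rho_{\mathrm{sym}(\bm\alpha)}$ is built from $\Pi_{\mathrm{sym}}^{(k)}\,\E_{\bm\theta}\big[\bigotimes_j\alpha_j(\bm\theta)\big]\,\Pi_{\mathrm{sym}}^{(k)}$ sandwiched between symmetric projectors---which absorb the permutation operators, $\Pi_{\mathrm{sym}}^{(k)}V_\pi=V_\pi\Pi_{\mathrm{sym}}^{(k)}=\Pi_{\mathrm{sym}}^{(k)}$---and the prefactor $\mathcal{P}_{\bm\alpha}$ depends only on the multiset of indices, $\rho_{\mathrm{sym}(\bm\alpha)}$ depends only on the \emph{type} $\lambda$ of $\bm\alpha$, i.e.\ on the multiplicities $(m_0,\dots,m_{d-1})$ with $\sum_\alpha m_\alpha=k$. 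Writing $\Delta_\lambda\coloneqq\rho_{\mathrm{sym}(\bm\alpha)}-\rho_{\mathrm{Haar}}^{(k)}$ for any $\bm\alpha$ of type $\lambda$, and using $\rho_{\mathrm{Haar}}^{(k)}=\big(\sum_\alpha p_\alpha\big)^k\rho_{\mathrm{Haar}}^{(k)}$, the $k$\nobreakdash-HSE hypothesis becomes $\sum_{\bm\alpha}p_{\alpha_1}\cdots p_{\alpha_k}\,\Delta_{\bm\alpha}=0$, which regroups into $\sum_\lambda\binom{k}{m_0,\dots,m_{d-1}}\,p^\lambda\,\Delta_\lambda=0$, where $p^\lambda=\prod_\alpha p_\alpha^{m_\alpha}$ and the sum runs over types of degree $k$. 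Since this vanishes on the simplex and every monomial is homogeneous of degree $k$, rescaling $p\mapsto p/\sum_\alpha p_\alpha$ shows it vanishes on the entire open positive orthant, hence it is the zero polynomial; as the distinct monomials $p^\lambda$ are linearly independent, this forces $\Delta_\lambda=0$ for every type $\lambda$, which is~\eqref{eq:thH2statmentform}.

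The main obstacle I anticipate is the final coefficient-extraction step: $k$\nobreakdash-HSE supplies only the vanishing of the polynomial on the probability simplex, not on an open subset of $\mathbb{R}^d$, so I must carefully invoke homogeneity to promote the identity to the positive orthant before appealing to linear independence of monomials. A secondary point requiring care is verifying rigorously that $\rho_{\mathrm{sym}(\bm\alpha)}$ is genuinely permutation-invariant in $\bm\alpha$, since without this the regrouping into multinomial coefficients---and hence the whole reduction to types---would break down.
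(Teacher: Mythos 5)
Your proposal is correct and follows essentially the same route as the paper's proof: both directions reduce, via Lemma~\ref{lemm:G01}, to an identity of operator-valued polynomials in the populations $p_\alpha=\abs{c_\alpha}^2$, with the converse obtained by sweeping the probability simplex and comparing coefficients, using the permutation invariance of $\rho_{\mathrm{sym}(\bm\alpha)}$ to handle repeated monomials. If anything, your treatment of the converse is slightly more careful than the paper's: where the paper simply asserts that agreement on the simplex forces the polynomials $P$ and $Q$ to ``be equal everywhere,'' you correctly identify that this step needs homogeneity of degree $k$ (after rewriting $\rho_{\mathrm{Haar}}^{(k)}$ with the factor $\big(\sum_\alpha p_\alpha\big)^k$) so the identity can be rescaled from the simplex to the open positive orthant before invoking linear independence of the monomials $p^\lambda$.
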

\begin{proof}

This result follows entirely from Eq.~\eqref{eq:time_average_of_state_reduced_sim}. If we assume Eq.~\eqref{eq:thH2statmentform}, then Eq.~\eqref{eq:time_average_of_state_reduced_sim} reduces to
$k$\nobreakdash-HSE, by noting that $\sum_{\bm \alpha} \abs{c_{\alpha_1}}^2\cdots \abs{c_{\alpha_{k}}}^2=1$. Conversely, if we assume $k$\nobreakdash-HSE, then from Eq.~\eqref{eq:time_average_of_state_reduced_sim}, we see that the polynomials  defined over all $\mathbb{R}^d$, \begin{align}
\label{eq:polexpr}
    P(x_0,\dots,x_{d-1})&= \sum_{{\bm \alpha} } \rho_{\mathrm{sym}({\bm \alpha}) } x_0^{n_0}x_1^{n_1}\cdots x_{d-1}^{n_{d-1}}, \\Q(x_0,\dots,x_{d-1})&= \sum_{\bm \alpha}\,\,\,\,\rho_{\text{Haar}}^{(k)}\,\,\,\, x_0^{n_0}x_1^{n_1}\cdots x_{d-1}^{n_{d-1}},\nonumber
\end{align}
 coincide for values $(x_1,\dots,x_k)\in[0,1]^k$ that satisfy $\sum_{j=1}^k {x_k}=1$, where $n_\alpha$ counts the number of times $\alpha$ appears in the tuple $\bm \alpha=(\alpha_1,\dots,\alpha_k)$. This is seen by taking the initial state $\ket{\psi}$  to have coefficients $c_{\alpha}=\sqrt{x_{\alpha}}$. It follows that $P$ and $Q$ must be equal everywhere, and thus equal as polynomials, meaning that each of their coefficients is the same. Note that there may be repeated terms in the expressions \eqref{eq:polexpr}, due to the existence permutations of $(\alpha_1,\dots,\alpha_k)$ that produce the same values of $n_\alpha$. However, by the symmetry of $\rho_{\mathrm{sym}(\bm\alpha)}$, the coefficients for the repeated terms are the same, 
 guaranteeing that $\rho_{\mathrm{sym}(\bm\alpha)}=\rho^{(k)}_{\text{Haar}}$ for all $\bm \alpha$.
\end{proof}

Theorem~\ref{th:HSEconditionintermsofeigenstates} provides a set of conditions to verify $k$\nobreakdash-HSE in quasiperiodic systems which feature QEs. Moreover, as we prove below, when applied to a single-qubit Hamiltonian, these conditions simplify greatly: One just needs to analyze a single quasienergy eigenstate to guarantee that the whole system is $k$\nobreakdash-HSE (and further $k$\nobreakdash-UE  by Corollary~\ref{cor:qubitHSEimpliesUE}).
\begin{corollary}
\label{cor:singleeigenstateimpliesUE}
    If $H(t)$ is a single-qubit quasiperiodic Hamiltonian with a quasienergy eigenstate that satisfies the $k$\nobreakdash-HSE (CHSE) condition [Eq.~\eqref{eq:kHSE_cond}] and a  quasienergy that is rationally independent from the driving frequencies, then $H(t)$ satisfies $k$\nobreakdash-UE (CUE).
\end{corollary}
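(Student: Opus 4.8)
The plan is to establish the full $k$\nobreakdash-HSE condition for the qubit system through Theorem~\ref{th:HSEconditionintermsofeigenstates}, and then upgrade it to $k$\nobreakdash-UE via the qubit equivalence of Corollary~\ref{cor:qubitHSEimpliesUE}; the CHSE/CUE statement follows by letting $k\to\infty$. First I would translate the hypothesis into a statement about a single Bloch-sphere curve: since dynamics initialized in a quasienergy eigenstate is pure winding (the quasienergy phase is projected out), the $k$\nobreakdash-HSE condition Eq.~\eqref{eq:kHSE_cond} applied to the eigenstate $\ket{0(\bm\theta)}$ reads $\E_{\bm\theta\in\mathbb{T}^m}[0(\bm\theta)^{\otimes k}]=\rho_{\mathrm{Haar}}^{(k)}$, i.e. the distribution of this one eigenstate over the torus forms a state $k$-design, and hence, by Proposition~\ref{prop:kdesignsarekm1designs}, an $m$-design for every $m\le k$.

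By Theorem~\ref{th:HSEconditionintermsofeigenstates}, full $k$\nobreakdash-HSE is equivalent to $\rho_{\mathrm{sym}(\bm\alpha)}=\rho_{\mathrm{Haar}}^{(k)}$ for \emph{every} tuple $\bm\alpha\in\{0,1\}^k$, where $\rho_{\mathrm{sym}(\bm\alpha)}$ is built (Lemma~\ref{lemm:G01}) from the torus average $\E_{\bm\theta}[\bigotimes_{j=1}^k \alpha_j(\bm\theta)]$. The decisive step is to reach the mixed tuples from the single given one, and this is exactly where $d=2$ enters: orthogonality and completeness force $1(\bm\theta)=\mathds{1}-0(\bm\theta)$, so I would expand every factor carrying a $1$. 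Writing $B=\{j:\alpha_j=1\}$ and placing $0(\bm\theta)$ on the remaining slots, this yields
\[ \E_{\bm\theta}\Big[\textstyle\bigotimes_{j} \alpha_j(\bm\theta)\Big]=\sum_{S\subseteq B}(-1)^{|S|}\,\E_{\bm\theta}\big[0(\bm\theta)^{\otimes(k-|S|)}\big]\otimes\mathds{1}^{\otimes|S|}, \]
understood with $\mathds{1}$ on the slots of $S$ and the identical realization $0(\bm\theta)$ on the others. Each summand involves a moment of $0(\bm\theta)$ of order $k-|S|\le k$, which the design hypothesis pins to its Haar value.

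Consequently $\E_{\bm\theta}[\bigotimes_{j}\alpha_j(\bm\theta)]$, and therefore $\rho_{\mathrm{sym}(\bm\alpha)}$, depends on the eigenstate family only through its moments of order at most $k$, and these coincide with those of a genuinely Haar-distributed orthonormal frame $(\ket{u},\ket{u^\perp})$ obtained by substituting $0(\bm\theta)\to\dyad{u}$ and $1(\bm\theta)\to\dyad{u^\perp}$. For such a frame the symmetric tensors built from $n_0$ copies of $\ket{u}$ and $n_1=k-n_0$ copies of $\ket{u^\perp}$ are precisely the weight states of the spin-$k/2$ irreducible representation of $\mathrm{SU}(2)$ acting on the $(k+1)$-dimensional symmetric subspace; Schur's lemma then gives $\rho_{\mathrm{sym}(\bm\alpha)}=\Pi_{\mathrm{sym}}^{(k)}/(k+1)=\rho_{\mathrm{Haar}}^{(k)}$ (equivalently, a Haar-random $P(\bm\theta)$ manifestly satisfies $k$\nobreakdash-HSE, so Theorem~\ref{th:HSEconditionintermsofeigenstates} already forces all its $\rho_{\mathrm{sym}(\bm\alpha)}$ to the Haar value). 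Hence every tuple obeys Eq.~\eqref{eq:thH2statmentform}, the system is $k$\nobreakdash-HSE, and Corollary~\ref{cor:qubitHSEimpliesUE} upgrades this to $k$\nobreakdash-UE.

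The main obstacle is the passage from the single supplied tuple to all mixed tuples, and it rests entirely on the identity $1(\bm\theta)=\mathds{1}-0(\bm\theta)$, which ensures the orthogonal eigenstate carries no information beyond $0(\bm\theta)$ and collapses all mixed moments onto moments of $0(\bm\theta)$ of order $\le k$. This mechanism is intrinsically two-dimensional: for $d\ge 3$ the orthogonal complement of a quasienergy eigenstate is not determined by it, and additional irreducible components appear in $U\mapsto U^{\otimes k,k}$ that are absent from the symmetric representation, so the corollary genuinely fails there, consistent with the remark following Theorem~\ref{th:qubitallstatekdesignisunitarydesign}.
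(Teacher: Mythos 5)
Your proposal is correct, and its skeleton is the same as the paper's: reduce to the tuple-wise conditions of Theorem~\ref{th:HSEconditionintermsofeigenstates}, exploit the qubit resolution of the identity $1(\bm\theta)=\mathds{1}-0(\bm\theta)$ to expand every mixed tuple into moments $\E_{\bm\theta}[0(\bm\theta)^{\otimes(k-j)}]$ of the single given eigenstate, pin those to Haar values using the design hypothesis together with Proposition~\ref{prop:kdesignsarekm1designs}, and upgrade $k$\nobreakdash-HSE to $k$\nobreakdash-UE via Corollary~\ref{cor:qubitHSEimpliesUE}. The one genuine divergence is the final evaluation: the paper substitutes $\rho_{\mathrm{Haar}}^{(k-j)}=\Pi_{\mathrm{sym}}^{(k-j)}/(1+k-j)$, uses $\Pi_{\mathrm{sym}}^{(k)}\big(\mathds{1}^{\otimes j}\otimes \Pi_{\mathrm{sym}}^{(k-j)}\big)\Pi_{\mathrm{sym}}^{(k)}=\Pi_{\mathrm{sym}}^{(k)}$, and computes the alternating sum $\binom{k}{\abs{\bm\alpha}}\sum_{j}\binom{\abs{\bm\alpha}}{j}(-1)^{\abs{\bm\alpha}-j}/(1-j+k)=1/(1+k)$ explicitly, whereas you sidestep all combinatorics by moment matching: since $\rho_{\mathrm{sym}(\bm\alpha)}$ is a fixed linear functional of the order-$\leq k$ moments of $0(\bm\theta)$ alone, it must coincide with its value for a Haar-random frame $(\ket{u},\ket{u^\perp})$, which Schur's lemma on the spin-$k/2$ irrep fixes to $\Pi_{\mathrm{sym}}^{(k)}/(k+1)=\rho_{\mathrm{Haar}}^{(k)}$ (and the normalization works out, since $\mathcal{P}_{\bm\alpha}\Pi_{\mathrm{sym}}^{(k)}(\dyad{0}^{\otimes n_0}\otimes\dyad{1}^{\otimes n_1})\Pi_{\mathrm{sym}}^{(k)}$ is exactly a Dicke-state projector). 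Your route buys freedom from verifying a combinatorial identity, at the cost of an auxiliary comparison ensemble; note only that your parenthetical shortcut applies Theorem~\ref{th:HSEconditionintermsofeigenstates} to a Haar frame rather than a torus-parametrized one, which requires observing that its ``only if'' direction is purely a statement about the frame distribution. Two cosmetic slips, neither affecting the logic: the sign in your displayed expansion should be $(-1)^{\abs{B}-\abs{S}}$ rather than $(-1)^{\abs{S}}$, since each slot of $B\setminus S$ contributes a factor $-0(\bm\theta)$ (your argument never evaluates the sum, and the same functional applies to both ensembles, so nothing breaks); and the translation of time averages into torus averages tacitly uses equidistribution of $\bm\omega t$ and the existence of the second eigenstate, both of which the paper's framework supplies (the latter via Ref.~\cite{Blekher1992}).
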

\begin{proof}
   Assume that $0(\bm \theta)$ satisfies the $k$\nobreakdash-HSE condition. We will show that this implies that $H(t)$ satisfies $k$\nobreakdash-UE. 
   
   The second quasienergy eigenstate is guaranteed to exist (see Corollary 3.4 in Ref.~\cite{Blekher1992}), determined by the resolution of the identity $1(\bm \theta)=\mathds{1}-0(\bm \theta)$. We compute $\Pi_{\mathrm{sym}}^{(k)}{\mathbb{E}}_{\bm \theta}\big[\bigotimes_{j=1}^k {\alpha_j}(\bm \theta)\big]\Pi_{\mathrm{sym}}^{(k)}$ for arbitrary $\bm \alpha\in \{0,1\}^k$ by noting that, in between the projectors to the symmetric subspace $\Pi_{\mathrm{sym}}^{(k)}$, the tensor product $\otimes$ becomes commutative, allowing for algebraic manipulation,
\begin{align*}
    &\frac{\rho_{\mathrm{sym}(\bm \alpha)}}{\mathcal{P}_{\bm \alpha}}=  \Pi_{\mathrm{sym}}^{(k)} \mathbb{E}_{\bm \theta}\Big[0(\bm \theta)^{\otimes k- \abs{\bm \alpha}}\otimes\left(\mathds{1}-0(\bm \theta)\right)^{\otimes \abs{\bm \alpha}}\Big] \Pi_{\mathrm{sym}}^{(k)} \\
    &=\Pi_{\mathrm{sym}}^{(k)} \sum_{j=0}^\abs{\bm \alpha}\binom{\abs{\bm \alpha}}{j}(-1)^{\abs{\bm \alpha}-j}\Big(\mathds{1}^{\otimes j}\otimes \mathbb{E}_{\bm \theta}\Big[0(\bm \theta)^{\otimes k-j}\Big] \Big)\Pi_{\mathrm{sym}}^{(k)} 
    \end{align*}
where we used the binomial theorem in the second equality.

Using that the state $0(\bm\theta)$ satisfies the $k$\nobreakdash-HSE condition, and in consequence the $(k-j)$-HSE condition [see Corollary~\ref{cor:relationsHSEandUE} (a)], we have $\mathbb{E}_{\bm \theta}\big[0(\bm \theta)^{\otimes k-j}\big]=\rho_{\text{Haar}}^{(k-j)}={\Pi_{\mathrm{sym}}^{(k-j)}}/{(1-j+k)}$, by Eq.~\eqref{eq:rho_Haar_symm_proj} with $d=2$. Further noting that 
$ \Pi_{\mathrm{sym}}^{(k)}(\mathds{1}^{\otimes j}\otimes \Pi_{\mathrm{sym}}^{(k-j)} )\Pi_{\mathrm{sym}}^{(k)}= \Pi_{\mathrm{sym}}^{(k)}$, we may compute
    \begin{align}
    {\rho_{\mathrm{sym}(\bm \alpha)}}
      &=  \binom{k}{{\abs{\bm{\alpha}}}}\sum_{j=0}^\abs{\bm \alpha}\binom{\abs{\bm \alpha}}{j}\frac{(-1)^{\abs{\bm \alpha}-j}}{1 - j + k} \Pi_{\mathrm{sym}}^{(k)}\\&=\frac{\Pi_{\mathrm{sym}}^{(k)}}{1+k}= \rho_{\text{Haar}}^{(k)}.
\end{align}
By Theorem \ref{th:HSEconditionintermsofeigenstates}, the Hamiltonian satisfies $k$\nobreakdash-HSE, and by Corollary~\ref{cor:qubitHSEimpliesUE}, this further implies $k$\nobreakdash-UE. 
\end{proof}
\section{$k$\nobreakdash-HSE in the frequency lattice}
\label{app:Fourier}
In this appendix we derive a set of equations in Fourier space, which are sufficient and necessary for the system to satisfy $k$\nobreakdash-HSE. We consider the case where the quasienergy eigenstates  $\ket{\alpha(t)}=e^{-i q_\alpha t}\ket{\alpha(\bm \theta={\bm \omega t})}$ exist and allow for a Fourier decomposition
\begin{equation}
\label{eq:Fourier_transform_QE}
    \ket{\alpha(\bm \theta)}=\sum_{\bm n\in\mathbb{Z}^m} \ket{\alpha_{\bm n}}e^{-i\bm n\cdot \bm\theta}.
\end{equation}
The Fourier components $\ket{\alpha_{\bm n}}$ do not need to be normalized. They can be understood as the partial components of the eigenstates of a time-independent Hamiltonian defined over a so-called frequency lattice \cite{Ho1983,Verdeny2016,Long2022}.

All the information about the dynamics is encoded in the Fourier components $\ket{\alpha_{\bm n}}$, allowing us to write $k$\nobreakdash-HSE as a condition in terms of them. By Fourier transforming the matrices $\rho_{\mathrm{sym}({\bm \alpha})}$ in  Theorem~\ref{th:HSEconditionintermsofeigenstates} and assuming the rational independence hypothesis, $k$\nobreakdash-HSE can be recast as
\begin{equation}
\label{eq:kHSEconstraints}
    \mathcal{P}_{\bm \alpha}\Pi_{\mathrm{sym}}^{(k)}\sum_{\bm n_j,\bm n'_j\in\mathcal{K}}\bigotimes_{j=1}^k\dyad*{{\alpha_j}_{\bm n_j}}{{\alpha_j}_{\bm n'_j}}\Pi_{\mathrm{sym}}^{(k)}=\rho_{\mathrm{Haar}}^{(k)},
\end{equation}
for all $\bm \alpha =(\alpha_1,\dots,\alpha_k)$,
where the sum runs over $$\mathcal{K}=\Big\{(\substack{\bm n_1,\dots, \bm n_k,\\ \bm n'_1,\dots, \bm n'_k})\in(\mathbb{Z}^m)^{2k}\,\Big|\, \sum_{j=1}^k\bm n_j={\sum_{j=1}^k \bm n'_j}\Big\}.$$ The Fourier components must satisfy an additional orthonormality constraint, due to the unitarity of the dynamics: The orthonormality condition of the quasienergy eigenstates $\forall \bm{\theta}\in \mathbb{T}^m\colon\braket{\alpha(\bm \theta)}{\alpha'(\bm \theta)}=\delta_{\alpha\alpha'}$ is Fourier transformed, via the convolution theorem, to
\begin{align}
\label{eq:orthonormalityconstraints}
 \forall{\bm n'}\in\mathbb{Z}^m:\sum_{\bm n\in\mathbb{Z}^m} \braket{{\alpha_{ \bm n}}}{{\alpha'}_{\bm n' + \bm n}}=\delta_{\alpha\alpha'} \delta_{\bm n'\bm 0}.
\end{align}

Equations \eqref{eq:orthonormalityconstraints} and \eqref{eq:kHSEconstraints} completely characterize the Fourier components of the QEs under $k$\nobreakdash-HSE, in the sense that if one constructs a family of vectors satisfying them, it is possible to then reconstruct  an  $m$-quasiperiodic Hamiltonian that satisfies $k$\nobreakdash-HSE. This can be done by constructing the quasienergy eigenstates $\ket{\alpha(\bm \theta)}$ via Eq.~\eqref{eq:Fourier_transform_QE}, and from them the evolution operator via the generalized Floquet decomposition Eq.~\eqref{eq:quasienergydecomp}, where the (rationally independent) quasienergies and driving frequencies can be chosen freely.

For brevity, we say that a set of vectors $\ket{\alpha_{\bm n}}\in\mathbb{C}^d$, with $\alpha\in\{0,\dots,d-1\}$ and $\bm n\in\mathbb{Z}^m$, is an  $(m,k)$-ergodic lattice [$(m,k)$-EL] if Eqs.~\eqref{eq:kHSEconstraints} and \eqref{eq:orthonormalityconstraints} are satisfied. In what is left of this appendix, we provide examples of finite $(m,k)$-ELs, where finite means that there is only a finite number of nonzero vectors $\ket{\alpha_{\bm n}}$. Finite $(m,k)$-ELs give rise to $m$-quasiperiodic Hamiltonians with analytic time dependence, which have QEs and satisfy $k$\nobreakdash-HSE.

A $(1,1)$-EL yields a periodic Hamiltonian that satisfies $1$-HSE. For $m=1$, $k=1$,  Eq.~\eqref{eq:kHSEconstraints} reduces to $\sum_{n\in\mathbb{Z}}\dyad{\alpha_n}=\mathds{1}/d$, which is readily satisfied, along with Eq.~\eqref{eq:orthonormalityconstraints} by \begin{equation}
    \ket{\alpha_n}= \frac{1}{\sqrt{d}}e^{2\pi i  n \alpha/d}\ket{v_{ n}}
\end{equation}
for $ n\in \{0, \dots, d-1\}$ (and $\ket{\alpha_n}=0$ for other $ n \in \mathbb{Z}$), where $\{\ket{v_l}\}_{l=0}^{d-1}$ forms an orthonormal basis of $\mathbb{C}^d$. This proves that $1$-HSE is achievable by analytic time-periodic dynamics, in arbitrary dimension.

We now specialize to the case of a single qubit, $d=2$. By Corollary~\ref{cor:singleeigenstateimpliesUE}, to guarantee $k$\nobreakdash-UE we only need to study the components of one quasienergy eigenstate, say $\ket{0_{\bm n}}$. The components of the orthogonal state are determined by $\ket{1_{\bm n}}=\ket{0_{-\bm n}}_{\perp}$, where $\tbinom{a}{b}_\perp=\tbinom{-b^*}{a^*}$. Consequently, it is enough to solve Eq.~\eqref{eq:kHSEconstraints} for $\bm \alpha=(0,0,\dots,0)$, i.e.
$ \sum_{\bm n_j,\bm n'_j\in\mathcal{K}}\bigotimes_{j=1}^k\dyad*{{0}_{\bm n_j}}{{0}_{\bm n'_j}}=\rho_{\mathrm{Haar}}^{(k)}$. We numerically find solutions for $(m=1, k=2)$, and $(m=2, k=3)$, giving rise to single-qubit periodic and two-quasiperiodic analytic Hamiltonians which satisfy $2$-UE and $3$-UE, respectively.

An $(m=1,k=2)$-EL in a qubit is generated by
\begin{equation}
    \begin{array}{ccccc}
    \ket{0_n}=&a_+\ket{\phi_-},\, & -a_-\ket{\phi_+},\, & a_-\ket{\phi_-},\, & -a_+\ket{\phi_+}\\
    \,&(n=0) &(n=1)&(n=2)&(n=3),\\\end{array}
\end{equation}
and $\ket{0_n}=0$ for other $n\in\mathbb{Z}$,
where $a_\pm=\frac{1}{2} \sqrt{1\pm\frac{1}{\sqrt{3}}}$ and $\ket{\phi_\pm}$ are any basis states.  The state $\ket{0(\theta)}=\sum_{n}e^{-i\theta}\ket{(\alpha=0)_n}$ is displayed in Fig~\ref{fig:5}b, with the selection $\ket{\phi_{\pm}}=-\sqrt{\frac{1}{2}\pm\frac{1}{\sqrt{6}}}\ket{0}\pm e^{3 i \pi/4}\sqrt{\frac{1}{2}\mp \frac{1}{\sqrt{6}}}\ket{1}$, which ensures $\ket{\alpha=0(\theta=0)}=\ket{0}$.

An $(m=2,k=3)$-EL in a qubit is generated by
\begin{equation}
    \ket{0_{\bm n}}=\tfrac{1}{2\sqrt{2}}\times\left\{\begin{array}{ll}
    \ket{+}+\ket{v},&{\bm n=(0,0)}, \\ \ket{-}-\ket{v}_\perp,&\bm n=(0,1), \\
    \ket{-}+\ket{v}_\perp,&\bm n=(1,0), \\
    \ket{+}-\ket{v},&\bm n=(1,1),\\
    0, & \text{other }\bm n\in\mathbb{Z}^2\end{array}\right.
\end{equation}
\\
where $\ket{\pm}=(\ket{0}\pm \ket{1})/\sqrt{2}$, $\ket{v}=\tfrac{1}{\sqrt{3}}\ket{-} - \tfrac{1}{\sqrt 6}{\ket{+}}$, $\ket{v}_\perp=\tfrac{1}{\sqrt{3}}\ket{+} + \tfrac{1}{\sqrt 6}{\ket{-}}$.

Finding $(m=1,k)$-ELs for higher $k$ and $d$ would prove our claim that $k$\nobreakdash-HSE is achievable with periodic, time-continuous drives. Nevertheless, we note that the number of terms in Eq.~\eqref{eq:kHSEconstraints} grows exponentially with $k$, which poses an obstacle for numerical solutions. Analytical  understanding of the structure of $(m,k)$-ELs is necessary, and a direction we leave open.

\bibliography{references}
\end{document}